\def\extraspacing{\vspace{3mm} \noindent}
\def\figcapup{\vspace{-1mm}}
\def\figcapdown{\vspace{-0mm}}
\def\vgap{\vspace{1mm}}
\def\tabpos{\hspace{4mm} \= \hspace{4mm} \= \hspace{4mm} \= \hspace{4mm} \= \hspace{4mm} \= \hspace{4mm} \= \hspace{4mm} \= \hspace{4mm} \= \hspace{4mm} \= \hspace{4mm} \= \hspace{4mm} \= \hspace{4mm} \= \hspace{4mm} \= \hspace{4mm} 
\kill}
\newcommand{\mytab}[1]{\begin{tabbing}\tabpos #1\end{tabbing}}
\newcommand{\minipg}[2]{\begin{center}\begin{minipage}{#1}#2\end{minipage}\end{center}}
\newcommand{\myitems}[1]{\begin{itemize} #1 \end{itemize}}
\newcommand{\myenums}[1]{\begin{enumerate} #1 \end{enumerate}}
\newcommand{\bm}[1]{\textrm{\boldmath${#1}$}}
\newcommand{\mb}[1]{\mathbf{#1}}
\newcommand{\myeqn}[1]{\begin{eqnarray}#1\end{eqnarray}}
\newcommand{\set}[1]{\{#1\}}
\newcommand{\explain}[1]{(\textrm{#1})}
\def\mit{\mathit}
\def\eps{\epsilon}
\def\fr{\frac}
\def\-{\mbox{-}}
\def\real{\mathbb{R}}
\def\tO{\tilde{O}}
\def\lc{\lceil}
\def\rc{\rceil}
\def\nn{\nonumber}
\def\Pr{\mathbf{Pr}}
\DeclareMathOperator*{\argmax}{arg\,max}
\DeclareMathOperator*{\polylog}{polylog}
\theoremstyle{plain}
\newtheorem{theorem}{Theorem}[section]
\newtheorem{lemma}[theorem]{Lemma}
\theoremstyle{definition}
\theoremstyle{remark}
\def\vgap{\vspace{1mm}}
\def\extraspacing{\vspace{3mm} \noindent}
\def\figcapup{\vspace{-3mm}}
\def\figcapdown{\vspace{-3mm}}
\def\C{\mathcal{C}}
\def\E{\mathcal{E}}
\def\G{\mathcal{G}}
\def\H{\mathcal{H}}
\def\Q{\mathcal{Q}}
\def\S{\mathcal{S}}
\def\T{\mathcal{T}}
\def\V{\mathcal{V}}
\def\X{\mathcal{X}}
\def\Y{\mathcal{Y}}
\def\Z{\mathcal{Z}}
\def\back{\mathit{back}}
\def\agm{\mathit{AGM}}
\def\att{\mathbf{att}}
\def\can{\mathit{constraint}}
\def\dc{\textsf{DC}}
\def\deg{\mathit{deg}}
\def\dir{\mathit{dir}}
\def\dom{\mathbf{dom}}
\def\In{\mathrm{IN}}
\def\join{\mathit{join}}
\def\mb{\mathit{modular}}
\def\modu{\textsf{M}}
\def\mycycle{\pentagon}
\def\mystar{\star}
\def\Out{\mathrm{OUT}}
\def\occ{\mathit{occ}}
\def\pass{\mathrm{pass}}
\def\polym{\mathit{polymat}}
\def\reld{\mathit{reldeg}}
\def\schema{\mathit{schema}}
\def\sub{\mathit{sub}}
\def\undir{\mathit{undir}}
\newcommand*{\ldb}{\{\mskip-5mu\{}
\newcommand*{\rdb}{\}\mskip-5mu\}}
\title{Join Sampling under Acyclic Degree Constraints and \\ (Cyclic) Subgraph Sampling}
\author{
    Ru Wang \hspace{20mm} Yufei Tao\\[3mm]
    Department of Computer Science and Engineering \\
    Chinese University of Hong Kong \\
    \{{\em rwang21}, {\em taoyf}\}{\em @cse.cuhk.edu.hk}
}
\begin{document}

\maketitle

\begin{abstract}
    Given a (natural) join with an acyclic set of degree constraints (the join itself does not need to be acyclic), we show how to draw a uniformly random sample from the join result in $O(\polym/ \max \set{1, \Out})$ expected time (assuming data complexity) after a preprocessing phase of $O(\In)$ expected time, where $\In$, $\Out$, and $\polym$ are the join's input size, output size, and polymatroid bound, respectively. This compares favorably with the state of the art (Deng et al.\ and Kim et al., both in PODS'23), which states that, in the absence of degree constraints, a uniformly random sample can be drawn in $\tO(\agm / \max \set{1, \Out})$ expected time after a preprocessing phase of $\tO(\In)$ expected time, where $\agm$ is the join's AGM bound and $\tO(.)$ hides a $\polylog (\In)$ factor. Our algorithm applies to every join supported by the solutions of Deng et al.\ and Kim et al. Furthermore, since the polymatroid bound is at most the AGM bound, our performance guarantees are never worse, but can be considerably better, than those of Deng et al.\ and Kim et al.

    \vgap

    We then utilize our techniques to tackle {\em directed subgraph sampling}, a problem that has extensive database applications and bears close relevance to joins. Let $G = (V, E)$ be a directed data graph where each vertex has an out-degree at most $\lambda$, and let $P$ be a directed pattern graph with a constant number of vertices. The objective is to uniformly sample an occurrence of $P$ in $G$. The problem can be modeled as join sampling with input size $\In = \Theta(|E|)$ but, whenever $P$ contains cycles, the converted join has {\em cyclic} degree constraints. We show that it is always possible to throw away certain degree constraints such that (i) the remaining constraints are acyclic and (ii) the new join has asymptotically the same polymatroid bound $\polym$ as the old one. Combining this finding with our new join sampling solution yields an algorithm to sample from the original (cyclic) join (thereby yielding a uniformly random occurrence of $P$) in $O(\polym/ \max \set{1, \Out})$ expected time after $O(|E|)$ expected-time preprocessing, where $\Out$ is the number of occurrences. We also prove similar results for {\em undirected subgraph sampling} and demonstrate how our techniques can be significantly simplified in that scenario. Previously, the state of the art for (undirected/directed) subgraph sampling uses $O(|E|^{\rho^*} / \max \set{1, \Out})$ time to draw a sample (after $O(|E|)$ expected-time preprocessing) where $\rho^*$ is the fractional edge cover number of $P$. Our results are more favorable because $\polym$ never exceeds but can be considerably lower than $|E|^{\rho^*}$.
\end{abstract}

\newpage

\section{Introduction} \label{sec:intro}

In relational database systems, (natural) joins are acknowledged as notably computation-intensive, with its cost surging drastically in response to expanding data volumes. In the current big data era, the imperative to circumvent excessive computation increasingly overshadows the requirement for complete join results. A myriad of applications, including machine learning algorithms, online analytical processing, and recommendation systems, can operate effectively with random samples. This situation has sparked research initiatives focused on devising techniques capable of producing samples from a join result significantly faster than executing the join in its entirety. In the realm of graph theory, the significance of join operations is mirrored in their intrinsic connections to {\em subgraph listing}, a classical problem that seeks to pinpoint all the occurrences of a pattern $P$ (for instance, a directed 3-vertex cycle) within a data graph $G$ (such as a social network where a directed edge symbolizes a ``follow'' relationship). Analogous to joins, subgraph listing demands a vast amount of computation time, which escalates rapidly with the sizes of $G$ and $P$. Fortunately, many social network analyses do not require the full set of occurrences of $P$, but can function well with only samples from those occurrences. This has triggered the development of methods that can extract samples considerably faster than finding all the occurrences.

\vgap

This paper will revisit {\em join sampling} and {\em subgraph sampling} under a unified ``degree-constrained framework''. Next, we will first describe the framework formally in Section~\ref{sec:intro:prob}, review the previous results in Section~\ref{sec:intro:related}, and then overview our results in  Section~\ref{sec:intro:ours}.

\subsection{Problem Definitions} \label{sec:intro:prob}

\noindent {\bf Join Sampling.} Let $\att$ be a finite set, with each element called an {\em attribute}, and $\dom$ be a countably infinite set, with each element called a {\em value}. For a non-empty set $\X \subseteq \att$ of attributes, a {\em tuple} over $X$ is a function $\bm{u}: \X \rightarrow \dom$. For any non-empty subset $\Y \subseteq \X$, we define $\bm{u}[\Y]$ --- the {\em projection} of $\bm{u}$ on $Y$ --- as the tuple $\bm{v}$ over $\Y$ satisfying $\bm{v}(Y) = \bm{u}(Y)$ for every attribute $Y \in \Y$.

\vgap

A {\em relation} $R$ is a set of tuples over the same set $\Z$ of attributes; we refer to $\Z$ as the {\em schema} of $R$ and represent it as $\schema(R)$. The {\em arity} of $R$ is the size of $\schema(R)$. For any subsets $\X$ and $\Y$ of $\schema(R)$ satisfying $\X \subset \Y$ (note: $\X$ is a {\em proper} subset of $\Y$), define:
\myeqn{
    \deg_{\Y \mid \X}(R) &=& \max_{\textrm{tuple $\bm{u}$ over $\X$}} \Big| \Big\{\bm{v}[\Y] \mid \bm{v} \in R, \bm{v}[\X] = \bm{u} \Big\} \Big|.
    \label{eqn:degree}
}
For an intuitive explanation, imagine grouping the tuples of $R$ by $\X$ and counting, for each group, how many {\em distinct} $\Y$-projections are formed by the tuples therein. Then, the value $\deg_{\Y \mid \X}(R)$ corresponds to the maximum count of all groups. It is worth pointing out that, when $\X = \emptyset$, then $\deg_{\Y \mid \X}(R)$ is simply $|\Pi_\Y (R)|$ where $\Pi$ is the standard ``projection'' operator in relational algebra. If in addition $\Y = \schema(R)$, then $\deg_{\Y \mid \X}(R)$ equals $|R|$.

\vgap

We define a {\em join} as a set $\Q$ of relations (some of which may have the same schema). Let $\schema(\Q)$ be the union of the attributes of the relations in $\Q$, i.e., $\schema(\Q) = \bigcup_{R \in Q} \schema(R)$. Focusing on ``data complexity'', we consider only joins where both $\Q$ and $\schema(\Q)$ have constant sizes. The result of $\Q$ is a relation over $\schema(\Q)$ formalized as:
\myeqn{
    \join(\Q) = \set{\textrm{tuple $\bm{u}$ over $\schema(\Q)$} \mid \forall R \in \Q: \bm{u}[\schema(R)] \in R}. \nn
}
Define $\In = \sum_{R \in \Q} |R|$ and $\Out = |\join(\Q)|$. We will refer to $\In$ and $\Out$ as the {\em input size} and {\em output size} of $Q$, respectively.

\vgap

A {\em join sampling} operation returns a tuple drawn uniformly at random from $\join(\Q)$ or declares $\join(\Q) = \emptyset$. All such operations must be mutually independent. The objective of the {\em join sampling problem} is to preprocess the input relations of $\Q$ into an appropriate data structure that can be used to perform join-sampling operations repeatedly.

\vgap

We study the problem in the scenario where $\Q$ conforms to a set $\dc$ of degree constraints. Specifically, each {\em degree constraint} has the form $(\X, \Y, N_{\Y|\X})$ where $\X$ and $\Y$ are subsets of $\schema(\Q)$ satisfying $\X \subset \Y$ and $N_{\Y|\X} \ge 1$ is an integer. A relation $R \in \Q$ is said to {\em guard} the constraint $(\X, \Y, N_{\Y|\X})$ if
\myeqn{
    \text{$\Y \subseteq \schema(R)$, and $\deg_{\Y \mid \X}(R) \le N_{\Y|\X}$}. \nn
}
The join $\Q$ is {\em consistent} with $\dc$ --- written as $\Q \models \dc$ --- if every degree constraint in $\dc$ is guarded by at least one relation in $\Q$. It is safe to assume that $\dc$ does not have two constraints $(\X, \Y, N_{\Y\mid\X})$ and $(\X', \Y', N_{\Y'\mid\X'})$ with $\X = \X'$ and $\Y = \Y'$; otherwise, assuming $N_{\Y\mid\X} \le N_{\Y'\mid\X'}$, the constraint $(\X', \Y', N_{\Y'\mid\X'})$ is redundant and can be removed from $\dc$.

\vgap

In this work, we concentrate on ``acyclic'' degree dependency. To formalize this notion, let us define a {\em constraint dependency graph} $G_\dc$ as follows. This is a directed graph whose vertex set is $\schema(\Q)$ (i.e., each vertex of $G_\dc$ is an attribute in $\schema(\Q)$). For each degree constraint $(\X, \Y, N_{\Y|\X})$ such that $\X \ne \emptyset$, we add a (directed) edge $(X, Y)$ to $G_\dc$ for every pair $(X, Y) \in \X \times (\Y - \X)$. We say that the set $\dc$ is {\em acyclic} if $G_\dc$ is an acyclic graph; otherwise, $\dc$ is {\em cyclic}.

\vgap

It is important to note that each relation $R \in \Q$ implicitly defines a special degree constraint $(\X, \Y, N_{\Y|\X})$ where $\X = \emptyset$, $\Y = \schema(R)$, and $N_{\Y|\X} = |R|$. Such a constraint --- known as a {\em cardinality constraint} --- is always assumed to be present in $\dc$. As all cardinality constraints have $\X = \emptyset$, they do not affect the construction of $G_\dc$. Consequently, if $\dc$ only contains cardinality constraints, then $G_\dc$ is empty and hence trivially acyclic. Moreover, readers should avoid the misconception that ``an acyclic $G_\dc$ implies $\Q$ being an acyclic join''; these two acyclicity notions are unrelated. While the definition of an acyclic join is not needed for our discussion, readers unfamiliar with this term may refer to \cite[Chapter 6.4]{ahv95}.

\extraspacing {\bf Directed Graph Sampling.} We are given a {\em data graph} $G = (V, E)$ and a {\em pattern graph} $P = (V_P, E_P)$, both being simple directed graphs. The pattern graph is weakly-connected\footnote{Namely, if we ignore the edge directions, then $P$ becomes a connected undirected graph.} and has a constant number of vertices. A simple directed graph $G_\sub = (V_\sub, E_\sub)$ is a {\em subgraph} of $G$ if $V_\sub \subseteq V$ and $E_\sub \subseteq E$. The subgraph $G_\sub$ is an {\em occurrence} of $P$ if they are isomorphic, namely, there is a bijection $f: V_\sub \rightarrow V_P$ such that, for any distinct vertices $u_1, u_2 \in V_\sub$, there is an edge $(u_1, u_2) \in E_\sub$ if and only if $(f(u_1), f(u_2))$ is an edge in $E_P$. We will refer to $f$ as a {\em isomorphism bijection} between $P$ and $G_\sub$.

\vgap

A {\em subgraph sampling} operation returns an occurrence of $P$ in $G$ uniformly at random or declares the absence of any occurrence. All such operations need to be mutually independent. The objective of the {\em subgraph sampling problem} is to preprocess $G$ into a data structure that can support every subgraph-sampling operation efficiently. We will study the problem under a degree constraint: every vertex in $G$ has an out-degree at most $\lambda$.

\extraspacing {\bf Undirected Graph Sampling.} The setup of this problem is the same as the previous problem except that (i) both the data graph $G = (V, E)$ and the pattern graph $P = (V_P, E_P)$ are simple undirected graphs, with $P$ being connected; (ii) a subgraph $G_\sub$ of $G$ is an {\em occurrence} of $P$ if $G_\sub$ and $P$ are isomorphic in the undirected sense: there is a {\em isomorphism bijection} $f: V_\sub \rightarrow V_P$ between $P$ and $G_\sub$ such that, for any distinct $u_1, u_2 \in V_\sub$, an edge $\{u_1, u_2\}$ exists in $E_\sub$ if and only if $\{f(u_1), f(u_2)\} \in E_P$;\footnote{We represent a directed edge as an ordered pair and an undirected edge as a set.} and (iii) the degree constraint becomes: every vertex in $G$ has a degree at most $\lambda$.

\extraspacing {\bf Math Conventions.} For an integer $x \ge 1$, the notation $[x]$ denotes the set $\{1, 2, ..., x\}$; as a special case, $[0]$ represents the empty set. Every logarithm $\log(\cdot)$ has base $2$, and function $\exp_2(x)$ is defined to be $2^x$. We use double curly braces to represent multi-sets, e.g., $\ldb 1, 1, 1, 2, 2, 3\rdb$ is a multi-set with 6 elements.

\subsection{Related Work} \label{sec:intro:related}

\noindent {\bf Join Computation.} Any algorithm correctly answering a join query $\Q$ must incur $\Omega(\Out)$ time just to output the $\Out$ tuples in $\join(\Q)$. Hence, finding the greatest possible value of $\Out$ is an imperative step towards unraveling the time complexity of join evaluation. A classical result in this regard is the {\em AGM bound} \cite{agm13}. To describe this bound, let us define the {\em schema graph} of $\Q$ as a multi-hypergraph $\G = (\V, \E)$ where
\myeqn{
    \V = \schema(\Q) \text{, and } 
    \E = \ldb \schema(R) \mid R \in \Q \rdb. \label{eqn:schema-graph-V-E}
}
Note that $\E$ is a multi-set because the relations in $\Q$ may have identical schemas. A {\em fractional edge cover} of $\G$ is a function $w: \E \rightarrow [0, 1]$ such that, for any $X \in \V$, it holds that $\sum_{F \in \E: X \in F} w(F) \ge 1$ (namely, the total weight assigned to the hyperedges covering $X$ is at least 1). Atserias, Grohe, and Marx \cite{agm13} showed that, given any fractional edge cover, it always holds that $\Out \le \prod_{F \in \E} |R_F|^{w(F)}$, where $R_F$ is the relation in $\Q$ whose schema corresponds to the hyperedge $F$. The AGM bound is defined as $\agm(\Q) = \min_{w} \prod_{F \in \E} |R_F|^{w(F)}$.

\vgap

The AGM bound is tight: given any hypergraph $\G = (\V, \E)$ and any set of positive integers $\set{N_F \mid F \in \E}$, there is always a join $\Q$ such that $\Q$ has $\G$ as the schema graph, $|R_F| = |N_F|$ for each $F \in \E$, and the output size $\Out$ is $\Theta(\agm(\Q))$. This has motivated the development of algorithms \cite{abs21,dlt23,knrr16,kns17,jr18,nprr12,nnrr14,nprr18,nrr13,nrr20,v14} that can compute $\join(\Q)$ in $\tO(\agm(\Q))$ time --- where $\tO(.)$ hides a factor polylogarithmic to the input size $\In$ of $\Q$ --- and therefore are worst-case optimal up to an $\tO(1)$ factor.

\vgap

However, the tightness of the AGM bound relies on the assumption that all the degree constraints on $\Q$ are purely cardinality constraints. In reality, general degree constraints are prevalent, and their inclusion could dramatically decrease the maximum output size $\Out$. This observation has sparked significant interest \cite{dsbc23,gt17c,jr18,knos23,kns16,kns17,n18,s23} in establishing refined upper bounds on $\Out$ tailored for more complex degree constraints. Most notably, Khamis et al.\ \cite{kns17} proposed the {\em entropic bound}, which  is applicable to any set $\dc$ of degree constraints and is tight in a strong sense (see Theorem 5.5 of \cite{s23}). Unfortunately, the entropic bound is difficult to compute because it requires solving a linear program (LP) involving infinitely many constraints (it remains an open problem whether the computation is decidable). Not coincidentally, no join algorithm is known to have a running time matching the entropic bound.

\vgap

To circumvent the above issue, Khamis et al.\ \cite{kns17} introduced the {\em polymatroid bound} as an alternative, which we represent as $\polym(\dc)$ because this bound is fully decided by $\dc$ (i.e., any join $\Q \models \dc$ must satisfy $\Out \le \polym(\dc)$). Section~\ref{sec:pre} will discuss $\polym(\dc)$ in detail; for now, it suffices to understand that (i) the polymatroid bound, although possibly looser than the entropic bound, never exceeds the AGM bound, and (ii) $\polym(\dc)$ can be computed in $O(1)$ time under data complexity. Khamis et al.\ \cite{kns17} proposed an algorithm named \textsf{PANDA} that can evaluate an arbitrary join $\Q \models \dc$ in time $\tO(\polym(\dc))$.

\vgap

Interestingly, when $\dc$ is acyclic, the entropic bound is equivalent to the polymatroid bound \cite{n18}. In this scenario, Ngo \cite{n18} presented a simple algorithm to compute any join $\Q \models \dc$ in $O(\polym(\dc))$ time, after a preprocessing of $O(\In)$ expected time.

\extraspacing {\bf Join Sampling.} For an acyclic join (not to be confused with a join having an acyclic set of degree constraints), it is possible to sample from the join result in constant time, after a preprocessing of $O(\In)$ expected time \cite{zcl+18}. The problem becomes more complex when dealing with an arbitrary (cyclic) join $\Q$, with the latest advancements presented in two PODS'23 papers \cite{dlt23,khfh23}. Specifically, Kim et al.\ \cite{khfh23} described how to sample in $\tO(\agm(\Q) / \max\{1, \Out\})$ expected time, after a preprocessing of $\tO(\In)$ time. Deng et al.\ \cite{dlt23} achieved the same guarantees using different approaches, and offered a rationale explaining why the expected sample time $O(\agm(\Q) / \Out)$ can no longer be significantly improved, even when $0 < \Out \ll \agm(\Q)$, subject to commonly accepted conjectures. We refer readers to \cite{agpr99,cmn99,cy20,dlt23,khfh23,zcl+18} and the references therein for other results (now superseded) on join sampling.


\extraspacing {\bf Subgraph Listing.} Let us start by clarifying the {\em fractional edge cover number} $\rho^*(P)$ of a simple undirected pattern graph $P = (V_P, E_P)$. Given a fractional edge cover of $P$ (i.e., function $w: E_P \rightarrow [0, 1]$ such that, for any vertex $X \in V_P$, we have $\sum_{F\in E_P: X \in F} w(F) \ge 1$), define $\sum_{F \in E_P} w(F)$ as the {\em total weight} of $w$. The value of $\rho^*(P)$ is the smallest total weight of all fractional edge covers of $P$. Given a directed pattern graph $P$, we define its fractional edge cover number $\rho^*(P)$ as the value $\rho^*(P')$ of the corresponding undirected graph $P'$ that is obtained from $P$ by ignoring all the edge directions.

\vgap

When $P$ has a constant size, it is well-known \cite{a81, agm13} that any data graph $G = (V, E)$ can encompass $O(|E|^{\rho^*(P)})$ occurrences of $P$. This holds true both when $P$ and $G$ are directed and when they are undirected. This upper bound is tight: in both the directed and undirected scenarios, for any integer $m$, there is a data graph $G = (V, E)$ with $|E| = m$ edges that has $\Omega(m^{\rho^*(P)})$ occurrences of $P$. Thus, a subgraph listing algorithm is considered worst-case optimal if it finishes in $\tO(|E|^{\rho^*(P)})$ time.

\vgap

It is well-known that directed/undirected subgraph listing can be converted to a join $\Q$ on binary relations (namely, relations of arity 2). The join $\Q$ has an input size of $\In = \Theta(|E|)$, and its AGM bound is $\agm(\Q) = \Theta(|E|^{\rho^*(P)})$. All occurrences of $P$ in $G$ can be derived from $\join(\Q)$ for free. Thus, any $\tO(\agm(\Q))$-time join algorithm is essentially worst-case optimal for subgraph listing.

\vgap

Assuming $P$ and $G$ to be directed, Jayaraman et al.\ \cite{jrr21} presented interesting enhancement over the above transformation in the scenario where each vertex of $G$ has an out-degree at most $\lambda$. The key lies in examining the polymatroid bound of the join $\Q$ derived from subgraph listing. As will be explained in Section~\ref{sec:directed-subgraph}, this join $\Q$ has a set $\dc$ of degree constraints whose constraint dependency graph $G_\dc$ coincides with $P$. Jayaraman et al.\ developed an algorithm that lists all occurrences of $\Q$ in $G$ in $O(\polym(\dc))$ time (after a preprocessing of $O(\In)$ expected time) and confirmed that this is worst-case optimal. Their findings are closely related to our work, and we will delve into them further when their specifics become crucial to our discussion.

\vgap

There is a substantial body of literature on bounding the cost of subgraph listing using parameters distinct from those already mentioned. These studies typically concentrate on specific patterns (such as paths, cycles, and cliques) or particular graphs (for instance, those that are sparse under a suitable metric). We refer interested readers to \cite{akls22,bfnn19,bpwz14,cn85,e99,hkss13,jx23,m17,np85,s81} and the references therein.

\extraspacing {\bf Subgraph Sampling.} Fichtenberger, Gao, and Peng \cite{fgp20} described how to sample an occurrence of the pattern $P$ in the data graph $G$ in $O(|E|^{\rho^*(P)} / \max\{1, \Out\})$ expected time, where $\Out$ is the number of occurrences of $P$ in $G$, after a preprocessing of $O(|E|)$ expected time. In \cite{dlt23}, Deng et al.\ clarified how to deploy an arbitrary join sampling algorithm to perform subgraph sampling; their approach ensures the same guarantees as in \cite{fgp20}, baring an $\tO(1)$ factor. The methods of \cite{fgp20, dlt23} are applicable in both undirected and directed scenarios.

\subsection{Our Results} \label{sec:intro:ours}

For any join $\Q$ with an acyclic set $\dc$ of degree constraints, we will demonstrate in Section~\ref{sec:join-sample} how to extract a uniformly random sample from $\join(\Q)$ in $O(\polym(\dc) / \max\set{1, \Out})$ expected time, following an initial preprocessing of $O(\In)$ expected time. This performance is favorable when compared to the recent results of \cite{dlt23, khfh23} (reviewed in Section~\ref{sec:intro:related}), which examined settings where $\dc$ consists only of cardinality constraints and is therefore trivially acyclic. As $\polym(\dc)$ is at most but can be substantially lower than $\agm(\Q)$, our guarantees are never worse, but can be considerably better, than those in \cite{dlt23, khfh23}.

\vgap

What if $\dc$ is cyclic? An idea, proposed in \cite{n18}, is to discard enough constraints to make the remaining set $\dc'$ of constraints acyclic (while ensuring $\Q \models \dc'$). Our algorithm can then be applied to draw a sample in $O(\polym(\dc') / \max\set{1, \Out})$ time. However, this can be unsatisfactory because $\polym(\dc')$ can potentially be much larger than $\polym(\dc)$.

\vgap

Our next contribution is to prove that, interestingly, the issue does not affect subgraph listing/sampling. Consider first directed subgraph listing, defined by a pattern graph $P$ and a data graph $G$ where every vertex has an out-degree at most $\lambda$. This problem can be converted to a join $\Q$ on binary relations, which is associated with a set $\dc$ of degree constraints such that the constraint dependency graph $G_\dc$ is exactly $P$. Consequently, whenever $P$ contains a cycle, so does $G_\dc$, making $\dc$ cyclic. Nevertheless, we will demonstrate in Section~\ref{sec:directed-subgraph} the existence of an acyclic set $\dc' \subset \dc$ ensuring $Q \models \dc'$ and $\polym(\dc) = \Theta(\polym(\dc'))$. This ``magical'' $\dc'$ has an immediate implication: Ngo's join algorithm in \cite{n18}, when applied to $Q$ and $\dc'$ directly, already solves directed subgraph listing optimally in $O(\polym(\dc'))$ $=$ $O(\polym(\dc))$ time. This dramatically simplifies --- in terms of both procedure and analysis --- an algorithm of Jayaraman et al.\ \cite{jrr21} (for directed subgraph listing, reviewed in Section~\ref{sec:intro:related}) that has the same guarantees.

\vgap

The same elegance extends to directed subgraph sampling: by applying our new join sampling algorithm to $\Q$ and the ``magical'' $\dc'$, we can sample an occurrence of $P$ in $G$ using $O(\polym(\dc) / \max\{1, \Out\})$ expected time, after a preprocessing of $O(|E|)$ expected time. As $\polym(\dc)$ never exceeds but can be much lower than $\agm(\Q) = \Theta(|E|^{\rho^*(P)})$, our result compares favorably with the state of the art \cite{dlt23, khfh23, fgp20} reviewed in Section~\ref{sec:intro:related}.

\vgap

Undirected subgraph sampling (where both $P$ and $G$ are undirected and each vertex in $G$ has a degree at most $\lambda$) is a special version of its directed counterpart and can be settled by a slightly modified version of our directed subgraph sampling algorithm. However, it is possible to do better by harnessing the undirected nature. In Section~\ref{sec:undirected-subgraph}, we will first solve the polymatroid bound into a {\em closed-form} expression, which somewhat unexpectedly exhibits a crucial relationship to a well-known graph decomposition method. This relationship motivates a surprisingly simple algorithm for undirected subgraph sampling that offers guarantees analogous to those in the directed scenario.

\vgap

By virtue of the power of sampling, our findings have further implications on other fundamental problems including output-size estimation, output permutation, and small-delay enumeration. We will elaborate on the details in Section~\ref{sec:conclusion}.

\section{Preliminaries} \label{sec:pre}

\noindent {\bf Set Functions, Polymatroid Bounds, and Modular Bounds.} Suppose that $\S$ is a finite set. We refer to a function $h: 2^\S \rightarrow \real_{\ge 0}$ as a {\em set function over $\S$}, where $\real_{\ge 0}$ is the set of non-negative real values. Such a function $h$ is said to be
\myitems{
    \item {\em zero-grounded} if $h(\emptyset) = 0$;
    \item {\em monotone} if $h(\X) \le h(\Y)$ for all $\X, \Y$ satisfying $\X \subseteq \Y \subseteq \S$;
    \item {\em modular} if $h(\X) = \sum_{A \in \X} h(\{A\})$ holds for any $\X \subseteq \S$; 
    \item {\em submodular} if $h(\X \cup \Y) + h(\X \cap \Y) \le h(\X) + h(\Y)$ holds for any $\X, \Y \subseteq \S$.
}
Define:
\myeqn{
    \modu_\S &=& \text{the set of modular set functions over $\S$} \nn \\ 
    \Gamma_\S &=& \text{the set of set functions over $\S$ that are zero-grounded, monotone, submodular} \nn 
}
Note that every modular function must be zero-grounded and monotone. Clearly, $\modu_\S \subseteq \Gamma_\S$.

\vgap

Consider $\C$ to be a set of triples, each having the form $(\X, \Y, N_{\Y\mid\X})$ where $X \subset \Y \subseteq \S$ and $N_{\Y\mid\X}\ge 1$ is an integer. We will refer to $\C$ as a {\em rule collection} over $\S$ and to each triple therein as a {\em rule}. Intuitively, the presence of a rule collection is to instruct us to focus only on certain restricted set functions. Formally, these are the set functions in:
\myeqn{
    \H_\C = \big\{ \text{set function $h$ over $\S$} \mid h(\Y) - h(\X) \le \log N_{\Y \mid \X}, \hspace{2mm} \forall (\X, \Y, N_{\Y \mid \X}) \in \C
    \big\}. \label{eqn:H-C}
}
The {\em polymatroid bound} of $\C$ can now be defined as
\myeqn{
    \polym(\C) &=& \exp_2 \Big(\max_{h \in \Gamma_\S \cap \H_\C} h(\S) \Big). \label{eqn:polym-C}
}
Recall that $\exp_2(x) = 2^x$. Similarly, the {\em modular bound} of $\C$ is defined as
\myeqn{
    \mb(\C) &=& \exp_2 \Big(\max_{h \in \modu_\S \cap \H_\C} h(\S) \Big). \label{eqn:modular-C}
}

\extraspacing {\bf Join Output Size Bounds.} Let us fix a join $\Q$ whose schema graph is $\G = (\V, \E)$. Suppose that $\Q$ is consistent with a set $\dc$ of degree constraints, i.e., $\Q \models \dc$. As explained in Section~\ref{sec:intro:prob}, we follow the convention that each relation of $\Q$ implicitly inserts a cardinality constraint (i.e., a special degree constraint) to $\dc$. Note that the set $\dc$ is merely a rule collection over $\V$. The following lemma was established by Khamis et al.\ \cite{kns17}:

\begin{lemma} [\cite{kns17}] \label{lmm:pre-out-atmost-polym}
    The output size $\Out$ of $\Q$ is at most {\em $\polym(\dc)$}, i.e., the polymatroid bound of {\em $\dc$} (as defined in \eqref{eqn:polym-C}).
\end{lemma}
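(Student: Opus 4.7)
The plan is to use the standard entropy-based argument that has become canonical for information-theoretic output bounds of joins. Assume first that $\Out \ge 1$ (the case $\Out = 0$ is immediate because the constant function $h \equiv 0$ lies in $\Gamma_\V \cap \H_\dc$, giving $\polym(\dc) \ge 2^0 = 1$). Draw a tuple $\bm{t}$ uniformly at random from $\join(\Q)$ and, for every subset $\X \subseteq \V = \schema(\Q)$, define
\[
    h(\X) \;=\; H(\bm{t}[\X]),
\]
where $H(\cdot)$ denotes Shannon entropy. Since $\bm{t}$ is uniform over $\join(\Q)$, we have $h(\V) = H(\bm{t}) = \log \Out$. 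Thus the lemma reduces to showing that this particular $h$ lies in $\Gamma_\V \cap \H_\dc$, for then $\log \Out = h(\V) \le \max_{h' \in \Gamma_\V \cap \H_\dc} h'(\V)$, which upon exponentiating yields $\Out \le \polym(\dc)$.

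Next I would verify that $h \in \Gamma_\V$ by invoking the three standard information-theoretic facts: entropy of a constant is zero (zero-grounded), entropy is monotone under enlarging the random variable one conditions \emph{nothing} on but projects onto more attributes (monotone), and the entropy function of the joint distribution of a tuple-valued random variable is submodular (this last point is the well-known submodularity of Shannon entropy, $H(A,B) + H(A \cap B) \le H(A) + H(B)$ applied to attribute sets). These are textbook and require no new calculation.

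The only substantive step is to show $h \in \H_\dc$, i.e., $h(\Y) - h(\X) \le \log N_{\Y \mid \X}$ for every $(\X, \Y, N_{\Y\mid\X}) \in \dc$. By the chain rule, $h(\Y) - h(\X) = H(\bm{t}[\Y] \mid \bm{t}[\X])$ because $\X \subset \Y$. Let $R \in \Q$ be a relation guarding the constraint, so that $\Y \subseteq \schema(R)$ and $\deg_{\Y \mid \X}(R) \le N_{\Y \mid \X}$. Because $\bm{t}$ is a join tuple, $\bm{t}[\schema(R)] \in R$; hence, once $\bm{t}[\X]$ is fixed, $\bm{t}[\Y]$ ranges over the projections onto $\Y$ of tuples of $R$ agreeing with $\bm{t}[\X]$ on $\X$, of which there are at most $\deg_{\Y\mid\X}(R) \le N_{\Y\mid\X}$ many. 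The elementary bound $H(A \mid B) \le \log|\mathrm{support}(A \mid B{=}b)|$ (maximized over $b$) then gives $H(\bm{t}[\Y] \mid \bm{t}[\X]) \le \log N_{\Y\mid\X}$, as needed. Cardinality constraints $(\emptyset, \schema(R), |R|)$ are handled by the same argument with $\X = \emptyset$.

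The main obstacle, such as it is, lies in the degree-constraint step: one must carefully relate the combinatorial quantity $\deg_{\Y \mid \X}(R)$ to the information-theoretic quantity $H(\bm{t}[\Y] \mid \bm{t}[\X])$. The key observation is that the conditional support of $\bm{t}[\Y]$ given a value $\bm{u}$ of $\bm{t}[\X]$ is a subset of $\{\bm{v}[\Y] : \bm{v} \in R,\ \bm{v}[\X] = \bm{u}\}$, which is exactly the set whose size is bounded by $\deg_{\Y\mid\X}(R)$ in \eqref{eqn:degree}; everything else is standard. No optimization over the polymatroid LP is required, because we exhibit a single feasible $h$ whose value at $\V$ already equals $\log \Out$.
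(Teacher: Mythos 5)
The paper does not prove Lemma~\ref{lmm:pre-out-atmost-polym} at all---it merely cites it to Khamis et al.---so there is no in-paper argument to compare against. Your proof is the canonical Shannon-entropy argument that underlies the cited bound: take $\bm{t}$ uniform on $\join(\Q)$ (the case $\Out = 0$ being trivial since $h \equiv 0 \in \Gamma_\V \cap \H_\dc$ forces $\polym(\dc) \ge 1$), set $h(\X) = H(\bm{t}[\X])$ so that $h(\V) = \log \Out$, observe that $h$ is zero-grounded, monotone, and submodular by the standard Shannon inequalities, and check membership in $\H_\dc$ via the chain rule $h(\Y) - h(\X) = H(\bm{t}[\Y] \mid \bm{t}[\X])$ together with the conditional-support bound coming from a guard relation $R$, which gives at most $\deg_{\Y\mid\X}(R) \le N_{\Y\mid\X}$ choices for $\bm{t}[\Y]$ once $\bm{t}[\X]$ is fixed. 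This is correct and complete, and exhibiting a single feasible $h$ with $h(\V) = \log \Out$ is exactly the right way to bound $\Out$ against the polymatroid maximum without touching the LP.
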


How about $\mb(\dc)$, i.e., the modular bound of $\V$? As $\modu_\V \subseteq \Gamma_\V$, we have $\mb(\dc) \le \polym(\dc)$ and the inequality can be strict in general. However, an exception arises when $\dc$ is acyclic, as proved in \cite{n18}:

\begin{lemma} [{\cite{n18}}] \label{lmm:pre-modular-equals-polym}
    When {\em $\dc$} is acyclic, it always holds that {\em $\mb(\dc) = \polym(\dc)$}, namely, {\em $\max_{h \in \Gamma_\V \cap \H_\dc} h(\V) = \max_{h \in \modu_\V \cap \H_\dc} h(\V)$}.
\end{lemma}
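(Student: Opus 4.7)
The plan is to prove both directions of the equality. The direction $\mb(\dc) \le \polym(\dc)$ is immediate from $\modu_\V \subseteq \Gamma_\V$, since a maximum over a subset of the feasible region cannot exceed the maximum over the whole region. The substantive direction is $\polym(\dc) \le \mb(\dc)$, which I will establish by constructing, from any $h^* \in \Gamma_\V \cap \H_\dc$ achieving $h^*(\V) = \log \polym(\dc)$, a modular function $h' \in \modu_\V \cap \H_\dc$ satisfying $h'(\V) \ge h^*(\V)$.

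The construction exploits the acyclicity of $G_\dc$. Fix a topological order $A_1, \ldots, A_n$ of $\V$ with respect to $G_\dc$, let $\S_i = \{A_1, \ldots, A_i\}$ be the $i$-th prefix (with $\S_0 = \emptyset$), and define $h'$ by setting $h'(\{A_i\}) = h^*(\S_i) - h^*(\S_{i-1})$ on singletons and extending modularly. Each singleton value is non-negative because $h^*$ is monotone, so $h' \in \modu_\V$, and telescoping immediately gives $h'(\V) = h^*(\V) - h^*(\emptyset) = h^*(\V)$, preserving the objective.

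The main obstacle is verifying $h' \in \H_\dc$, i.e., that $h'(\Y) - h'(\X) \le \log N_{\Y \mid \X}$ holds for every constraint $(\X, \Y, N_{\Y \mid \X}) \in \dc$. Here is where acyclicity plays its essential role: whenever $\X \ne \emptyset$, the graph $G_\dc$ contains an edge from each $X \in \X$ to each $Z \in \Y \setminus \X$ by construction, so any topological order must place every attribute of $\X$ strictly before every attribute of $\Y \setminus \X$ (cardinality constraints, with $\X = \emptyset$, reduce to the boundary case $j_0 = 0$ below). Let $j_0$ be the largest index with $A_{j_0} \in \X$, so that $\X \subseteq \S_{j_0}$, and define a ``trace'' chain $U_j = \X \cup (\S_j \cap (\Y \setminus \X))$ for $j \ge j_0$; then $U_{j_0} = \X$, $U_n = \Y$, and $U_j \subseteq \S_j$ throughout. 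Since $U_j = U_{j-1} \cup \{A_j\}$ whenever $A_j \in \Y \setminus \X$ and $U_j = U_{j-1}$ otherwise, the submodularity of $h^*$ yields $h^*(\S_j) - h^*(\S_{j-1}) \le h^*(U_j) - h^*(U_{j-1})$ for each such $j$ (adding an element to the smaller set $U_{j-1}$ gives an increment at least as large as adding it to $\S_{j-1}$). Summing over the indices $j > j_0$ with $A_j \in \Y \setminus \X$ and telescoping yields $h'(\Y) - h'(\X) \le h^*(U_n) - h^*(U_{j_0}) = h^*(\Y) - h^*(\X) \le \log N_{\Y \mid \X}$, with the last inequality being precisely $h^* \in \H_\dc$. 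The delicate point is that the acyclic placement of $\X$ before $\Y \setminus \X$ is exactly what lets us ``peel off'' the full prefix chain into the shorter trace chain via submodularity; absent acyclicity, no such ordering exists and the construction would fail.
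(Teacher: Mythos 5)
The paper cites this lemma from \cite{n18} without giving a proof of its own, so there is no in-paper argument to compare against; I will evaluate your proof on its merits. Your argument is correct and is the standard one for this result: you build the modular $h'$ from the marginal increments $h^*(\S_i) - h^*(\S_{i-1})$ along a topological order of $G_\dc$, which telescopes to $h'(\V)=h^*(\V)$, and then transfer each degree constraint from $h^*$ to $h'$ by building the interpolating chain $U_j = \X \cup (\S_j \cap (\Y\setminus\X))$ and invoking submodularity in its diminishing-returns form (with $U_{j-1}\subseteq\S_{j-1}$ and $A_j$ added to both) to get $h^*(\S_j)-h^*(\S_{j-1}) \le h^*(U_j)-h^*(U_{j-1})$, which sums and telescopes to $h^*(\Y)-h^*(\X)\le\log N_{\Y\mid\X}$. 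You also correctly isolate the one place acyclicity is essential: the topological order places every attribute of $\X$ strictly before every attribute of $\Y\setminus\X$, which is exactly what makes $U_{j_0}=\X$ and keeps $U_{j-1}\subseteq\S_{j-1}$; if some $A\in\Y\setminus\X$ preceded some $A'\in\X$ the chain would not start at $\X$ and the submodularity inequality would run in the unusable direction. One small presentational point: the sum $\sum_{j>j_0,\, A_j\in\Y\setminus\X}(h^*(U_j)-h^*(U_{j-1}))$ does not telescope literally because of the ``gap'' indices, but those contribute $0$ since $U_j=U_{j-1}$ there, so extending the sum to all $j\in(j_0,n]$ makes the telescoping exact; this is worth a clause so a reader does not stumble.
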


As a corollary, when $\dc$ is acyclic, the value of $\mb(\dc)$ always serves as an upper bound of $\Out$. In our technical development, we will need to analyze the set functions $h^* \in \Gamma_\V$ that realize the polymatriod bound, i.e., $h^*(\V) = \max_{h \in \Gamma_\V \cap \H_\dc} h(\V)$. A crucial advantage provided by Lemma~\ref{lmm:pre-modular-equals-polym} is that we can instead scrutinize those set functions $h^* \in \modu_\V$ realizing the {\em modular} bound, i.e., $h^*(\V) = \max_{h \in \modu_\V \cap \H_\dc} h(\V)$. Compared to their submodular counterparts, modular set functions exhibit more regularity because every $h \in \modu_\V$ is fully determined by its value $h(\set{A})$ on each {\em individual} attribute $A \in \V$. In particular, for any $h \in \modu_\V \cap \H_\dc$, it holds true that $h(\Y) - h(\X) = \sum_{A \in \Y - \X} h(A)$ for any $\X \subset \Y \subseteq \V$. If we associate each $A \in \V$ with a variable $\nu_A$, then $\max_{h \in \modu_\V \cap \H_\dc} h(\V)$ --- hence, also $\max_{h \in \Gamma_\V \cap \H_\dc} h(\V)$ --- is precisely the optimal value of the following LP:

\mytab{
    \> {\bf modular LP} \>\>\>\>\> max $\sum_{A \in \V} \nu_A$ subject to \\[2mm]
    \>\>\>\> $\sum\limits_{A \in \Y - \X} \nu_A \le \log N_{\Y \mid \X}$ \>\>\>\>\>\>\>\> $\forall (\X, \Y, N_{\Y\mid\X}) \in \dc$ \\
    \>\>\>\> $\nu_A \ge 0$ \>\>\>\>\>\>\>\> $\forall A \in \V$
}

\noindent We will also need to work with the LP's dual. Specifically, if we associate a variable $\delta_{\Y\mid\X}$ for every degree constraint $(\X, \Y, N_{\Y\mid\X}) \in \dc$, then the dual LP is:

\mytab{
    \> {\bf dual modular LP} \>\>\>\>\>\>\> min $\sum\limits_{(\X, \Y, N_{\Y\mid\X}) \in \dc} \delta_{\Y\mid\X} \cdot \log N_{\Y\mid\X}$ subject to \\[2mm]
    \>\>\>\> $\sum\limits_{\substack{(\X, \Y, N_{\Y\mid\X}) \in \dc \\ A \in \Y - \X}} \delta_{\Y\mid\X} \ge 1$ \>\>\>\>\>\>\>\> $\forall A \in \V$ \\
    \>\>\>\> $\delta_{\Y\mid\X} \ge 0$ \>\>\>\>\>\>\>\> $\forall (\X, \Y, N_{\Y\mid\X}) \in \dc$
}

\section{Join Sampling under Acyclic Degree Dependency} \label{sec:join-sample}

This section serves as a proof of our first main result:

\begin{theorem} \label{thm:samp}
    For any join $\Q$ consistent with an acyclic set {\em $\dc$} of degree constraints, we can build in $O(\In)$ expected time a data structure that supports each join sampling operation in {\em $O(\polym(\dc) / \max\{1, \Out\})$} expected time, where $\In$ and $\Out$ are the input and out sizes of $\Q$, respectively, and {\em $\polym(\dc)$} is the polymatroid bound of {\em $\dc$}.
\end{theorem}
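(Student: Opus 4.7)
The plan is to combine Lemma~\ref{lmm:pre-modular-equals-polym} with a topological rejection-sampling procedure in the spirit of Ngo's acyclic-DC join algorithm~\cite{n18}. The polymatroid bound will become the size of a ``product search space'' of candidate tuples, from which a single uniform draw costs $O(1)$ and succeeds with probability $\Theta(\Out / \polym(\dc))$; independent trials then deliver a uniform sample of $\join(\Q)$ in the claimed expected time.

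The first step builds the search space. Since $G_\dc$ is acyclic, fix a topological order $A_1, \ldots, A_n$ of $\V$. For each $A_i$ we want a degree constraint $\sigma_i = (\X_i, \Y_i, N_i) \in \dc$ and a relation $R_i \in \Q$ guarding $\sigma_i$ such that $\X_i \subseteq \{A_1, \ldots, A_{i-1}\}$, $A_i \in \Y_i \setminus \X_i$, and $\prod_{i=1}^n N_i = O(\polym(\dc))$. By Lemma~\ref{lmm:pre-modular-equals-polym} the polymatroid bound is realized by the modular LP, whose optimal solution $\nu^*$ satisfies $\sum_i \nu^*_{A_i} = \log \polym(\dc)$. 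Using the dual modular LP together with the acyclicity of $G_\dc$, we charge each $\nu^*_{A_i}$ to a single constraint in $\dc$ that covers $A_i$ and whose $\X$-side lies in the topological predecessors of $A_i$ --- the cardinality constraints implicit in $\dc$ always serve as fallbacks when no tighter option applies. Because $n$ is a constant under data complexity, rounding the resulting charges to integer bucket sizes loses at most a constant factor in the product.

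During preprocessing, we build in $O(\In)$ expected time a hash index on every relation in $\Q$ keyed by each $\X_i$ we need, so that for any tuple $\bm{u}$ over $\X_i$ the distinct values in $\{\bm{v}(A_i) \mid \bm{v} \in R_i,\ \bm{v}[\X_i] = \bm{u}\}$ can be enumerated, counted, and indexed into in $O(1)$ time per entry. A single sampling trial walks $i = 1, \ldots, n$ in order: the values of $\X_i$ are already fixed by earlier rounds, so we query the index of $R_i$, pad the returned bucket to size exactly $N_i$ with ``dummy'' slots, draw one slot uniformly at random, and either abort the trial (if a dummy is drawn) or assign the drawn value to $A_i$. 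After $n$ rounds we hold a candidate tuple $\bm{t}$, which we accept iff $\bm{t}[\schema(R)] \in R$ for every $R \in \Q$ --- checkable in $O(1)$ time via the hash indices on the non-guard relations.

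By construction the search tree has exactly $\prod_i N_i = O(\polym(\dc))$ equiprobable leaves, and its accepting leaves are in bijection with $\join(\Q)$; conditional on acceptance the output is therefore uniform on $\join(\Q)$, and the per-trial success probability is $\Theta(\Out / \polym(\dc))$. Running independent trials until one succeeds gives $O(\polym(\dc)/\Out)$ expected time when $\Out \ge 1$; to certify $\Out = 0$ cleanly, we race the random trials against a deterministic traversal of the same search tree (i.e., Ngo's acyclic-DC join algorithm), which decides emptiness in $O(\polym(\dc))$ worst-case time, so the combined cost is $O(\polym(\dc) / \max\{1, \Out\})$ in every regime. The main obstacle I anticipate is the first step: one must show that the primal/dual optima of the modular LP really do admit an integral per-attribute selection of guards $\sigma_i$ whose product of bucket sizes does not overshoot $\polym(\dc)$ by more than a constant. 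Acyclicity of $G_\dc$ is precisely what makes this possible --- it lets the $\sigma_i$'s be chosen one attribute at a time along the topological order without ever needing to revisit an earlier decision.
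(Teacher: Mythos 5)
Your plan has a genuine gap at the very step you yourself flag as the main obstacle: there is, in general, \emph{no} way to pick a single constraint $\sigma_i = (\X_i, \Y_i, N_i) \in \dc$ per attribute $A_i$ (with $\X_i$ among the topological predecessors of $A_i$) such that $\prod_{i=1}^n N_i = O(\polym(\dc))$. The simplest counterexample falls squarely inside the theorem's scope: the triangle join $R(A,B) \bowtie S(B,C) \bowtie T(A,C)$, all of size $N$, with $\dc$ consisting of just the three cardinality constraints. Here $G_\dc$ has no edges (all constraints have $\X = \emptyset$), so $\dc$ is trivially acyclic and the theorem applies. The modular LP gives $\nu^*_A = \nu^*_B = \nu^*_C = \tfrac{1}{2}\log N$, hence $\polym(\dc) = N^{3/2}$. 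But every constraint in $\dc$ has bound $N$, so whatever per-attribute selection you make, $\prod_i N_i = N^3$. Your search tree therefore has $N^3 = \omega(\polym(\dc))$ equiprobable leaves, and the per-trial success probability drops to $\Theta(\Out / N^3)$, not $\Theta(\Out / \polym(\dc))$. The problem is structural, not a rounding artifact: the modular LP's per-attribute weights $\nu^*_A$ are generally \emph{fractional shares} of constraints, and a constraint with $|\Y - \X| > 1$ cannot be ``charged'' integrally to more than one attribute without its bound $N_{\Y\mid\X}$ being multiply counted in the product. The dual solution $\delta$ is fractional for the same reason, and ``$n$ is a constant'' controls only the number of factors, not the magnitude of the overcount (which is polynomial in the data).

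The paper avoids exactly this trap by never committing to one integral guard per attribute. Its algorithm \textsf{ADC-sample} chooses a \emph{random} constraint from $\dc(A_i)$ at each step, samples an $A_i$-value from the \emph{actual, data-dependent} bucket $\Pi_{\Y^\circ}(R_{F(\X^\circ,\Y^\circ)} \ltimes \bm{w}_{i-1})$ (no padding to the static bound $N_i$), discards the trial unless the drawn constraint happens to be the ``canonical'' one maximizing the relative degree, and then applies a further rejection with probability $1 - p_\pass$ calibrated via the quantity $B_i(\bm{w}) = \prod_{(\X,\Y,N) \in \dc} \bigl(\deg_{\Y\mid\X}(R_{F(\X,\Y)} \ltimes \bm{w})\bigr)^{\delta_{\Y\mid\X}}$. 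Because $B_0(\text{null}) \le \prod N_{\Y\mid\X}^{\delta_{\Y\mid\X}} = \polym(\dc)$ uses the \emph{fractional} dual weights $\delta$ as exponents, the acceptance probability telescopes to exactly $\bigl(B_0(\text{null}) \cdot \prod_i |\dc(A_i)|\bigr)^{-1}$ per output tuple, which is uniform and gives overall success probability $\Omega(\Out / \polym(\dc))$ with no integrality loss. Your preprocessing plan (group-by hash indices, $O(1)$ lookups) and your ``race against Ngo's algorithm'' trick for certifying $\Out = 0$ both match the paper and are sound; it is the selection of a fixed integral search tree of size $\prod_i N_i$ that breaks.
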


\noindent {\bf Basic Definitions.} Let $\G = (\V, \E)$ be the schema graph of $\Q$, and $G_\dc$ be the constraint dependency graph determined by $\dc$. For each hyperedge $F \in \E$, we denote by $R_F$ the relation whose schema corresponds to $F$. Recall that every constraint $(\X, \Y, N_{\Y \mid \X}) \in \dc$ is guarded by at least one relation in $\Q$. Among them, we arbitrarily designate one relation as the constraint's {\em main guard}, whose schema is represented as $F(\X, \Y)$ (the main guard can then be conveniently identified as $R_{F(\X,\Y)}$).

\vgap

Set $k = |\V|$. As $G_\dc$ is a DAG (acyclic directed graph), we can order its $k$ vertices (i.e., attributes) into a topological order: $A_1, A_2, ..., A_k$. For each $i \in [k]$, define $\V_i = \set{A_1, A_2, ..., A_i}$; specially, define $\V_0 = \emptyset$. For any $i \in [k]$, define
\myeqn{
    \dc(A_i) &=& \{ (\X, \Y, N_{\Y \mid \X}) \in \dc \mid A_i \in \Y - \X
               \}
               \label{eqn:dc-A}
}

Fix an arbitrary $i \in [k]$ and an arbitrary constraint $(\X, \Y, N_{\Y \mid \X}) \in \dc(A_i)$. Given a tuple $\bm{w}$ over $\V_{i-1}$ (note: if $i = 1$, then $\V_{i-1} = \emptyset$ and $\bm{w}$ is a null tuple) and a value $a \in \dom$, we define a ``relative degree'' for $a$ as:
\myeqn{
    \reld_{i, \X, \Y} (\bm{w}, a) &=&
    \fr{
        \big|
            \sigma_{A_i = a} (\Pi_\Y (R_{F(\X, \Y)} \ltimes \bm{w}))
        \big|
    }
    {
        \big|
            \Pi_\Y (R_{F(\X, \Y)} \ltimes \bm{w})
        \big|
    }
    \label{eqn:reldeg}
}
where $\sigma$ and $\ltimes$ are the standard selection and semi-join operators in relational algebra, respectively. To understand the intuition behind $\reld_{i, \X, \Y} (\bm{w}, a)$, imagine drawing a tuple $\bm{u}$ from $\Pi_\Y (R_{F(\X, \Y)} \ltimes \bm{w})$ uniformly at random; then $\reld_{i, \X, \Y} (\bm{w}, a)$ is the probability to see $\bm{u}(A_i) = a$. Given a tuple $\bm{w}$ over $\V_{i-1}$ and a value $a \in \dom$, define
\myeqn{
    \reld^*_{i} (\bm{w}, a) &=&
    \max_{(\X, \Y, N_{\Y \mid \X}) \in \dc(A_i)} \reld_{i, \X, \Y} (\bm{w}, a)
    \label{eqn:reldeg-star} \\
    \can^*_i(\bm{w}, a) &=&
    \argmax_{(\X, \Y, N_{\Y \mid \X}) \in \dc(A_i)} \reld_{i, \X, \Y} (\bm{w}, a).
    \label{eqn:canonical-constr}
}
Specifically, $\can^*_i(\bm{w}, a)$ returns the constraint $(\X, \Y, N_{\Y \mid \X}) \in \dc(A_i)$ satisfying the condition $\reld_{i, \X, \Y} (\bm{w}, a) = \reld^*_{i} (\bm{w}, a)$. If more than one constraint meets this condition, define $\can^*_i(\bm{w}, a)$ to be an arbitrary one among those constraints.

\vgap

Henceforth, we will fix an arbitrary optimal solution $\set{ \delta_{\Y\mid \X} \mid (\X, \Y, N_{\Y \mid \X}) \in \dc }$ to the dual modular LP in Section~\ref{sec:pre}. Thus:
\myeqn{
    \prod_{(\X, \Y, N_{\Y \mid \X}) \in \dc} N_{\Y \mid \X}^{\delta_{\Y \mid \X}}
    &=&
    \exp_2 \Big(\sum\limits_{(\X, \Y, N_{\Y\mid\X}) \in \dc} \delta_{\Y\mid\X} \cdot \log N_{\Y\mid\X} \Big)
    =
    \exp_2 \Big(\max_{h \in \modu_\V \cap \H_\dc} h(\V) \Big) \nn \\
    \explain{by \eqref{eqn:modular-C}}
    &=&
    \mb(\dc) \nn \\
    \explain{by Lemma~\ref{lmm:pre-modular-equals-polym}}
    &=& \polym(\dc).
    \label{eqn:join-sample:help1}
}
Finally, for any $i \in [0, k]$ and any tuple $\bm{w}$ over $\V_i$, define:
\myeqn{
    B_i(\bm{w})
    &=&
    \prod_{(\X, \Y, N_{\Y\mid\X}) \in \dc}
        \big(
            \deg_{\Y \mid \X}(R_{F(\X, \Y)} \ltimes \bm{w})
        \big)^{\delta_{\Y \mid \X}}. \label{eqn:B_i}
}
Two observations will be useful later:
\myitems
{
    \item If $i = 0$, then $\bm{w}$ is a null tuple and $B_0(\text{null}) = \prod_{(\X, \Y, N_{\Y\mid\X}) \in \dc} (\deg_{\Y \mid \X}(R_{F(\X, \Y)}))^{\delta_{\Y\mid\X}}$, which is at most $\prod_{(\X, \Y, N_{\Y\mid\X}) \in \dc} N_{\Y \mid \X}^{\delta_{\Y\mid\X}} = \polym(\dc)$.

    \item If $i = k$ and $\bm{w} \in \join(\Q)$, then $R_{F(\X, \Y)} \ltimes \bm{w}$ contains exactly one tuple for any $(\X, \Y, N_{\Y\mid\X}) \in \dc$ and thus $B_k(\bm{w}) = 1$.
}

\begin{figure}
\mytab{
    \textsf{ADC-sample} \\
    0. \> $A_1, A_2, ..., A_k \leftarrow$ a topological order of $G_\dc$ \\
    1. \> $\bm{w}_0 \leftarrow$ a null tuple \\
    2. \> {\bf for} $i = 1$ to $k$ {\bf do} \\
    3. \>\> pick a constraint $(\X^\circ, \Y^\circ, N_{\Y^\circ \mid \X^\circ})$ uniformly at random from $\dc(A_{i})$ \\
    4. \>\> $\bm{u}^\circ \leftarrow$ a tuple chosen uniformly at random from $\Pi_{\Y^\circ} (R_{F(\X^\circ, \Y^\circ)} \ltimes \bm{w}_{i-1})$ \\
    \>\> /* note: if $i=1$, then $R_{F(\X^\circ, \Y^\circ)} \ltimes \bm{w}_{i-1} = R_{F(\X^\circ, \Y^\circ)}$ */ \\
    5. \>\> $a_i \leftarrow \bm{u}^\circ(A_i)$ \\
    6. \>\> {\bf if} $(\X^\circ, \Y^\circ, N_{\Y^\circ \mid \X^\circ}) \ne \can^*_{i-1}(\bm{w}_{i-1}, a_i)$ {\bf then} declare {\bf failure} \\
    7.\>\> $\bm{w}_i \leftarrow$ the tuple over $\V_i$ formed by concatenating $\bm{w}_{i-1}$ with $a_i$ \\
    8.\>\> declare {\bf failure} with probability $1 - p_\pass (i, \bm{w}_{i-1}, \bm{w}_i)$, where $p_\pass$ is given in \eqref{eqn:p-pass} \\
    9.\> {\bf if} $\bm{w}_k[F] \in R_F$ for $\forall F \in \E$ {\bf then} \hspace{10mm} /* that is, $\bm{w}_k \in \join(\Q)$ */\\
    10.\>\> {\bf return} $\bm{w}_k$
}
\figcapup
\caption{Our sampling algorithm}
\label{alg:join-samp}
\figcapdown
\end{figure}

\noindent {\bf Algorithm.} Our sampling algorithm, named \textsf{ADC-sample}, is presented in Figure~\ref{alg:join-samp}. At a high level, it processes one attribute at a time according to the topological order $A_1, A_2, ..., A_k$. The for-loop in Lines 2-9 finds a value $a_i$ for attribute $A_i$ ($i \in [k]$). The algorithm may fail to return anything, but when it {\em succeeds} (i.e., reaching Line 10), the values $a_1, a_2, ..., a_k$ will make a uniformly random tuple from $\join(\Q)$.

\vgap

Next, we explain the details of the for-loop. The loop starts with values $a_1, a_2, ..., a_{i-1}$ already stored in a tuple $\bm{w}_{i-1}$ (i.e., $\bm{w}_{i-1}(A_j) = a_j$ for all $j \in [i-1]$). Line 3 randomly chooses a degree constraint $(\X^\circ, \Y^\circ, N_{\Y^\circ \mid \X^\circ})$ from $\dc(A_{i})$; see \eqref{eqn:dc-A}. Conceptually, next we identify the main guard $R_{F(\X^\circ,\Y^\circ)}$ of this constraint, semi-join the relation with $\bm{w}_{i-1}$, and project the semi-join result on $\Y^\circ$ to obtain $\Pi_{\Y^\circ} (R_{F(\X^\circ, \Y^\circ)} \ltimes \bm{w}_{i-1})$. Then, Line 4 randomly chooses a tuple $\bm{u}^\circ$ from $\Pi_{\Y^\circ} (R_{F(\X^\circ, \Y^\circ)} \ltimes \bm{w}_{i-1})$ and Line 5 takes $\bm{u}^\circ(A_i)$ as the value of $a_i$ (note: $A_i \in \Y^\circ - \X^\circ \subseteq \Y^\circ$). Physically, however, we do not compute $\Pi_{\Y^\circ} (R_{F(\X^\circ, \Y^\circ)} \ltimes \bm{w}_{i-1})$ during the sample process. Instead, with proper preprocessing (discussed later), we can acquire the value $a_i$ in $O(1)$ time. Continuing, Line 6 may declare failure and terminate \textsf{ADC-sample}, but if we get past this line, $(\X^\circ, \Y^\circ, N_{\Y^\circ \mid \X^\circ})$ must be exactly $\can^*_i(\bm{w}_{i-1}, a_i)$; see \eqref{eqn:canonical-constr}. As clarified later, the check at Line 6 can be performed in $O(1)$ time. We now form a tuple $\bm{w}_i$ that takes value $a_j$ on attribute $A_j$ for each $j \in [i]$ (Line 7). Line 8 allows us to pass with probability
\myeqn{
    p_\pass(i, \bm{w}_{i-1}, \bm{w}_i)
    &=&
    \fr{B_i(\bm{w}_i)}{B_{i-1} (\bm{w}_{i-1})} \cdot
    \fr{1}{\reld^*_i(\bm{w}_{i-1}, \bm{w}_{i}(A_i))}
    \label{eqn:p-pass}
}
or otherwise terminate the algorithm by declaring failure. As proved later, $p_\pass(i, \bm{w}_{i-1}, \bm{w}_i)$ cannot exceed 1 (Lemma~\ref{lmm:p-pass}); moreover, this value can be computed in $O(1)$ time. The overall execution time of \textsf{ADC-sample} is constant.



\extraspacing {\bf Analysis.} Next we prove that the value in \eqref{eqn:p-pass} serves as a legal probability value.

\begin{lemma} \label{lmm:p-pass}
    For every $i \in [k]$, we have $p_\pass(i, \bm{w}_{i-1}, \bm{w}_i) \le 1$.
\end{lemma}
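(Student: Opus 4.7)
The plan is to prove the equivalent inequality $B_i(\bm{w}_i)/B_{i-1}(\bm{w}_{i-1}) \le \reld^*_i(\bm{w}_{i-1}, a)$, where I write $a := \bm{w}_i(A_i)$. By \eqref{eqn:B_i}, this quotient is a product of per-constraint factors $\bracket{\deg_{\Y\mid\X}(R_{F(\X,\Y)} \ltimes \bm{w}_i)/\deg_{\Y\mid\X}(R_{F(\X,\Y)} \ltimes \bm{w}_{i-1})}^{\delta_{\Y\mid\X}}$ indexed by $(\X,\Y,N_{\Y\mid\X})\in\dc$. I will split this product into contributions from constraints in $\dc(A_i)$ and from those outside, and bound each part separately.

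For constraints outside $\dc(A_i)$, I will use that $\bm{w}_i$ extends $\bm{w}_{i-1}$ and hence $R_{F(\X,\Y)} \ltimes \bm{w}_i \subseteq R_{F(\X,\Y)} \ltimes \bm{w}_{i-1}$; since $\deg_{\Y\mid\X}$ is monotone in the relation and the dual variables satisfy $\delta_{\Y\mid\X}\ge 0$, this sub-product is at most $1$. For a constraint $(\X,\Y,N_{\Y\mid\X})\in\dc(A_i)$, the plan is to exploit acyclicity: because $A_i\in\Y-\X$, the topological order of $G_\dc$ forces $\X\subseteq\V_{i-1}$ (vacuously when $\X=\emptyset$). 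Hence every tuple in $R_{F(\X,\Y)}\ltimes\bm{w}_{i-1}$ must agree with $\bm{w}_{i-1}$ on $\X$, so the $\max$ in \eqref{eqn:degree} is attained at $\bm{u}=\bm{w}_{i-1}[\X]$, giving $\deg_{\Y\mid\X}(R_{F(\X,\Y)}\ltimes\bm{w}_{i-1})=|\Pi_\Y(R_{F(\X,\Y)}\ltimes\bm{w}_{i-1})|$; the analogous identity holds for $\bm{w}_i$. Since $A_i\in\Y$, we also have $\Pi_\Y(R_{F(\X,\Y)}\ltimes\bm{w}_i)=\sigma_{A_i=a}\Pi_\Y(R_{F(\X,\Y)}\ltimes\bm{w}_{i-1})$. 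Consequently each such factor equals exactly $\reld_{i,\X,\Y}(\bm{w}_{i-1},a)\le\reld^*_i(\bm{w}_{i-1},a)$ by \eqref{eqn:reldeg} and \eqref{eqn:reldeg-star}.

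To combine the two pieces I will invoke dual feasibility of the fixed optimal solution at attribute $A_i$, namely $\sum_{(\X,\Y,N_{\Y\mid\X})\in\dc(A_i)}\delta_{\Y\mid\X}\ge 1$. Since $\reld^*_i(\bm{w}_{i-1},a)\in(0,1]$ (a ratio of a subset size to its superset's size, strictly positive because $a$ was actually selected by Line~4), raising it to an exponent at least $1$ can only decrease it, so the product over $\dc(A_i)$ is at most $\reld^*_i(\bm{w}_{i-1},a)$. Multiplying with the ``outside'' bound and then dividing by $\reld^*_i(\bm{w}_{i-1},a)$ in accordance with \eqref{eqn:p-pass} yields $p_\pass(i,\bm{w}_{i-1},\bm{w}_i)\le 1$.

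The main subtle point I foresee is the relative-degree identification inside $\dc(A_i)$: the identity $\deg_{\Y\mid\X}(R\ltimes\bm{w}_{i-1})=|\Pi_\Y(R\ltimes\bm{w}_{i-1})|$ relies entirely on $\X$ having already been pinned down by $\bm{w}_{i-1}$, which is exactly what acyclicity of $G_\dc$ purchases via the topological ordering. Without acyclicity some attribute of $\X$ could lie after $A_i$ in the order, breaking this identity and, with it, the whole argument.
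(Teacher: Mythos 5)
Your proof is correct and follows essentially the same route as the paper's: split the product defining $B_i/B_{i-1}$ into factors from $\dc(A_i)$ and factors from its complement, bound the latter by $1$ via monotonicity of the semijoin, use the topological order (so that $\X\subseteq\V_{i-1}$) to identify each $\dc(A_i)$ factor with $\reld_{i,\X,\Y}(\bm{w}_{i-1},a_i)\le\reld^*_i(\bm{w}_{i-1},a_i)$, and finish with dual feasibility $\sum_{\dc(A_i)}\delta_{\Y\mid\X}\ge 1$ together with $\reld^*_i\le 1$. The paper phrases the final step as $(\reld^*_i)^{\sum\delta - 1}\le 1$ rather than first proving $B_i/B_{i-1}\le\reld^*_i$, but this is only a cosmetic rearrangement.
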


\begin{proof}
    Consider an arbitrary constraint $(\X, \Y, N_{\Y\mid\X}) \in \dc(A_i)$. Recall that \textsf{ADC-sample} processes the attributes by the topological order $A_1, ..., A_k$. In the constrained dependency graph $G_\dc$, every attribute of $\X$ has an out-going edge to $A_i$. Hence, all the attributes in $\X$ must be processed prior to $A_i$. This implies that all the tuples in $R_{F(\X, \Y)} \ltimes \bm{w}_{i-1}$ must have the same projection on $\X$. Therefore, $\deg_{\Y \mid \X}(R_{F(\X, \Y)} \ltimes \bm{w}_{i-1})$ equals $|\Pi_{\Y} (R_{F(\X,\Y)} \ltimes \bm{w}_{i-1})|$. By the same reasoning, $\deg_{\Y \mid \X}(R_{F(\X, \Y)} \ltimes \bm{w}_{i})$ equals $|\Pi_{\Y} (R_{F(\X,\Y)} \ltimes \bm{w}_{i})|$. We thus have:
    \myeqn{
        \fr{\deg_{\Y \mid \X}(R_{F(\X, \Y)} \ltimes \bm{w}_{i})}{\deg_{\Y \mid \X}(R_{F(\X, \Y)} \ltimes \bm{w}_{i-1})}
        &=&
        \fr{|\Pi_{\Y} (R_{F(\X,\Y)} \ltimes \bm{w}_{i})|}{|\Pi_{\Y} (R_{F(\X,\Y)} \ltimes \bm{w}_{i-1})|} \nn \\
        &=&
        \fr{
            \big|
                \sigma_{A_i = a_i} (\Pi_\Y (R_{F(\X, \Y)} \ltimes \bm{w}_{i-1}))
            \big|
        }
        {
            \big|
                \Pi_\Y (R_{F(\X, \Y)} \ltimes \bm{w}_{i-1})
            \big|
        } \nn \\
        &=&
        \reld_{i,\X,\Y}(\bm{w}_{i-1}, a_i) \nn \\
        &\le&
        \reld^*_{i}(\bm{w}_{i-1}, a_i).
        \label{eqn:proof:p-pass:help1}
    }
    On the other hand, for any constraint $(\X, \Y, N_{\Y\mid\X}) \notin \dc(A_i)$, it trivially holds that
    \myeqn{
        \deg_{\Y \mid \X}(R_{F(\X, \Y)} \ltimes \bm{w}_{i})
        &\le&
        \deg_{\Y \mid \X}(R_{F(\X, \Y)} \ltimes \bm{w}_{i-1})
        \label{eqn:proof:p-pass:help2}
    }
    because $R_{F(\X, \Y)} \ltimes \bm{w}_{i}$ is a subset of $R_{F(\X, \Y)} \ltimes \bm{w}_{i-1}$.

    \vgap

    We can now derive
    \myeqn{
        p_\pass(i, \bm{w}_{i-1}, \bm{w}_i) 
        &=&
        \fr{1}
        {\reld^*_i(\bm{w}_{i-1}, a_i)}
        \prod_{(\X, \Y, N_{\Y\mid\X}) \in \dc}
        \Big(
        \fr{
            \deg_{\Y \mid \X}(R_{F(\X, \Y)} \ltimes \bm{w}_i)
        }
        {
            \deg_{\Y \mid \X}(R_{F(\X, \Y)} \ltimes \bm{w}_{i-1})
        }
        \Big)^{\delta_{\Y \mid \X}} \nn \\
        \explain{by \eqref{eqn:proof:p-pass:help2}}
        &\le&
        \fr{1}
        {\reld^*_i(\bm{w}_{i-1}, a_i)}
        \prod_{(\X, \Y, N_{\Y\mid\X}) \in \dc(A_i)}
        \Big(
        \fr{
            \deg_{\Y \mid \X}(R_{F(\X, \Y)} \ltimes \bm{w}_i)
        }
        {
            \deg_{\Y \mid \X}(R_{F(\X, \Y)} \ltimes \bm{w}_{i-1})
        }
        \Big)^{\delta_{\Y \mid \X}}
        \nn \\
        \explain{by \eqref{eqn:proof:p-pass:help1}}
        &\le&
        \fr{1}
        {\reld^*_i(\bm{w}_{i-1}, a_i)}
        \prod_{(\X, \Y, N_{\Y\mid\X}) \in \dc(A_i)}
        \reld^*_i(\bm{w}_{i-1}, a_i)^{\delta_{\Y \mid \X}}
        \nn \\
        &=&
        \reld^*_i(\bm{w}_{i-1}, a_i)^{\big(\sum_{(\X, \Y, N_{\Y\mid\X}) \in \dc(A_i)} \delta_{\Y\mid\X} \big) - 1}
        \le
        1. \nn
    }
    The last step used $\sum_{(\X, \Y, N_{\Y\mid\X}) \in \dc(A_i)} \delta_{\Y\mid\X} \ge 1$ guaranteed by the dual modular LP.
\end{proof}

Next, we argue that every result tuple $\bm{v} \in \join(\Q)$ is returned by \textsf{ADC-sample} with the same probability. For this purpose, let us define two random events for each $i \in [k]$:
\myitems{
    \item event {\bf E1}$(i)$: $(\X^\circ, \Y^\circ, N_{\Y^\circ \mid \X^\circ}) = \can^*_i(\bm{w}_{i-1}, \bm{v}(A_i))$ in the $i$-th loop of \textsf{ADC-sample};
    \item event {\bf E2}$(i)$: Line 8 does not declare failure in the $i$-th loop of \textsf{ADC-sample}.
}
The probability for \textsf{ADC-sample} to return $\bm{v}$ can be derived as follows.
\myeqn{
    \Pr[\bm{v} \text{ returned}] 
    &=&
    \prod_{i=1}^k \Pr[a_i = \bm{v}(A_i), \text{\bf E1}(i), \text{\bf E2}(i) \mid \bm{w}_{i-1} = \bm{v}[\V_{i-1}]] \nn \\
    && \explain{if $i = 1$, then $\bm{w}_{i-1} = \bm{v}[\V_{i-1}]$ becomes $\bm{w}_0 = \bm{v}[\emptyset]$, which is vacuously true} \nn \\
    &=&
    \prod_{i=1}^k \Big( \Pr[a_i = \bm{v}(A_i), \text{\bf E1}(i) \mid \bm{w}_{i-1} = \bm{v}[\V_{i-1}]] \cdot \nn \\[-2mm]
    && \hspace{5mm} \Pr[\text{\bf E2}(i) \mid \text{\bf E1}(i), a_i = \bm{v}(A_i), \bm{w}_{i-1} = \bm{v}[\V_{i-1}]]
    \Big).
    \label{eqn:sample:analysis:help1}
}
Observe
\myeqn{
    && \Pr[
        a_i = \bm{v}(A_i), \text{\bf E1}(i) \mid \bm{w}_{i-1} = \bm{v}[\V_{i-1}]
    ] \nn \\
    &=&
    \Pr[
        \text{\bf E1}(i) \mid \bm{w}_{i-1} = \bm{v}[\V_{i-1}]
    ]
    \cdot
    \Pr[
        a_i = \bm{v}(A_i) \mid \text{\bf E1}(i), \bm{w}_{i-1} = \bm{v}[\V_{i-1}]
    ]
    \nn \\
    &=&
    \fr{1}{|\dc(A_i)|}
    \cdot
    \fr{
        \big|
            \sigma_{A_i = \bm{v}(A_i)} (\Pi_\Y (R_{F(\X^\circ, \Y^\circ)} \ltimes \bm{v}[\V_{i-1}]))
        \big|
    }
    {
        \big|
            \Pi_\Y (R_{F(\X^\circ, \Y^\circ)} \ltimes \bm{v}[\V_{i-1}])
        \big|
    } \nn \\
    &&
    \explain{note: $(\X^\circ, \Y^\circ, N_{\Y^\circ \mid \X^\circ}) = \can^*_i(\bm{v}[\V_{i-1}], \bm{v}(A_i))$, due to {\bf E1}$(i)$ and $\bm{w}_{i-1} = \bm{v}[\V_{i-1}]]$} \nn \\
    &=&
    \fr{1}{|\dc(A_i)|} \cdot \reld_{i,\X^\circ,\Y^\circ}(\bm{v}[\V_{i-1}], \bm{v}(A_i)) 
    =
    \fr{1}{|\dc(A_i)|} \cdot \reld^*_{i}(\bm{v}[\V_{i-1}], \bm{v}(A_i)).
    \label{eqn:sample:analysis:help2}
}
On the other hand:
\myeqn{
    && \Pr[\text{\bf E2}(i) \mid \text{\bf E1}(i), a_i = \bm{v}(A_i), \bm{w}_{i-1} = \bm{v}[\V_{i-1}]]
    \nn \\
    &=&
    p_\pass(i, \bm{v}[\V_{i-1}], \bm{v}[\V_{i}]) \nn \\
    \explain{by \eqref{eqn:p-pass}}
    &=&
    \fr{B_i(\bm{v}[\V_i])}{B_{i-1}(\bm{v}[\V_{i-1}])}
    \cdot
    \fr{1}{\reld^*_i(\bm{v}[\V_{i-1}], \bm{v}(A_i))}. \label{eqn:sample:analysis:help3}
}
Plugging \eqref{eqn:sample:analysis:help2} and \eqref{eqn:sample:analysis:help3} into \eqref{eqn:sample:analysis:help1} yields
\myeqn{
    \Pr[\bm{v} \text{ returned}]
    &=&
    \prod_{i=1}^k \fr{B_i(\bm{v}[\V_i])}{B_{i-1}(\bm{v}[\V_{i-1}])} \cdot \fr{1}{|\dc(A_i)|}
    =
    \fr{B_k(\bm{v}[\V_k])}{B_0(\bm{v}[\V_0])} \cdot \prod_{i=1}^k \fr{1}{|\dc(A_i)|} \nn \\
    &=&
    \fr{1}{B_0(\text{null})} \cdot \prod_{i=1}^k \fr{1}{|\dc(A_i)|}.
    \nn
}
As the above is identical for every $\bm{v} \in \join(\Q)$, we can conclude that each tuple in the join result gets returned by \textsf{ADC-sample} with the same probability. As an immediate corollary, each run of \textsf{ADC-sample} successfully returns a sample from $\join(\Q)$ with probability
\myeqn{
    \fr{\Out}{B_0(\text{null})} \cdot \prod_{i=1}^k \fr{1}{|\dc(A_i)|}
    \ge
    \fr{\Out}{\polym(\dc)} \cdot \prod_{i=1}^k \fr{1}{|\dc(A_i)|}
    =
    \Omega\Big(\fr{\Out}{\polym(\dc)}\Big). \nn
}
In Appendix~\ref{app:preprocessing}, we will explain how to preprocess the relations of $\Q$ in $O(\In)$ expected time to ensure that \textsf{ADC-sample} completes in $O(1)$ time.

\extraspacing {\bf Performing a Join Sampling Operation.} Recall that this operation must either return a uniformly random sample of $\join(\Q)$ or declare $\join(\Q) = \emptyset$. To support this operation, we execute two {\em threads} concurrently. The first thread repeatedly invokes \textsf{ADC-sample} until it successfully returns a sample. The other thread runs Ngo's algorithm in \cite{n18} to compute $\join(\Q)$ {\em in full}, after which we can declare $\join(\Q) \ne \emptyset$ or sample from $\join(\Q)$ in constant time. As soon as one thread finishes, we manually terminate the other one.

\vgap

This strategy guarantees that the join operation completes in $O(\polym(\dc) / \max \set{1, \Out})$ time. To explain why, consider first the scenario where $\Out \ge 1$. In this case, we expect to find a sample with $O(\polym(\dc) / \Out)$ repeats of \textsf{ADC-sample}. Hence, the first thread finishes in $O(\polym(\dc) / \Out)$ expected sample time. On the other hand, if $\Out = 0$, the second thread will finish in $O(\polym(\dc))$ time. This concludes the proof of Theorem~\ref{thm:samp}.

\extraspacing {\bf Remarks.} When $\dc$ has only cardinality constraints (is thus ``trivially'' acyclic), \textsf{ADC-sample} simplifies into the sampling algorithm of Kim et al.\ \cite{khfh23}. In retrospect, two main obstacles prevent an obvious extension of their algorithm to an arbitrary acyclic $\dc$. The first is identifying an appropriate way to deal with constraints $(\X, \Y, N_{\Y \mid \X}) \in \dc$ where $\X \ne \emptyset$ (such constraints are absent in the degenerated context of \cite{khfh23}). The second obstacle involves determining how to benefit from a topological order (attribute ordering is irrelevant in \cite{khfh23}); replacing the order with a non-topological one may ruin the correctness of \textsf{ADC-sample}.


\section{Directed Subgraph Sampling} \label{sec:directed-subgraph}

Given a directed pattern graph $P = (V_P, E_P)$ and a directed data graph $G = (V, E)$, we use $\occ(G, P)$ to represent the set of occurrences of $P$ in $G$. Every vertex in $G$ has an out-degree at most $\lambda$. Our goal is to design an algorithm to sample from $\occ(G, P)$ efficiently.

\vgap

Let us formulate the ``polymatroid bound'' for this problem. Given an integer $m \ge 1$, an integer $\lambda \in [1, m]$, and a pattern $P = (V_P, E_P)$, first build a rule collection $\C$ over $V_P$ as follows: for each edge $(X, Y) \in E_P$, add to $\C$ two rules: $(\emptyset, \set{X, Y}, m)$ and $(\set{X},$ $\set{X, Y},$ $\lambda)$. Then, the {\em directed polymatriod bound} of $m$, $\lambda$, and $P$ can be defined as
\myeqn{
    \polym_\dir(m, \lambda, P)
    &=&
    \polym(\C)
    \label{eqn:polym-graph-directed}
}
where $\polym(\C)$ follows the definition in \eqref{eqn:polym-C}.

\vgap

This formulation reflects how directed subgraph listing can be processed as a join. Consider a {\em companion join} $\Q$ constructed from $G$ and $P$ as follows. The schema graph of $\Q$, denoted as $\G = (\V, \E)$, is exactly $P = (V_P, E_P)$ (i.e., $\V = V_P$ and $\E = E_P$). For every edge $F = (X, Y) \in E_P$, create a relation $R_F \in \Q$ by inserting, for each edge $(x, y)$ in the data graph $G$, a tuple $\bm{u}$ with $\bm{u}(X) = x$ and $\bm{u}(Y) = y$ into $R_F$. The rule collection $\C$ can now be regarded as a set $\dc$ of degree constraints with which $\Q$ is consistent, i.e., $\Q \models \dc = \C$. The constraint dependence graph $G_\dc$ is precisely $P$. It is immediate that $\polym_\dir(|E|, \lambda, P) = \polym(\dc)$. To find all the occurrences in $\occ(G, P)$, it suffices to compute $\join(\Q)$. Specifically, every tuple $\bm{u} \in \join(\Q)$ that uses a distinct value on every attribute in $\V$ ($= V_P$) matches a unique occurrence in $\occ(G, P)$. Conversely, every occurrence in $\occ(G, P)$ matches the same number $c$ of tuples in $\join(\Q)$, where $c \ge 1$ is a constant equal to the number of automorphisms of $P$. If we denote $\Out = |\occ(G, P)|$ and $\Out_\Q = |\join(\Q)|$, it follows that $c \cdot \Out \le \Out_\Q \le \polym(\dc) = \polym_\dir(|E|, \lambda, P)$.

\vgap

The above observation suggests how directed subgraph sampling can be reduced to join sampling. First, sample a tuple $\bm{u}$ from $\join(\Q)$ uniformly at random. Then, check whether $\bm{u}(A) = \bm{u}(A')$ for any two distinct attributes $A, A' \in \V$. If so, declare failure; otherwise, declare success and return the unique occurrence matching the tuple $\bm{u}$. The success probability equals $c \cdot \Out / \Out_\Q$. In a success event, every occurrence in $\occ(G, P)$ has the same probability to be returned.

\vgap

When $P$ is acyclic, so is $G_\dc$, and thus our algorithm in Theorem~\ref{thm:samp} can be readily applied to handle a subgraph sampling operation. To analyze the performance, consider first $\Out \ge 1$. We expect to draw $O(\Out_\Q / \Out)$ samples from $\join(\Q)$ until a success event. As Theorem~\ref{thm:samp} guarantees retrieving a sample from $\join(\Q)$ in $O(\polym(\dc) / \Out_\Q)$ expected time, overall we expect to sample an occurrence from $\occ(G, P)$ in
\myeqn{
    O \Big(\fr{\polym(\dc)}{\Out_\Q} \cdot \fr{\Out_\Q}{\Out} \Big) =  O\Big(\fr{\polym(\dc)}{\Out}\Big) \nn
}
time. To prepare for the possibility of $\Out = 0$, we apply the ``two-thread approach'' in Section~\ref{sec:join-sample}. We run a concurrent thread that executes Ngo's algorithm in \cite{n18}, which finds the whole $\join(\Q)$, and hence $\occ(G, P)$, in $O(\polym(\dc))$ time, after which we can declare $\occ(G, P) = \emptyset$ or sample from $\occ(G, P)$ in $O(1)$ time. By accepting whichever thread finishes earlier, we ensure that the operation completes in $O(\polym(\dc) / \max \set{1, \Out})$ time.

\vgap

{\em The main challenge arises when $P$ is cyclic.} In this case, $G_\dc$ (which equals $P$) is cyclic. Thus, $\dc$ becomes a cyclic set of degree constraints, rendering neither Theorem~\ref{thm:samp} nor Ngo's algorithm in \cite{n18} applicable. We overcome this challenge with the lemma below.

\begin{lemma} \label{lmm:directed-throwaway-constraints}
    If {\em $\dc$} is cyclic, we can always find an acyclic subset {\em $\dc' \subset \dc$} satisfying {\em $\polym(\dc') = \Theta(\polym(\dc))$}.
\end{lemma}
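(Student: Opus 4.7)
The direction $\polym(\dc) \le \polym(\dc')$ is automatic because $\dc' \subseteq \dc$ shrinks the constraint set and hence enlarges the feasible region of the polymatroid LP. My plan focuses on the nontrivial reverse inequality $\polym(\dc') = O(\polym(\dc))$.

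\textbf{Choice of $\dc'$.} First I would pick a feedback arc set $S \subseteq E_P$ so that $P \setminus S$ becomes a DAG. Define $\dc'$ to retain (a) every cardinality constraint of $\dc$ and (b) every out-degree constraint $(\{X\}, \{X,Y\}, \lambda)$ corresponding to an edge $(X, Y) \in E_P \setminus S$. Then $G_{\dc'} = P \setminus S$ is acyclic, so by Lemma~\ref{lmm:pre-modular-equals-polym} we have $\polym(\dc') = \mb(\dc')$. The problem thus reduces to bounding $\mb(\dc')$ by $O(\polym(\dc))$.

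\textbf{Witness via a submodular function.} Let $\nu^* = (\nu^*_A)_{A \in V_P}$ be an optimal modular solution for $\dc'$, so that $\log \mb(\dc') = \sum_{A \in V_P} \nu^*_A$. The plan is to exhibit a zero-grounded, monotone, submodular set function $h^\dagger$ on $V_P$ such that (i) $h^\dagger$ satisfies every constraint in $\dc$ (in particular the ones in $\dc \setminus \dc'$), and (ii) $h^\dagger(V_P) \ge \sum_{A} \nu^*_A - O(1)$. Given (i) and (ii), $h^\dagger$ is feasible for the polymatroid LP of $\dc$, so $\log \polym(\dc) \ge h^\dagger(V_P) \ge \log \mb(\dc') - O(1)$, and we are done.

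\textbf{Constructing $h^\dagger$.} For each $\S \subseteq V_P$, the natural starting point is $\sum_{A \in \S} \nu^*_A$; then for every removed constraint $(\{X\}, \{X,Y\}, \lambda) \in \dc \setminus \dc'$ whose scope lies inside $\S$, I would apply a ``truncation'' that forces $h^\dagger(\{X,Y\}) - h^\dagger(\{X\}) \le \log \lambda$. The truncations must be performed in a coordinated fashion (e.g., via a standard polymatroid truncation/lifting construction) so that the resulting $h^\dagger$ remains submodular and monotone. Crucially, $h^\dagger(V_P)$ should equal $\sum_A \nu^*_A$ up to an additive constant depending only on $|V_P|$, not on $m$ or $\lambda$.

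\textbf{Main obstacle.} The hardest part will be showing that the required truncations cost only an additive $O(1)$ in $h^\dagger(V_P)$ (in the $\log$ scale). This boils down to proving that $\nu^*_Y$ cannot be ``too large'' for vertices $Y$ whose in-edges were entirely removed into $S$. The structural fact to exploit is that a cycle-breaking feedback arc set can be chosen so that every vertex $Y$ in a removed edge of $S$ remains incident to at least one edge of $P \setminus S$; a remaining cardinality constraint in $\dc'$, together with the $\log \lambda$-bound on the neighbor via an out-degree constraint of $\dc'$, then already forces $\nu^*_Y \le \log m - \log \lambda$. Coupled with $|V_P| = O(1)$ and hence $|S| = O(1)$, this keeps the aggregate deduction bounded by a constant in the log scale, yielding $\mb(\dc') = O(\polym(\dc))$.
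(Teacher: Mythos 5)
There is a genuine gap, and it shows up in two places.

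\textbf{An arbitrary feedback arc set does not work.}
Consider the directed pattern $P$ consisting of two vertex-disjoint 3-cycles $A\to B\to C\to A$ and $D\to E\to F\to D$, joined by the edge $C\to D$. Here $G_{\dc}=P$ and one can check (via the formula in \eqref{eqn:app:polym-dir-jrr21}, with $c_1=1$, $n_1=3$, $n_2=3$, $S=T=\emptyset$) that $\polym(\dc)=\Theta(m\cdot\lambda^4)$ when $\lambda\le\sqrt m$. Now take the feedback arc set $S=\{A\to B,\,D\to E\}$, which is minimum and makes $P\setminus S$ acyclic. In the resulting $\dc'$ the vertices $B$ and $E$ have no remaining in-edge carrying a degree constraint, so the modular LP for $\dc'$ is optimized by $\nu_A=\nu_C=\nu_D=\nu_F=\log\lambda$, $\nu_B=\nu_E=\log m-\log\lambda$, giving $\mb(\dc')=\polym(\dc')=m^2\lambda^2$. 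The ratio $\polym(\dc')/\polym(\dc)=m/\lambda^2$ is unbounded for $\lambda\ll\sqrt m$, so $\polym(\dc')\ne\Theta(\polym(\dc))$. The paper's construction for this example keeps the degree constraints on $B\to C,\,C\to D,\,D\to E,\,E\to F$ and discards those on $A\to B,\,C\to A,\,F\to D$ — the kept degree-constraint edges need not be the complement of a feedback arc set, but a carefully chosen set ensuring each non-root vertex of each source SCC, and every vertex of each non-source SCC, retains exactly one in-constraint.

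\textbf{The ``$O(1)$ deduction'' claim is false.}
Even granting a good choice of $\dc'$, your truncation argument does not close. For a plain directed 3-cycle with $\lambda\le\sqrt m$, the modular optimum of $\dc'$ after dropping one out-degree constraint assigns $\nu^*_{Y}=\log m-\log\lambda$ to the now-unconstrained vertex, so the removed constraint $\nu_Y\le\log\lambda$ is violated by $\log m - 2\log\lambda$, which is $\Theta(\log m)$, not $O(1)$. Naive truncation loses this much. The reason the final numbers still match is that the \emph{submodular} witness for $\polym(\dc)$ (e.g.\ $h^*(\mathcal{S})=\log m+(|\mathcal{S}|-2)\log\lambda$) is \emph{larger} than $\nu^*$ on singletons — it is not a truncation of the modular solution at all. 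So ``standard polymatroid truncation'' is a red herring; your plan provides no mechanism to recover the loss.

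\textbf{What the paper does instead.}
The paper splits into $\lambda>\sqrt m$ and $\lambda\le\sqrt m$. For $\lambda>\sqrt m$, it shows (Lemma~\ref{lmm:app:throwaway:opt-z-acyclic}) that an optimal solution to a particular LP (LP$^{(+)}$ from \cite{jrr21}) can be taken with acyclic $z$-support; $\dc'$ is then the set of degree constraints whose $z$-variable is positive, and the bound $\polym(\dc')=O(\polym(\dc))$ follows by chaining two LP equalities that certify the objective is unchanged. For $\lambda\le\sqrt m$, it builds $\dc'$ explicitly from the SCC decomposition of $G_{\dc}$ (and a directed-star cover of the $S$--$T$ bipartite graph) so that the dual modular LP of $\dc'$ admits a feasible solution whose objective equals \eqref{eqn:app:polym-dir-jrr21}. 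In both cases the essential content is the \emph{choice} of which constraints to retain, guided by an optimal dual/SCC structure — precisely the ingredient your sketch omits.
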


The proof is presented in Appendix~\ref{app:proof:directed-throwaway-constraints}. Because $\Q \models \dc$ and $\dc'$ is a subset of $\dc$, we know that $\Q$ must be consistent with $\dc'$ as well, i.e., $\Q \models \dc'$. Therefore, our Theorem~\ref{thm:samp} can now be used to extract a sample from $\join(\Q)$ in $O(\polym_\dir(\dc') / \max\set{1, \Out_\Q})$ time. Importantly, Lemma~\ref{lmm:directed-throwaway-constraints} also permits us to directly apply Ngo's algorithm in \cite{n18} to compute $\join(\Q)$ in $O(\polym(\dc'))$ time. Therefore, we can now apply the two-thread technique to sample from $\occ(G, P)$ in
\myeqn{
    O \Big(\fr{\polym(\dc')}{\max\{1, \Out\}} \Big)
    =
    O \Big(\fr{\polym(\dc)}{\max\{1, \Out\}} \Big)
    =
    O \Big(\fr{\polym_\dir(|E|, \lambda, P)}{\max\{1, \Out\}} \Big) \nn
}
time. We thus have arrived yet:

\begin{theorem} \label{thm:directed}
    Let $G = (V, E)$ be a simple directed data graph, where each vertex has an out-degree at most $\lambda$. Let $P = (V_P, E_P)$ be a simple weakly-connected directed pattern graph with a constant number of vertices. We can build in $O(|E|)$ expected time a data structure that supports each subgraph sampling operation in $O(\polym_\dir(|E|, \lambda, P) / \max\{1, \Out\})$ expected time, where $\Out$ is the number of occurrences of $P$ in $G$, and $\polym_\dir(|E|, \lambda, P)$ is the directed polymatrioid bound in \eqref{eqn:polym-graph-directed}.
\end{theorem}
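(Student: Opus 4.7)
The plan is to reduce directed subgraph sampling to join sampling via the companion join $\Q$ constructed earlier in this section. Recall that $\Q \models \dc$, where $\dc$ is identified with the rule collection defining $\polym_\dir(|E|, \lambda, P)$, the constraint dependency graph $G_\dc$ coincides with $P$, and every occurrence of $P$ in $G$ corresponds to exactly $c$ tuples of $\join(\Q)$, with $c$ being the (constant) number of automorphisms of $P$. Writing $\Out_\Q = |\join(\Q)|$, it then holds that $c \cdot \Out \le \Out_\Q \le \polym(\dc) = \polym_\dir(|E|, \lambda, P)$, and a random occurrence can be obtained by sampling uniformly from $\join(\Q)$ and rejecting tuples that use the same value on two distinct attributes; the success probability is $c \cdot \Out / \Out_\Q$.

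First I would handle the case where $\dc$ is acyclic, which in particular includes every acyclic pattern $P$. Theorem~\ref{thm:samp} applies directly: after $O(\In) = O(|E|)$ expected preprocessing, a uniformly random tuple of $\join(\Q)$ can be drawn in $O(\polym(\dc) / \max\{1, \Out_\Q\})$ expected time. Repeatedly drawing such tuples and applying the distinctness test gives an expected number of trials of $O(\Out_\Q / \Out)$ when $\Out \ge 1$, for a total of $O(\polym(\dc) / \Out)$ expected time per sample. To cover the case $\Out = 0$, I run, in parallel, a second thread executing Ngo's algorithm of \cite{n18}, which materializes $\join(\Q)$ (and hence $\occ(G, P)$) in $O(\polym(\dc))$ time; terminating as soon as either thread finishes yields $O(\polym(\dc) / \max\{1, \Out\})$ overall.

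The main obstacle is the cyclic case: when $P$ contains a cycle, so does $G_\dc = P$, and neither Theorem~\ref{thm:samp} nor Ngo's algorithm is directly applicable. I would resolve this by invoking Lemma~\ref{lmm:directed-throwaway-constraints} to produce an acyclic subset $\dc' \subset \dc$ satisfying $\polym(\dc') = \Theta(\polym(\dc))$. Since $\Q \models \dc$ and $\dc' \subseteq \dc$, we still have $\Q \models \dc'$, so both Theorem~\ref{thm:samp} and Ngo's algorithm can be run with $\dc'$ in place of $\dc$. Replaying the two-thread argument on $(\Q, \dc')$ then yields an expected sample time of
\[
O\!\left(\frac{\polym(\dc')}{\max\{1, \Out\}}\right)
= O\!\left(\frac{\polym(\dc)}{\max\{1, \Out\}}\right)
= O\!\left(\frac{\polym_\dir(|E|, \lambda, P)}{\max\{1, \Out\}}\right),
\]
matching the bound in the theorem. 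Preprocessing is dominated by constructing $\Q$ (trivially $O(|E|)$) plus the $O(\In) = O(|E|)$ expected-time preprocessing of Theorem~\ref{thm:samp}, which completes the proof plan.
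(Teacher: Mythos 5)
Your proposal matches the paper's proof essentially step for step: reduction to the companion join $\Q$ with $G_\dc = P$, rejection sampling with success probability $c \cdot \Out / \Out_\Q$, the two-thread strategy pairing Theorem~\ref{thm:samp} with Ngo's algorithm to handle $\Out = 0$, and the invocation of Lemma~\ref{lmm:directed-throwaway-constraints} to replace a cyclic $\dc$ with an acyclic $\dc' \subset \dc$ of asymptotically the same polymatroid bound. This is the same approach as the paper, with no gaps.
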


\noindent {\bf Remarks.} For subgraph {\em listing}, Jayaraman et al.\ \cite{jrr21} presented a sophisticated method that also enables the application of Ngo's algorithm in \cite{n18} to a cyclic $P$. Given the companion join $\Q$, they employ the {\em degree uniformization technique} \cite{jr18} to generate $t = O(\polylog |E|)$ new joins $\Q_1, \Q_2, ..., \Q_t$ such that $\join(\Q) = \bigcup_{i=1}^t \join(\Q_i)$. For each $i \in [t]$, they construct an acyclic set $\dc_i$ of degree constraints (which is not always a subset of $\dc$) with the property $\sum_{i=1}^t \polym(\dc_i) \le \polym(\dc)$. Each join $\Q_i$ ($i \in [t]$) can then be processed by Ngo's algorithm in $O(\polym(\dc_i))$ time, thus giving an algorithm for computing $\join(\Q)$ (and hence $\occ(G,P)$) in $O(\polym(\dc))$ time. On the other hand, Lemma~\ref{lmm:directed-throwaway-constraints} facilitates a {\em direct} application of Ngo's algorithm to $\Q$, implying the non-necessity of degree uniformization in subgraph listing. We believe that this simplification is noteworthy and merits its own dedicated exposition, considering the critical nature of the subgraph listing problem. In the absence of Lemma~\ref{lmm:directed-throwaway-constraints}, integrating our join-sampling algorithm with the methodology of \cite{jrr21} for the purpose of subgraph sampling would require substantially more effort. Our proof of Lemma~\ref{lmm:directed-throwaway-constraints} {\em does} draw upon the analysis of \cite{jrr21}, as discussed in depth in Appendix~\ref{app:proof:directed-throwaway-constraints}.



\section{Undirected Subgraph Sampling} \label{sec:undirected-subgraph}

Given an undirected pattern graph $P = (V_P, E_P)$ and an undirected data graph $G = (V, E)$, we use $\occ(G, P)$ to represent the set of occurrences of $P$ in $G$. Every vertex in $G$ has a degree at most $\lambda$. Our goal is to design an algorithm to sample from $\occ(G, P)$ efficiently. 

\vgap

We formulate the ``polymatroid bound'' of this problem through a reduction to its directed counterpart. For $P = (V_P, E_P)$, create a  {\em directed} pattern $P' = (V_P', E_P')$ as follows. First, set $V_P' = V_P$. Second, for every edge $\{X, Y\} \in E_P$, add to $E_P$ two directed edges $(X, Y)$ and $(Y, X)$. Now, given an integer $m \ge 1$, an integer $\lambda \in [1, m]$, and an undirected pattern $P = (V_P, E_P)$, the {\em undirected polymatroid bound} of $m$, $\lambda$, and $P$ is defined as
\myeqn{
    \polym_\undir(m, \lambda, P) &=& \polym_\dir(m, \lambda, P') \label{eqn:polym-graph-undirected}
}
where the function $\polym_\dir$ is defined in \eqref{eqn:polym-graph-directed}.

\vgap

Our formulation highlights how undirected subgraph sampling can be reduced to the directed version. From $G = (V, E)$, construct a {\em directed} graph $G' = (V', E')$ by setting $V' = V$, and for every edge $\{x, y\} \in E$, adding to $E'$ two directed edges $(x, y)$ and $(y, x)$. Every occurrence in $\occ(G,P)$ matches the same number (a constant) of occurrences in $\occ(G', P')$. By resorting to Theorem~\ref{thm:directed}, we can obtain an algorithm to sample from $\occ(G,P)$ in $O(\polym_\undir(2|E|, \lambda, P) / \max\{1, \Out\})$ time (where $\Out = |\occ(G, P)|$) after a preprocessing of $O(|E|)$ expected time. We omit the details because they will be superseded by another simpler approach to be described later.

\vgap

Unlike $\polym_\dir(m,\lambda,P)$, which is defined through an LP, we can solve the undirected counterpart $\polym_\undir(m,\lambda,P)$ into a closed form. It is always possible \cite{s03,nprr18,akk19} to decompose $P$ into {\em vertex-disjoint} subgraphs
$\mycycle_1, \mycycle_2, ..., \mycycle_\alpha$, $\mystar_1, \mystar_2, ...,$ and $\mystar_\beta$
(for some integers $\alpha, \beta \ge 0$) such that
\myitems{
    \item $\mycycle_i$ is an odd-length cycle for each $i \in [\alpha]$;
    \item $\mystar_j$ is a star\footnote{A {\em star} is an undirected graph where one vertex, called the {\em center}, has an edge to every other vertex, and each non-center vertex has degree 1.} for each $j \in [\beta]$;
    \item $\sum_{i=1}^\alpha \rho^*(\mycycle_i) + \sum_{j=1}^\beta \rho^*(\mystar_j) = \rho^*(P)$; see Section~\ref{sec:intro:related} for the definition of the fractional edge cover number function $\rho^*(.)$.
}
We will refer to $(\mycycle_1, ..., \mycycle_\alpha$, $\mystar_1, ...,$  $\mystar_\beta)$ as a {\em fractional edge-cover decomposition} of $P$. Define: \vspace{-1mm}
\myeqn{
    k_\mit{cycle} &=& \text{total number of vertices in $\mycycle_1, ..., \mycycle_\alpha$} \label{eqn:k_cycle} \\
    k_\mit{star} &=& \text{total number of vertices in $\mystar_1, ..., \mystar_\beta$} \label{eqn:k_star}.
}
We establish the lemma below in Appendix~\ref{app:proof:polym-undirected-closed}. \vspace{-1mm}

\begin{lemma} \label{lmm:polym-undirected-closed}
    If $k$ is the number of vertices in $P$, then \vspace{-2mm}
    \myeqn{
        \polym_\undir(m, \lambda, P) =
        \begin{cases}
            m \cdot \lambda^{k-2} &  \text{if } \lambda \le \sqrt{m}, \\
            m^{\frac{k_\mit{cycle}}{2}+\beta} \cdot \lambda^{k_\mit{star} - 2\beta}
            &  \text{if } \lambda > \sqrt{m}
        \end{cases}
        \label{eqn:polym-undirected-closed}
    }
\end{lemma}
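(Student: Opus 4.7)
Unfolding the definition, $\polym_\undir(m, \lambda, P) = \polym_\dir(m, \lambda, P') = \polym(\C)$, where $\C$ is the rule collection over $V_P$ that, for every edge $\{X, Y\} \in E_P$, contains the cardinality rule $(\emptyset, \{X, Y\}, m)$ and the two out-degree rules $(\{X\}, \{X, Y\}, \lambda)$, $(\{Y\}, \{X, Y\}, \lambda)$ coming from the two oriented copies of the edge in $P'$. By \eqref{eqn:polym-C}, $\log \polym_\undir(m, \lambda, P)$ equals $\max_{h \in \Gamma_{V_P} \cap \H_\C} h(V_P)$. I will establish \eqref{eqn:polym-undirected-closed} by proving matching upper and lower bounds on this optimum in each regime.

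\textbf{Upper bound.} For $\lambda \le \sqrt{m}$, I use connectedness of $P$: pick any edge $\{u_1, u_2\} \in E_P$ and grow $\{u_1, u_2\} = V_2 \subset V_3 \subset \cdots \subset V_k = V_P$ by attaching at each step a vertex $A_i$ adjacent in $P$ to some $A_j \in V_{i-1}$. The cardinality rule gives $h(V_2) \le \log m$, and submodularity together with the out-degree rule on $\{A_j, A_i\}$ yields $h(V_i) - h(V_{i-1}) \le h(\{A_j, A_i\}) - h(\{A_j\}) \le \log \lambda$ for each $i \ge 3$; telescoping gives $h(V_P) \le \log m + (k-2)\log \lambda$. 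For $\lambda > \sqrt{m}$, I invoke the fractional edge-cover decomposition $V_P = \bigsqcup_i V(\mycycle_i) \sqcup \bigsqcup_j V(\mystar_j)$. Subadditivity (from submodularity plus $h(\emptyset)=0$) gives $h(V_P) \le \sum_i h(V(\mycycle_i)) + \sum_j h(V(\mystar_j))$. I bound each odd cycle by $(k_i/2)\log m$ through the Shearer-type fractional subadditivity applied with weight $1/2$ on every cycle edge, and each star by $\log m + (k_j - 2)\log \lambda$ using the same telescoping argument as above but started at a center-to-leaf edge; summing matches the right-hand side of \eqref{eqn:polym-undirected-closed}.

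\textbf{Lower bound.} For $\lambda \le \sqrt{m}$, I exhibit the feasible function $h(\emptyset) = 0$, $h(S) = \log(m/\lambda) + (|S|-1)\log \lambda$ for $S \ne \emptyset$. Monotonicity is immediate, the two degree rules hold with equality on every edge 2-subset, and the only nontrivial submodularity case is $A, B \ne \emptyset$ with $A \cap B = \emptyset$, where the inequality reduces to the single hypothesis $\log(m/\lambda) \ge \log \lambda$, i.e.\ $\lambda \le \sqrt{m}$. Evaluating at $V_P$ gives $h(V_P) = \log m + (k-2)\log \lambda$. For $\lambda > \sqrt{m}$, I construct a modular $h(S) = \sum_{v \in S}\nu_v$ guided by the decomposition: $\nu_v = (\log m)/2$ for each cycle vertex, $\nu_{c_j} = \log(m/\lambda)$ for each star center, $\nu_\ell = \log \lambda$ for each star leaf. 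All intra-component edges saturate exactly at $\log m$, all values lie in $[0, \log \lambda]$ thanks to $\lambda > \sqrt{m}$, and summing gives exactly $(k_\mit{cycle}/2 + \beta)\log m + (k_\mit{star} - 2\beta)\log \lambda$, matching the claim once feasibility is confirmed.

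\textbf{Main obstacle.} The delicate point in the $\lambda > \sqrt{m}$ lower bound is verifying the cardinality rule on \emph{crossing} edges of $P$ (edges with endpoints in different components of the decomposition): two star leaves in different stars give $\nu$-sum $2\log\lambda > \log m$, and a star leaf together with a cycle vertex gives $\nu$-sum $(\log m)/2 + \log\lambda > \log m$. I plan to resolve this by exploiting additional structure of the canonical Alon / Friedgut--Kahn style decomposition: the center in each star can be chosen so that the collection of high-$\nu$ vertices is an independent set of $P$ -- equivalently, the chosen centers (together with all cycle vertices) form a vertex cover of the crossing edges of $P$. If such a centering cannot be found for a given decomposition, I will fall back to a genuinely submodular $h$ obtained from entropies, coupling the per-component random variables just enough to enforce pairwise entropy at most $\log m$ on every crossing edge of $P$ while preserving the target value of $h(V_P)$; this suffices because $\polym(\C)$ is the maximum over all (not just modular) functions in $\Gamma_{V_P} \cap \H_\C$.
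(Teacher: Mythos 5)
Your upper-bound argument (telescoping along a spanning tree for $\lambda \le \sqrt m$; splitting $h(V_P)$ across the components of the fractional edge-cover decomposition and bounding cycles via fractional subadditivity and stars via telescoping for $\lambda > \sqrt m$) matches the paper, and your lower-bound construction for $\lambda \le \sqrt m$ is exactly the paper's $h^*$ (your $\log(m/\lambda)+(|S|-1)\log\lambda$ is algebraically the paper's $\log m+(|S|-2)\log\lambda$). Both of those are fine.

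The gap is in the $\lambda > \sqrt m$ lower bound, and it is a real one, not a deferred detail. Your plan is to take a modular $h$ with $\nu_v=\tfrac{\log m}{2}$ on cycle vertices, $\log(m/\lambda)$ on star centers, and $\log\lambda$ on star leaves. You correctly flag that this violates the cardinality rule $h(\{X,Y\})\le\log m$ on crossing edges whose endpoints are two star leaves (sum $2\log\lambda>\log m$) or a star leaf and a cycle vertex (sum $\tfrac{\log m}{2}+\log\lambda>\log m$). Your proposed fix --- choose the star centers so that the leaves plus all cycle vertices contain no crossing edge of $P$ --- is simply not achievable in general. Take $P = K_4$: the only admissible fractional edge-cover decomposition into odd cycles and stars is two one-edge stars $\{A,B\}$ and $\{C,D\}$ (a triangle plus a lone vertex is not a valid decomposition), and however you pick the two centers, the two remaining leaves are adjacent in $K_4$. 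More fundamentally, $\dc$ here is cyclic whenever $P$ has a cycle, so Lemma~\ref{lmm:pre-modular-equals-polym} does not apply and one should not expect the modular bound to equal the polymatroid bound; indeed, for a one-edge star $\{X,Y\}$ the target value on the pair is $\log m$, but $h(\{X\})+h(\{Y\})$ is forced to be $<\log m$ (since both singletons must be $\le\log(m/\lambda)$ to survive a possible crossing edge to a cycle vertex), so no feasible modular $h$ can even saturate the within-component values. Your ``fall back to an entropy coupling'' is precisely where all the work lies, and it is left entirely unspecified.

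The paper sidesteps the issue by constructing a genuinely non-modular $h^*$ that is additive \emph{across} components but sublinear \emph{within} them: for $\X$ inside a cycle, $h^*(\X)=\tfrac{|\X|}{2}\log m$; for $\X$ inside a star, $h^*(\X)=\log m+(|\X|-2)\log\lambda$; and for general $\X$, $h^*(\X)=\sum_i h^*(\X\cap V(\mycycle_i))+\sum_j h^*(\X\cap V(\mystar_j))$. Crucially this gives \emph{every} star vertex (center and leaf alike) the singleton value $\log(m/\lambda)$ and every cycle vertex the value $\tfrac{\log m}{2}$, so any crossing edge has $h^*(\{X,Y\})\le 2\log(m/\lambda)<\log m$ or $\le \log(m/\lambda)+\tfrac{\log m}{2}<\log m$, with no need to choose centers cleverly; the extra value needed to reach $\log m+(k(\mystar_j)-2)\log\lambda$ on a whole star is recovered through the sub-modular (non-additive) growth inside the star. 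To close your gap you would essentially have to reinvent this construction (or an equivalent entropy argument), so the modular starting point is the wrong one.
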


As shown in Appendix~\ref{app:tightness-undirected}, for $k = O(1)$, the expression in \eqref{eqn:polym-undirected-closed} asymptotically matches an upper bound of $|\occ(G', P)|$ --- for any $G'$ with $m$ edges and maximum vertex-degree at most $\lambda$ --- easily derived from a fractional edge-cover decomposition. The same appendix will also prove that, for a wide range of $m$ and $\lambda$ values, there exists a graph $G'$ with $m$ edges and maximum vertex degree at most $\lambda$ guaranteeing the following {\em simultaneously} for all patterns $P$ with $k$ vertices: $|\occ(G',P)|$ is no less than the expression's value, up to a factor of $1/(4k)^k$. This implies that, when $k$ is a constant, the expression {\em asymptotically} equals $\polym_\undir(m, \lambda, P)$. More effort is required to prove that the expression equals $\polym_\undir(m, \lambda, P))$ {\em precisely}, as we demonstrate in Appendix~\ref{app:proof:polym-undirected-closed}.

\vgap

Lemma~\ref{lmm:polym-undirected-closed} points to an alternative strategy to perform undirected subgraph sampling without joins. Identify an arbitrary spanning tree $\T$ of the pattern $P = (V_P, E_P)$. Order the vertices of $V_P$ as $A_1, A_2, ..., A_k$ such that, for any $i \in [2, k]$, vertex $A_i$ is adjacent in $\T$ to a (unique) vertex $A_j$ with $j \in [i-1]$. Now construct a map $f: V_P \rightarrow V$ as follows (recall that $V$ is the vertex set of the data graph $G = (V, E)$). First, take an edge $\set{u, v}$ uniformly at random from $E$, and choose one of the following with an equal probability: (i) $f(A_1) = u, f(A_2) = v$, or (ii) $f(A_1) = v, f(A_2) = u$. Then, for every $i \in [3, k]$, decide $f(A_i)$ as follows. Suppose that $A_i$ is adjacent to $A_j$ for some (unique) $j \in [1, i-1]$. Toss a coin with probability $\deg(f(A_j))/\lambda$, where $\deg(f(A_j))$ is the degree of vertex $f(A_j)$ in $G$. If the coin comes up tails, declare failure and terminate. Otherwise, set $f(A_i)$ to a neighbor of $f(A_j)$ in $G$ picked uniformly at random. After finalizing the map $f(.)$, check whether $\{f(A_i) \mid i \in [k]\}$ induces a subgraph of $G$ isomorphic to $P$. If so, {\em accept} $f$ and return this subgraph; otherwise, {\em reject} $f$. To guarantee a sample or declare $\occ(G, P) = \emptyset$, apply the ``two-thread approach'' by (i) repeating the algorithm until acceptance and (ii) concurrently running an algorithm for computing the whole $\occ(G, P)$ in $O(\polym_\undir(|E|, \lambda, P))$ time\footnote{For this purpose, use Ngo's algorithm in \cite{n18} to find all occurrences in $\occ(G',P')$ --- see our earlier definitions of $G'$ and $P'$ --- which is possible due to Lemma~\ref{lmm:directed-throwaway-constraints}.}. As proved in Appendix~\ref{app:undirected-sampling}, this ensures the expected sample time $O(|E| \cdot \lambda^{k-2} / \max\{1, \Out\})$ after a preprocessing of $O(|E|)$ expected time.

\vgap

The above algorithm suffices for the case $\lambda \le \sqrt{|E|}$. Consider now the case $\lambda > \sqrt{|E|}$. We  will construct a map $f: V_P \rightarrow V$ according to a given fractional edge-cover decomposition $(\mycycle_1, ..., \mycycle_\alpha$, $\mystar_1, ...,$  $\mystar_\beta)$. For each $i \in [\alpha]$, use the algorithm of \cite{fgp20} to uniformly sample an occurrence of $\mycycle_i$ --- denoted as $G_\sub(\mycycle_i)$ --- in $O(|E|^{\rho^*(\mycycle_i)} / \max\{1, |\occ(G, \mycycle_i)| \})$ expected time. Let $f_{\mycycle_i}$ be a map from the vertex set of $\mycycle_i$ to that of $G_\sub(\mycycle_i)$,  chosen uniformly at random from all the isomorphism bijections (defined in Section~\ref{sec:intro:prob}) between $\mycycle_i$ and $G_\sub(\mycycle_i)$. For each $j \in [\beta]$, apply our earlier algorithm to uniformly sample an occurrence of $\mystar_j$ --- denoted as $G_\sub(\mystar_j)$ --- in $O(|E| \cdot \lambda^{k_j} / \max\{1, |\occ(G, \mystar_j)| \})$ expected time, where $k_j$ is the number of vertices in $\mystar_j$. Let $f_{\mystar_j}$ be a map from the vertex set of $\mystar_j$ to that of $G_\sub(\mystar_j)$, chosen uniformly at random from all the polymorphism bijections between $\mystar_j$ and $G_\sub(\mystar_j)$. If any of $\mycycle_1, ..., \mycycle_\alpha$, $\mystar_1, ...,$  $\mystar_\beta$ has no occurrences, declare $\occ(G, P) = \emptyset$ and terminate. Otherwise, the functions in $\set{f_{\mycycle_i} \mid i \in [\alpha]}$ and $\set{f_{\mystar_j} \mid j \in [\beta]}$ together determine the map $f$ we aim to build. Check whether $\{f(A) \mid A \in V\}$ induces a subgraph of $G$ isomorphic to $P$. If so, {\em accept} $f$ and return this subgraph; otherwise, {\em reject} $f$. Repeat until acceptance and concurrently run an algorithm for computing the whole $\occ(G, P)$ in $O(\polym_\undir(|E|, \lambda, P))$ time. As proved in Appendix~\ref{app:undirected-sampling}, this ensures the expected sample time $O(|E|^{\frac{k_\mit{cycle}}{2}+\beta} \cdot \lambda^{k_\mit{star} - 2\beta} / \max\{1, \Out\})$ after a preprocessing of $O(|E|)$ expected time.

\vgap

We can now conclude with our last main result:

\begin{theorem} \label{thm:undirected}
    Let $G = (V, E)$ be a simple undirected data graph, where each vertex has a degree at most $\lambda$. Let $P = (V_P, E_P)$ be a simple connected pattern graph with a constant number of vertices. We can build in $O(|E|)$ expected time a data structure that supports each subgraph sampling operation in $O(\polym_\undir(|E|, \lambda, P) / \max\{1, \Out\})$ expected time, where $\Out$ is the number of occurrences of $P$ in $G$, and $\polym_\undir(|E|, \lambda, P)$ is the undirected polymatrioid bound in \eqref{eqn:polym-undirected-closed}.
\end{theorem}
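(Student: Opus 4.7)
The plan is to verify the two algorithms sketched just before the theorem statement and combine them via the two-thread technique from Section~\ref{sec:directed-subgraph}. In each regime dictated by the closed form of Lemma~\ref{lmm:polym-undirected-closed}, I would show that a single trial of the sampler returns every occurrence of $P$ in $G$ with the same probability, and that the per-trial acceptance probability equals $\Theta(\Out/\polym_\undir(|E|,\lambda,P))$; a parallel enumeration thread runs Ngo's algorithm on the companion join (legal by Lemma~\ref{lmm:directed-throwaway-constraints}) in $O(\polym_\undir(|E|,\lambda,P))$ time and caps the total cost when $\Out=0$.

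For $\lambda\le\sqrt{|E|}$, I would fix an occurrence $G_\sub\in\occ(G,P)$ together with a specific isomorphism bijection $f^{*}:V_P\to V(G_\sub)$ and compute the probability that the spanning-tree walk finalizes $f=f^{*}$. The uniform choice of the initial edge and its orientation contributes $1/(2|E|)$; for each $i\ge 3$, the coin toss of probability $\deg(f^{*}(A_j))/\lambda$ followed by the uniform selection of a neighbor of $f^{*}(A_j)$ contributes exactly $1/\lambda$. Hence every $f^{*}$ is finalized with the same probability $1/(2|E|\lambda^{k-2})$. Summing over the constant number of isomorphism bijections from $P$ into $G_\sub$ yields identical acceptance probability for each occurrence, and the aggregate acceptance probability $\Theta(\Out/(|E|\lambda^{k-2}))$ matches the first branch of Lemma~\ref{lmm:polym-undirected-closed}. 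Each trial runs in $O(1)$ time using standard adjacency-list preprocessing built in $O(|E|)$ expected time.

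For $\lambda>\sqrt{|E|}$, I would use a fractional edge-cover decomposition $(\mycycle_1,\ldots,\mycycle_\alpha,\mystar_1,\ldots,\mystar_\beta)$, sampling each cycle component via the algorithm of~\cite{fgp20} and each star component via the spanning-tree sampler. Because the components partition $V_P$, the tuple of independent uniform per-component maps is uniformly distributed over the Cartesian product of the per-component isomorphism sets; conditioning on the acceptance event (that the union of these maps induces a subgraph of $G$ isomorphic to $P$) then gives a uniform sample from $\occ(G,P)$. The acceptance probability is $\Theta\big(\Out/(\prod_i|\occ(G,\mycycle_i)|\cdot\prod_j|\occ(G,\mystar_j)|)\big)$, so combining with the per-component sample costs and the identity $\sum_i\rho^{*}(\mycycle_i)+\sum_j\rho^{*}(\mystar_j)=\rho^{*}(P)$ yields expected per-trial time $O(|E|^{k_\mit{cycle}/2+\beta}\cdot\lambda^{k_\mit{star}-2\beta}/\max\{1,\Out\})$, matching the second branch of Lemma~\ref{lmm:polym-undirected-closed}.

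The main obstacle I anticipate is the bookkeeping in the second regime: one must verify that independent uniform samples on the vertex-disjoint components compose into a uniform distribution over the set of candidate maps $V_P\to V$, and that the acceptance-probability identity above correctly discounts both automorphisms of $P$ and ``accidental'' cross-component edges of $G$ that would violate the isomorphism condition. Once this is in place, the two-thread termination argument from Section~\ref{sec:directed-subgraph} transfers verbatim and yields the stated complexity bound.
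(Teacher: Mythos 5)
Your proposal matches the paper's proof essentially step for step: two regimes according to Lemma~\ref{lmm:polym-undirected-closed}, the spanning-tree sampler for $\lambda\le\sqrt{|E|}$, the decomposition-based sampler for $\lambda>\sqrt{|E|}$, each analyzed by showing the per-trial acceptance probability is the same for every occurrence (so the conditional distribution is uniform) and equals $\Theta(\Out/\polym_\undir(|E|,\lambda,P))$, and the two-thread trick with Ngo's algorithm (justified by Lemma~\ref{lmm:directed-throwaway-constraints}) to cap the $\Out=0$ case. The obstacle you flag in the last paragraph is exactly what the paper's Appendix~D resolves: it computes $\Pr[\text{$G_\sub$ sampled}] = c\cdot\prod_i\frac{1}{c_{\mycycle_i}|\occ(G,\mycycle_i)|}\cdot\prod_j\frac{1}{c_{\mystar_j}|\occ(G,\mystar_j)|}$ and observes this is identical for all occurrences, so the automorphism and cross-component bookkeeping falls out automatically. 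One small wording slip: the quantity $O(|E|^{k_{\mit{cycle}}/2+\beta}\lambda^{k_{\mit{star}}-2\beta}/\max\{1,\Out\})$ is the total expected sampling time, not the ``per-trial'' time; the per-trial cost is $\prod_i\frac{|E|^{\rho^*(\mycycle_i)}}{|\occ(G,\mycycle_i)|}\cdot\prod_j\frac{|E|\lambda^{k(\mystar_j)-2}}{|\occ(G,\mystar_j)|}$, which you then multiply by the expected number of trials $\prod_i|\occ(G,\mycycle_i)|\cdot\prod_j|\occ(G,\mystar_j)|/\Out$ to reach the bound. Aside from that, the proposal is correct and follows the paper's route.
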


\section{Concluding Remarks} \label{sec:conclusion}

Our new sampling algorithms imply new results on several other fundamental problems. We will illustrate this with respect to evaluating a join $\Q$ consistent with an acyclic set $\dc$ of degree constraints. Similar implications also apply to subgraph sampling.
\myitems{
    \item By standard techniques \cite{cy20,dlt23}, we can estimate the output size $\Out$ up to a relative error $\eps$ with high probability (i.e., at least $1-1/\In^c$ for an arbitrarily large constant $c$) in time $\tO(\fr{1}{\eps^2} \fr{\polym(\dc)}{\max\{1, \Out\}})$ after a preprocessing of $O(\In)$ expected time.

    \item Employing a technique in \cite{dlt23}, we can, with high probability, report all the tuples in $\join(\Q)$ with a delay of $\tO(\fr{\polym(\dc)}{\max\{1, \Out\}})$. In this context, `delay' refers to the maximum interval between the reporting of two successive result tuples, assuming the presence of a placeholder tuple at the beginning and another at the end.

    \item In addition to the delay guarantee, our algorithm in the second bullet can, with high probability, report the tuples of $\join(\Q)$ in a random permutation. This means that each of the $\Out!$ possible permutations has an equal probability of being the output.
}
All of the results presented above compare favorably with the current state of the art as presented in \cite{dlt23}. This is primarily due to the superiority of $\polym(\dc)$ over $\agm(\Q)$. In addition, our findings in the last two bullet points also complement Ngo's algorithm as described in \cite{n18} in a satisfying manner.

\appendix

\section*{Appendix}


\section{Implementing \textsf{ADC-Sample} with Indexes} \label{app:preprocessing}

We preprocess each constraint $(\X, \Y, N_{\Y\mid\X}) \in \dc$ as follows. Let $R \in Q$ be its main guard, i.e., $R = R_{F(\X, \Y)}$.
For each $i \in [k]$ and each tuple $\bm{w} \in \Pi_{\schema(R) \cap \V_i} (R)$, define
\myeqn{
    R_\Y(i, \bm{w}) &=& \{ \bm{u}[\Y] \mid \bm{u} \in R, \bm{u}[\schema(R) \cap \V_i] = \bm{w} \} \nn
}
which we refer to as a {\em fragment}.

\vgap

During preprocessing, we compute and store $R_\Y(i, \bm{w})$ for every $i \in [k]$ and $\bm{w} \in \Pi_{\schema(R) \cap \V_i} (R)$. Next, we will explain how to do so for an arbitrary $i \in [k]$. First, group all the tuples of $R$ by the attributes in $\schema(R) \cap \V_i$, which can be done in $O(\In)$ expected time by hashing. Then, perform the following steps for each group in turn. Let $\bm{w}$ be the group's projection on $\schema(R) \cap \V_i$. We compute the group tuples' projections onto $\Y$ and eliminate duplicate projections, the outcome of which is precisely $R_\Y(i, \bm{w})$ and is stored using an array. With hashing, this requires expected time only linear to the group's size. Therefore, the total cost of generating the fragments $R_\Y(i, \bm{w})$ of all $\bm{w} \in \Pi_{\schema(R) \cap \V_i} (R)$ is $O(\In)$ expected.

\vgap

After the above preprocessing, given any $i \in [k]$, constraint $(\X, \Y, N_{\Y\mid\X}) \in \dc$, tuple $\bm{w}$ over $\V_{i-1}$, and value $a \in \dom$, we can compute $\reld_{i,\X,\Y}(\bm{w}, a)$ defined in \eqref{eqn:reldeg} in constant time. For convenience, let $R = R_{F(\X,\Y)}$. To compute $|\Pi_\Y (R \ltimes \bm{w})|$ (the denominator of \eqref{eqn:reldeg}), first obtain $\bm{w}_1 = \bm{w}[\schema(R) \cap \V_{i-1}]$. Then, $\Pi_\Y (R \ltimes \bm{w})$ is just the fragment $R_\Y (i-1, \bm{w}_1)$, which has been pre-stored. The size of this fragment can be retrieved using $\bm{w}_1$ in $O(1)$ time. Similarly, to compute $|\sigma_{A_i=a}(\Pi_\Y (R \ltimes \bm{w}))|$ (the numerator of \eqref{eqn:reldeg}), we can first obtain $\bm{w}_2$, which is a tuple over $\schema(R) \cap \V_i$ that shares the values of $\bm{w}_1$ on all the attributes in $\schema(R) \cap \V_{i-1}$ and additionally uses value $a$ on attribute $A_i$. Then, $\sigma_{A_i=a}(\Pi_\Y (R \ltimes \bm{w}))$ is just the fragment $R_\Y(i, \bm{w}_2)$, which has been pre-stored. The size of this fragment can be fetched using $\bm{w}_2$ in $O(1)$ time.

\vgap

As a corollary, given any $i \in [k]$, tuple $\bm{w}$ over $\V_{i-1}$, and value $a \in \dom$, we can compute $\reld^*_i(\bm{w}, a)$ and $\can^*_i(\bm{w}, a)$ --- defined in \eqref{eqn:reldeg-star} and \eqref{eqn:canonical-constr}, respectively --- in constant time.

\vgap

It remains to explain how to implement Line 4 of \textsf{ADC-sample} (Figure~\ref{alg:join-samp}). Here, we want to randomly sample a tuple from $\Pi_{\Y^\circ} (R_{F(\X^\circ, \Y^\circ)} \ltimes \bm{w}_{i-1})$. Again, for convenience, let $R = R_{F(\X^\circ, \Y^\circ)}$. Obtain $\bm{w}' = \bm{w}_{i-1}[\schema(R) \cap \V_{i-1}]$. Then, $\Pi_\Y (R \ltimes \bm{w}_{i-1})$ is just the fragment $R_\Y (i-1, \bm{w}')$, which has been stored in an array. The starting address and size of the array can be acquired using $\bm{w}'$ in $O(1)$ time, after which a sample can be drawn from the fragment in constant time.

\section{Proof of Lemma \ref{lmm:directed-throwaway-constraints} } \label{app:proof:directed-throwaway-constraints}

Let us rephrase the problem as follows. Let $P = (V_P, E_P)$ be a {\em cyclic} pattern graph. Given an integer $m \ge 1$ and an integer $\lambda \in [1, m]$, define $\dc$ to be a set of degree constraints over $V_P$ that contains two constraints for each edge $(X, Y) \in E_P$: $(\emptyset, \set{X, Y}, m)$ and $(\set{X},$ $\set{X, Y},$ $\lambda)$. The constraint dependence graph $G_\dc$ is exactly $P$ and, hence, is cyclic. We want to prove the existence of an acyclic $\dc' \subset \dc$ such that $\polym(\dc') = \polym(\dc)$. We will first tackle the situation where $\lambda > \sqrt{m}$ before proceeding to the opposite scenario. The former case presents a more intriguing line of argumentation than the latter.

\extraspacing {\bf Case $\bm{\lambda > \sqrt{m}}$.} For every edge $(X, Y) \in G_\dc = (V_P, E_P)$, define two variables: $x_{X, Y}$ and $z_{X, Y}$. Jayaraman et al.\ \cite{jrr21} showed that, for $\lambda > \sqrt{m}$, $\polym(\dc)$ is, up to a constant factor, the optimal value of the following LP (named LP$^{(+)}$ following \cite{jrr21}):
\mytab{
    \> {\bf LP$\bm{^{(+)}}$} \cite{jrr21} \>\>\>\> min $\sum\limits_{(X, Y) \in E_P} x_{X, Y} \log m + z_{X,Y} \log \lambda$ subject to \\[2mm]
    \>\>\>\> $\sum\limits_{(X, A) \in E_P} (x_{X, A} + z_{X, A}) + \sum\limits_{(A, Y) \in E_P} x_{A, Y} \ge 1$ \hspace{10mm} $\forall A \in V_P$ \\
    \>\>\>\> $x_{X, Y} \ge 0$, $z_{X, Y} \ge 0$  \hspace{10mm} $\forall (X, Y) \in E_P$
}

\begin{lemma} \label{lmm:app:throwaway:opt-z-acyclic}
    There exists an optimal solution to LP$^{(+)}$ satisfying the condition that the edges in $\{ (X, Y) \in E_P \mid z_{X,Y} > 0\}$ induce an acyclic subgraph of {\em $G_\dc$}.
\end{lemma}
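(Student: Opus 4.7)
The plan is to bypass any primal perturbation argument and instead exploit LP duality together with complementary slackness. First, I would write down the dual of LP$^{(+)}$: introduce $y_A \ge 0$ for each $A \in V_P$, and maximize $\sum_{A \in V_P} y_A$ subject to $y_X + y_Y \le \log m$ and $y_Y \le \log \lambda$ for every edge $(X, Y) \in E_P$ (these two families of dual constraints come from the primal variables $x_{X, Y}$ and $z_{X, Y}$, respectively; note that $z_{X,Y}$ only contributes to the primal constraint at $Y$). I would then fix an arbitrary primal-dual optimal pair and invoke complementary slackness: whenever $z_{X, Y} > 0$ in the primal, the dual constraint $y_Y \le \log \lambda$ must be tight, i.e., $y_Y = \log \lambda$.

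Next, I would suppose for contradiction that at some primal optimum the edges with $z_{X,Y} > 0$ contain a directed cycle $v_1 \to v_2 \to \dots \to v_\ell \to v_1$ of $G_\dc$. Fixing any dual optimum $y^*$, the above slackness condition forces $y^*_{v_{i+1}} = \log \lambda$ for every $i$ (indices mod $\ell$), so $y^*_v = \log \lambda$ at every cycle vertex. However, each cycle edge $(v_i, v_{i+1})$ lies in $E_P$, so the dual constraint $y^*_{v_i} + y^*_{v_{i+1}} \le \log m$ applies, giving $2 \log \lambda \le \log m$, or equivalently $\lambda \le \sqrt{m}$, contradicting the case hypothesis $\lambda > \sqrt{m}$. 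Hence, at every primal optimum the set $\{(X, Y) \in E_P : z_{X, Y} > 0\}$ induces an acyclic subgraph of $G_\dc$; in particular, such an optimum exists, which is exactly the lemma.

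I do not anticipate a real obstacle: the argument is purely LP-duality bookkeeping, and the only delicate point is matching the strict inequality $\lambda > \sqrt{m}$ against the weak inequality $\lambda \le \sqrt{m}$ derived from a putative cycle, which suffices for a contradiction. The conclusion is in fact stronger than what the lemma requires (every optimum has acyclic $z$-support, not merely some), so the existence claim follows at once.
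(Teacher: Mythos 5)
Your proof is correct and takes a genuinely different route from the paper's. The paper works purely in the primal: from an optimal solution whose $z$-support contains a directed cycle, it picks the cycle edge $(A_1,A_2)$ with minimum $z^*$ and the successor $(A_2,A_3)$, shifts $z$-weight onto $x_{A_2,A_3}$, and zeroes out the corresponding $z$-values, arguing this yields a feasible solution with strictly smaller objective, contradicting optimality. You instead pass to the dual: since $z_{X,Y}$ appears in only the primal constraint at its head $Y$, its dual constraint is $y_Y\le\log\lambda$, and complementary slackness forces $y^*_Y=\log\lambda$ whenever $z^*_{X,Y}>0$; tracing a putative directed cycle in the $z$-support then makes $y^*_v=\log\lambda$ at every cycle vertex, and any cycle edge's dual constraint $y^*_{v_i}+y^*_{v_{i+1}}\le\log m$ yields $2\log\lambda\le\log m$, contradicting $\lambda>\sqrt m$. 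Your argument is shorter, yields the stronger conclusion that \emph{every} optimum has acyclic $z$-support, and avoids a feasibility verification that is actually more delicate than the paper suggests: under the displayed substitution $z'_{A_2,A_3}=0$, the left-hand side of the constraint at $A_3$ drops by $z^*_{A_2,A_3}-z^*_{A_1,A_2}\ge 0$, so that step only goes through if one instead takes $z'_{A_2,A_3}=z^*_{A_2,A_3}-z^*_{A_1,A_2}$ (which still gives the required strict decrease in the objective).
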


We note that while the above lemma is not expressly stated in \cite{jrr21}, it can be extrapolated from the analysis presented in Section H.2 of \cite{jrr21}. Nevertheless, the argument laid out in \cite{jrr21} is quite intricate. Our proof, which will be presented below, incorporates news ideas beyond their argument and is considerably shorter. Specifically, these new ideas are evidenced in the way we formulate a novel LP optimal solution in \eqref{eqn:app:throwaway-1}-\eqref{eqn:app:throwaway-4}.

\begin{proof}[Proof of Lemma~\ref{lmm:app:throwaway:opt-z-acyclic}]
    Consider an arbitrary optimal solution to LP$^{(+)}$ that sets $x_{X,Y} = x^*_{X,Y}$ and $z_{X,Y} = z^*_{X,Y}$ for each $(X,Y) \in E_P$. If the edge set $\{ (X, Y) \in E_P \mid z^*_{X,Y} > 0\}$ induces an acyclic graph, we are done. Next, we consider that $G_\dc$ contains a cycle.

    \vgap

    Suppose that $(A_1, A_2)$ is the edge in the cycle with the smallest $z^*_{A_1, A_2}$ (breaking ties arbitrarily). Let $(A_2, A_3)$ be the edge succeeding $(A_1, A_2)$ in the cycle. It thus follows that $z^*_{A_2, A_3} \ge z^*_{A_1, A_2}$. Define
    \myeqn{
        x'_{A_2, A_3} &=& x^*_{A_2, A_3} + z^*_{A_1, A_2} \label{eqn:app:throwaway-1} \\
        x'_{A_1, A_2} &=& x^*_{A_1, A_2} \label{eqn:app:throwaway-2} \\
        z'_{A_2, A_3} &=& 0 \label{eqn:app:throwaway-3} \\
        z'_{A_1, A_2} &=& 0 \label{eqn:app:throwaway-4}
    }
    For every edge $(X, Y) \in E_P \setminus \{(A_1, A_2), (A_2, A_3)\}$, set $x'_{X,Y} = x^*_{X,Y}$ and $z'_{X,Y} = z^*_{X,Y}$. It is easy to verify that, for every vertex $A \in V_P$, we have
    \myeqn{
        \sum\limits_{(X, A) \in E_P} (x'_{X, A} + z'_{X, A}) + \sum\limits_{(A, Y) \in E_P} x'_{A, Y} \ge
        \sum\limits_{(X, A) \in E_P} (x^*_{X, A} + z^*_{X, A}) + \sum\limits_{(A, Y) \in E_P} x^*_{A, Y}. \nn
    }
    Therefore, $\{x'_{X, Y}, z'_{X, Y} \mid (X, Y) \in E_P\}$ serves as a feasible solution to LP$^{(+)}$. However:
    \myeqn{
        && \Big( \sum\limits_{(X, Y) \in E_P} x'_{X, Y} \log m + z'_{X,Y} \log \lambda \Big)
        -
        \Big( \sum\limits_{(X, Y) \in E_P} x^*_{X, Y} \log m + z^*_{X,Y} \log \lambda \Big) \nn \\
        &=& z^*_{A_1, A_2} \log m - (z^*_{A_1, A_2} + z^*_{A_2, A_3}) \log \lambda \nn \\
        &\le& z^*_{A_1, A_2} \log m - 2 \cdot z^*_{A_1, A_2} \log \lambda \nn \\
        &<& 0
    }
    where the last step used the fact $\lambda^2 > m$. This contradicts the optimality of $\{x^*_{X, Y}, z^*_{X, Y} \mid (X, Y) \in E_P\}$.
\end{proof}

We now build a set $\dc'$ of degree constraints as follows. First, take an optimal solution $\set{x^*_{X, Y}, z^*_{X, Y} \mid (X, Y) \in E_P}$ to LP$^{(+)}$ promised by Lemma~\ref{lmm:app:throwaway:opt-z-acyclic}. Add to $\dc'$ a constraint $(X, \set{X, Y}, \lambda)$ for every $(X, Y) \in E_P$ satisfying $z^*_{X,Y} > 0$. Then, for every edge $(X, Y) \in E_P$, add to $\dc'$ a constraint $(\emptyset, \set{X, Y}, m)$. The $\dc'$ thus constructed must be acyclic. Denote by $G_{\dc'} = (V_P', E_P')$ the degree constraint graph of $\dc'$. Note that $V_P = V_P'$ and $E_P' \subset E_P$.

\begin{lemma} \label{lmm:app:throwaway:help1}
    The {\em $\dc'$} constructed in the above manner satisfies {\em $\polym(\dc') = \Theta(\polym(\dc))$.}
\end{lemma}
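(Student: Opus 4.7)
The plan is to sandwich $\polym(\dc')$ between two quantities each equal to $\Theta(\polym(\dc))$, with the lower bound being essentially free and the upper bound being the heart of the argument.

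\textbf{The easy direction.} Since $\dc' \subset \dc$, every set function $h$ satisfying the rules of $\dc$ also satisfies the (fewer) rules of $\dc'$, i.e., $\H_{\dc'} \supseteq \H_\dc$. Taking the max of $h(V_P)$ over the larger set $\Gamma_{V_P} \cap \H_{\dc'}$ can only be larger, so by \eqref{eqn:polym-C} we immediately obtain $\polym(\dc') \ge \polym(\dc)$.

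\textbf{The hard direction.} I would argue $\polym(\dc') \le O(\polym(\dc))$ by moving to the modular world, where $\dc'$ is clean because it is acyclic. By construction, $G_{\dc'}$ is acyclic (the only non-cardinality constraints kept are those with $z^*_{X,Y}>0$, and by Lemma~\ref{lmm:app:throwaway:opt-z-acyclic} these edges induce a DAG), so Lemma~\ref{lmm:pre-modular-equals-polym} gives $\polym(\dc') = \mb(\dc')$. The next step is to write the dual modular LP of $\dc'$ (using the template in Section~\ref{sec:pre}): its variables are $x_{X,Y}$ for every edge $(X,Y)\in E_P$ (from the cardinality constraints) together with $z_{X,Y}$ for exactly those edges with $z^*_{X,Y}>0$ (from the surviving out-degree constraints), and its objective and covering constraints are precisely those of LP$^{(+)}$ restricted to $z_{X,Y}=0$ for every edge with $z^*_{X,Y}=0$. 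Call this restricted program LP$^{(+)}_{\dc'}$.

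\textbf{Matching the two optima.} Any feasible solution of LP$^{(+)}_{\dc'}$ is feasible for LP$^{(+)}$, so $\mathrm{OPT}(\text{LP}^{(+)}_{\dc'}) \ge \mathrm{OPT}(\text{LP}^{(+)})$. Conversely, the optimal LP$^{(+)}$ solution promised by Lemma~\ref{lmm:app:throwaway:opt-z-acyclic} already sets $z_{X,Y}=0$ whenever $z^*_{X,Y}=0$, so it is itself feasible for LP$^{(+)}_{\dc'}$, giving $\mathrm{OPT}(\text{LP}^{(+)}_{\dc'}) \le \mathrm{OPT}(\text{LP}^{(+)})$. Hence the two LP optima coincide. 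By LP duality, $\mathrm{OPT}(\text{LP}^{(+)}_{\dc'}) = \log \mb(\dc')$, and by the Jayaraman et al.\ characterization recalled in the excerpt, $\mathrm{OPT}(\text{LP}^{(+)}) = \log \Theta(\polym(\dc))$ when $\lambda>\sqrt{m}$. Chaining these identities yields $\polym(\dc') = \mb(\dc') = \Theta(\polym(\dc))$.

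\textbf{The remaining case $\lambda \le \sqrt{m}$.} Here I expect the argument to be much softer: one can either appeal to the corresponding (and simpler) LP characterization from \cite{jrr21}, or observe directly that setting $\nu_A = (\log m)/2$ on every vertex is optimal in the modular LP regardless of whether the $(\{X\},\{X,Y\},\lambda)$ constraints are kept, so $\mb(\dc')=\mb(\dc)$ trivially, and again $\polym(\dc') = \mb(\dc') = \Theta(\polym(\dc))$.

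\textbf{Main obstacle.} The crux is the identification of LP$^{(+)}_{\dc'}$ as exactly the LP dual of the modular LP for $\dc'$, and then leveraging the very specific optimal solution produced by Lemma~\ref{lmm:app:throwaway:opt-z-acyclic} (which turns off $z_{X,Y}$ on all discarded edges simultaneously) to transfer optimality. Everything else is bookkeeping.
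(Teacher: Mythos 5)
Your argument for $\lambda > \sqrt{m}$ is correct and is essentially the paper's proof: you identify the dual modular LP of $\dc'$ with the restriction of LP$^{(+)}$ to the surviving $z$-variables, then transfer optimality both ways using the padded-with-zeros extension and the special optimal solution of Lemma~\ref{lmm:app:throwaway:opt-z-acyclic}; the paper does the same in two small claims (via an intermediate LP$^{(+)}_1$), which you fold into one step. Your ``easy direction'' via $\H_{\dc'} \supseteq \H_\dc$ is in fact cleaner than the paper's phrasing, which loosely treats $\polym(\dc)$ as the modular-LP optimum even though $\dc$ is cyclic. One caveat on scope: Lemma~\ref{lmm:app:throwaway:help1} pertains \emph{only} to the $\dc'$ built in the $\lambda>\sqrt{m}$ case, so your closing paragraph on $\lambda\le\sqrt{m}$ is not part of this lemma (the paper handles that case by an entirely different construction via SCCs and directed star covers); moreover, the specific suggestion there ($\nu_A=(\log m)/2$ for every vertex) is not even feasible for the modular LP when $\lambda<\sqrt{m}$, since the constraint $\nu_Y\le\log\lambda<(\log m)/2$ would be violated.
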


\begin{proof}
    We will first establish $\polym(\dc') \ge \polym(\dc)$. Remember that $\polym(\dc')$ is the optimal value of the modular LP (in its primal form) defined by $\dc'$, as described in Section~\ref{sec:pre}. Similarly, $\polym(\dc)$ is the optimal value of the modular LP defined by $\dc$. Given that $\dc' \subset \dc$, the LP defined by $\dc'$ incorporates only a subset of the constraints found in the LP defined by $\dc$. Therefore, it must be the case that $\polym(\dc') \ge \polym(\dc)$.

    \vgap

    The rest of the proof will show $\polym(\dc') = O(\polym(\dc))$, which will establish the lemma. Consider the following LP:
    \mytab{
        \> {\bf LP$\bm{^{(+)}_1}$} \>\>\>\> min $\sum\limits_{(X, Y) \in E_P} x_{X, Y} \log m + z_{X,Y} \log \lambda$ subject to \\[2mm]
        \>\>\>\> $\sum\limits_{(X, A) \in E_P} x_{X, A} + \sum\limits_{(A, Y) \in E_P} x_{A, Y} + \sum\limits_{(X,A) \in E'_P} z_{X, A} \ge 1$ \hspace{10mm} $\forall A \in \V_P$ \\
        \>\>\>\> $x_{X, Y} \ge 0$, $z_{X, Y} \ge 0$  \hspace{10mm} $\forall (X, Y) \in E_P$
    }
    The condition $(X,A) \in E'_P$ in the first inequality marks the difference between LP$^{(+)}_1$ and LP$^{(+)}$. Note that the two LPs have the same objective function.
    \minipg{0.9\linewidth}{
        {\bf Claim 1:} LP$^{(+)}_1$ and LP$^{(+)}$ have the same optimal value.
    }
    To prove the claim, first observe that any feasible solution $\set{x_{X, Y}, z_{X, Y} \mid (X, Y) \in E_P}$ to LP$^{(+)}_1$ is also a feasible solution to LP$^{(+)}$. Hence, the optimal value of LP$^{(+)}$ cannot exceed that of LP$^{(+)}_1$. On the other hand, recall that earlier we have identified an optimal solution $\set{x^*_{X, Y},$ $z^*_{X, Y} \mid (X, Y) \in E_P}$ to LP$^{(+)}$. By how $\dc'$ is built from that solution and how $G_{\dc'} = (V'_P, E'_P)$ is built from $\dc'$, it must hold that $z^*_{X, Y} = 0$ for every $(X, Y)$ $\in$ $E_P \setminus E'_P$. Hence, $\set{x^*_{X, Y}, z^*_{X, Y} \mid (X, Y) \in E_P}$ makes a feasible solution to LP$^{(+)}_1$. This implies that $\set{x^*_{X, Y}, z^*_{X, Y} \mid (X, Y) \in E_P}$ must be an optimal solution to LP$^{(+)}_1$. Claim 1 now follows.

    \vgap

    Consider another LP:
    \mytab{
        \> {\bf LP$\bm{^{(+)}_2}$} \>\>\>\> min $\sum\limits_{(X, Y) \in E_P} x_{X, Y} \log m + \sum\limits_{(X, Y) \in E_P'} z_{X,Y} \log \lambda$ subject to \\[2mm]
        \>\>\>\> $\sum\limits_{(X, A) \in E_P} x_{X, A} + \sum\limits_{(A, Y) \in E_P} x_{A, Y} + \sum\limits_{(X,A) \in E'_P} z_{X, A} \ge 1$ \hspace{10mm} $\forall A \in \V_P$ \\
        \>\>\>\> $x_{X, Y} \ge 0$  \hspace{10mm} $\forall (X, Y) \in E_P$ \\
        \>\>\>\> $z_{X, Y} \ge 0$  \hspace{10mm} $\forall (X, Y) \in E'_P$
    }
    LP${^{(+)}_2}$ differs from LP${^{(+)}_1}$ in that the former drops the variables $z_{X,Y}$ of those edges $(X, Y) \in E_P \setminus E'_P$. This happens both in the constraints and the objective function.
    \minipg{0.9\linewidth}{
        {\bf Claim 2:} LP$^{(+)}_1$ and LP$^{(+)}_2$ have the same optimal value.
    }
    To prove the claim, first observe that, given a feasible solution $\set{x_{X, Y} \mid (X, Y) \in E_P} \cup \set{z_{X, Y} \mid (X,$ $Y)$ $\in E'_P}$ to LP$^{(+)}_2$, we can extend it into a feasible solution to LP$^{(+)}_1$ by padding $Z_{X, Y}$ $=$ $0$ for each $(X, Y) \in E_P \setminus E'_P$. Hence, the optimal value of LP$^{(+)}_1$ cannot exceed that of LP$^{(+)}_2$. On the other hand, as mentioned before, $\set{x^*_{X, Y}, z^*_{X, Y} \mid (X, Y) \in E_P}$ is an optimal solution to LP$^{(+)}_1$. In this solution, $z^*_{X, Y} = 0$ for every $(X, Y) \in E_P \setminus E'_P$. Thus, $\set{x^*_{X, Y} \mid (X, Y) \in E_P} \cup \set{z^*_{X, Y} \mid (X, Y) \in E'_P}$ makes a feasible solution to LP$^{(+)}_2$, achieving the same objective function value as the optimal value of LP$^{(+)}_1$. Claim 2 now follows.

    \vgap

    Finally, notice that LP$^{(+)}_2$ is exactly the dual modular LP defined by $\dc'$. Hence, $\log (\polym(\dc'))$ is exactly the optimal value of LP$^{(+)}_2$. Thus, $\polym(\dc') = O(\polym(\dc))$ can now be derived from the above discussion and the fact that $\log(\polym(\dc))$ is asymptotically the optimal value of LP$^{(+)}$.
\end{proof}

\noindent {\bf Case $\bm{\lambda \le \sqrt{m}}$.} Let us first define several concepts. A {\em directed star} refers to a directed graph where there are $t \ge 2$ vertices, among which one vertex, designated the {\em center}, has $t-1$ edges (in-coming and out-going edges combined), and every other vertex, called a {\em petal}, has only one edge (which can be an in-coming or out-going edge). Now, consider a directed bipartite graph between $U_1$ and $U_2$, each being an independent sets of vertices (an edge may point from one vertex in $U_1$ to a vertex in $U_2$, or vice versa). A {\em directed star cover} of the bipartite graph is a set of directed stars such that
\myitems{
    \item each directed star is a subgraph of the bipartite graph,
    \item no two directed stars share a common edge, and
    \item every vertex in $U_1 \cup U_2$ appears in exactly one directed star.
}
A directed star cover is {\em minimum} if it has the least number of edges, counting all directed stars in the cover.

\vgap

Next, we review an expression about $\polym(\dc)$ derived in \cite{jrr21}. Find all the strongly connected components (SCCs) of $G_\dc = (V_P, E_P)$. Adopting terms from \cite{jrr21}, an SCC is classified as (i) a {\em source} if it has no in-coming edge from another SCC, or a {\em non-source} otherwise; (ii) {\em trivial} if it consists of a single vertex, or {\em non-trivial} otherwise. Define:
\myitems{
    \item $S$ = the set of vertices in $G_\dc$ each forming a trivial source SCC by itself.
    \item $T$ = the set of vertices in $G_\dc$ receiving an in-coming edge from at least one vertex in $S$.
}
Take a minimum directed star cover of the directed bipartite graph induced by $S$ and $T$. Define
\myitems{
    \item $S_1$ = the set of vertices in $S$ each serving as the center of some directed star in the cover.
    \item $S_2 = S \setminus S_1$.
    \item $T_2$ = the set of vertices in $T$ each serving as the center of some directed star in the cover.
    \item $T_1 = T \setminus T_2$.
}
Note that the meanings of the symbols $S_1, S_2, T_1$, and $T_2$ follow exactly those in \cite{jrr21} for the reader's convenience (in particular, note the semantics of $T_1$ and $T_2$).

\vgap

We now introduce three quantities:
\myitems{
    \item $c_1$: the number of non-trivial source SCCs;
    \item $n_1$: the total number of vertices in non-trivial source SCCs;
    \item $n_2 = |V_P| - n_1 - |S| - |T|$.
}
Jayaraman et al.\ \cite{jrr21} showed:
\myeqn{
    \polym_\dir(m, \lambda, P)
    &=&
    \Theta \Big(
        m^{c_1 + |S|} \cdot \lambda^{n_1 + n_2 + |T_1| - 2c_1 - |S_1|}
    \Big).
    \label{eqn:app:polym-dir-jrr21}
}

Let $G'_\dc = (V'_P, E_P')$ be an arbitrary weakly-connected acyclic subgraph of $G_\dc$ satisfying all the conditions below.
\myitems{
    \item $V_P = V_P'$.

    \item $E_P'$ contains all the edges in the minimum directed star cover identified earlier.

    \item In each non-trivial source SCC, every vertex, except for one, has one in-coming edge included in $E'_P$. We will refer to the vertex $X$ with no in-coming edges in $E'_P$ as the SCC's {\em root}. The fact that every other vertex $Y$ in the SCC has an in-coming edge in $E'_P$ implies $(X, Y) \in E'_P$ for at least one $Y$. We designate one such $(X, Y)$ as the SCC's {\em main edge}.

    \item In each non-trivial non-source SCC, every vertex has an in-coming edge included in $E'_P$.
}
It is rudimentary to verify that such a subgraph $G'_\dc$ must exist.

\vgap

From $G_\dc = (V_P, E_P)$ and $G'_\dc = (V_P', E_P')$, we create a set $\dc'$ of degree constraints as follows.
\myitems{
    \item For each edge $(X, Y) \in E_P$ (note: not $E_P'$), add a constraint $(\emptyset, \set{X, Y}, m)$ to $\dc'$.

    \item We inspect each directed star in the minimum directed star cover and distinguish two possibilities.
    \myitems{
        \item Scenario 1: The star's center $X$ comes from $S_1$. Let the star's petals be $Y_1, Y_2, ..., Y_t$ for some $t \ge 1$; the ordering of the petals does not matter. For each $i \in [t-1]$, we add a constraint $(\set{X}, \set{X, Y_i}, \lambda)$ to $\dc'$. We will refer to $(X, Y_t)$ as the star's {\em main edge}.

        \item Scenario 2: The star's center $X$ comes from $T_2$. Nothing needs to be done.
    }

    \item Consider now each non-trivial source SCC. Remember that every vertex $Y$, other than the SCC's root, has an in-coming edge $(X, Y) \in E_P'$. For every such $Y$, if $(X, Y)$ is not the SCC's main edge, add a constraint $(\set{X}, \set{X, Y}, \lambda)$ to $\dc'$.

    \item Finally, we examine each non-source SCC. As mentioned, every vertex $Y$ in such an SCC has an in-coming edge $(X, Y) \in E_P'$. For every $Y$, add a constraint $(\set{X}, \set{X, Y}, \lambda)$ to $\dc'$.
}

The rest of the proof will show $\polym(\dc') = \Theta(\polym(\dc))$. As $\dc' \subset \dc$, we must have $\polym(\dc') \ge \polym(\dc)$ following the same reasoning used in the $\lambda > \sqrt{m}$ case.

\vgap

We will now proceed to argue that $\polym(\dc') = O(\polym(\dc))$. Recall that $\log (\polym(\dc'))$ is the optimal value of the dual modular LP of $\dc'$ (see Section~\ref{sec:pre}). On the other hand, the value of $\polym(\dc)$ satisfies \eqref{eqn:app:polym-dir-jrr21}. In the following, we will construct a feasible solution to the dual modular LP of $\dc'$ under which the LP's objective function achieves the value of
\myeqn{
    \Big((c_1 + |S|) \cdot \log m \Big)
    +
    (n_1 + n_2 + |T_1| - 2c_1 - |S_1|) \cdot \log \lambda
    \label{eqn:app:throwaway-5}
}
which will be sufficient for proving Lemma~\ref{lmm:app:throwaway:opt-z-acyclic}.

\vgap

The dual modular LP associates every constraint $(\X, \Y, N_{\Y\mid\X}) \in \dc'$ with a variable $\delta_{\Y \mid \X}$. We determine these variables' values as follows.
\myitems{
    \item For every constraint $(\X, \Y, N_{\Y\mid\X}) \in \dc'$ where $N_{\Y\mid\X} = \lambda$, set $\delta_{\Y \mid \X} = 1$.
    \item Consider each directed star in the minimum directed star.
    \myitems{
        \item Scenario 1: The star's center $X$ comes from $S_1$. For the star's main edge $(X, Y)$, the constraint $(\emptyset, \set{X, Y}, m)$ exists in $\dc'$. Set $\delta_{\set{X, Y} \mid \emptyset} = 1$.

        \item Scenario 2: The star's center $X$ comes from $T_2$. For every petal $Y$ of the star, the constraint $(\emptyset, \set{X, Y}, m)$ exists in $\dc'$. Set $\delta_{\set{X, Y} \mid \emptyset} = 1$.
    }
    \item Consider each non-trivial source SCC. Let $(X, Y)$ be the main edge of the SCC. The constraint $(\emptyset, \set{X, Y}, m)$ exists in $\dc'$. Set $\delta_{\set{X, Y} \mid \emptyset} = 1$.
}
The other variables that have not yet been mentioned are all set to 0.

\vgap

It is tedious but straightforward to verify that all the constraints of the dual modular LP are fulfilled. To confirm that the objective function indeed evaluates to \eqref{eqn:app:throwaway-5}, observe:
\myitems{
    \item There are $c_1 + |S|$ constraints of the form $(\emptyset, \set{X, Y}, m)$ with $\delta_{\set{X,Y} \mid \emptyset} = 1$. Specifically, $c_1$ of them come from the roots of the non-trivial source SCCs, $|S_1|$ of them come from the star center vertices in $S_1$, and $|S_2|$ of them come from the petal vertices in $S_2$.

    \item There are $n_1 + n_2 + |T_1| - 2c_1 - |S_1|$ of the form $(\set{X}, \set{X, Y}, \lambda)$ with $\delta_{\set{X,Y} \mid \set{X}} = 1$. Specifically, $n_1 - 2c_1$ of them come from the non-main edges of the non-trivial source SCCs, $n_2$ of them come from the vertices that are not in any non-trivial source SCC and are not in $S \cup T$, and $|T_1|-|S_1|$ of them come from the petal vertices that are (i) in $T_1$ but (ii) not in the main edges of their respective stars.
}

We now conclude the whole proof of Lemma~\ref{lmm:directed-throwaway-constraints}.

\section{Proof of Lemma \ref{lmm:polym-undirected-closed} } \label{app:proof:polym-undirected-closed}

By definition, $\log (\polym_\undir(m,\lambda,P))$ equals the optimal value of the following LP:
\mytab{
    \> {\bf Polymatroid LP:} maximize $h(V_P)$, from all set functions $h(\cdot)$ over $V_P$, subject to\\
    \> (I) \> \hspace{2mm}$h(\emptyset) = 0$ \\
    \> (II) \> \hspace{2mm}$h(\set{X, Y}) \leq \log m$ $\quad$ $\forall \set{X, Y} \in E_P$ \\
    \> (III) \> \hspace{2mm}$h(\set{X, Y})-h(\set{X}) \leq \log \lambda$ $\quad$ $\forall \set{X, Y} \in E_P$ \\
    \> (IV) \> \hspace{2mm}$h(\X \cup \Y)+h(\X \cap \Y) \leq h(\X) + h(\Y)$ $\quad$ $\forall \X, \Y \subseteq V_P$ \\
    \> (V) \> \hspace{2mm}$h(\X) \leq h(\Y)$ $\quad$ $\forall \X \subseteq \Y \subseteq V_P$
}
\noindent To see that the above is an instance of linear programming, one can view a set function $h$ over $V_P$ as a point in a $2^{|V_P|}$-dimensional space, where each dimension is a different subset of $V_P$. Consequently, for any subset $\X \subseteq V_P$, $h(\X)$ can be regarded as the ``coordinate'' of this point on the dimension $\X$.

\subsection{Upper Bounding the Objective Function}

First, we will show that \eqref{eqn:polym-undirected-closed} is an upper bound of $\polym_\undir(m,\lambda,P)$. More precisely, for every feasible solution $h(.)$ to the polymatroid LP, we will prove:
\myeqn{
        h(V_P) &\le& \log m + (k-2) \log \lambda \label{eqn:polymat:upper1} \\
        h(V_P) &\le& (k_\mit{cycle}/2+\beta)\log m + (k_\mit{star} - 2\beta) \log \lambda. \label{eqn:polymat:upper2}
}
This implies that $\exp_2 (h(V_P))$ is always bounded by the right hand side of \eqref{eqn:polym-undirected-closed} and, hence, so is $\polym_\undir(m, \lambda, P)$.

\extraspacing{\bf Proof of \eqref{eqn:polymat:upper1}}.
Let us order the vertices in $V_P$ as $A_1, A_2, ..., A_k$ with the property that, for each $i \in [2, k]$, the vertex $A_i$ is
adjacent in $P$ to at least one vertex $A_j$ with $j < i$. We will denote this value of $j$ as $\back(i)$. Such an ordering
definitely exists because $P$ is connected. For $i \ge 3$, define
$A_{[i]} = \set{A_1, A_2, ...,  A_i}$.

To start with, let us observe that, by the constraint (IV) of the polymatroid LP, the inequality
\myeqn{
    h(A_{[i]}) + h(A_{\back(i)}) \le h(A_{[i-1]}) + h(A_{\back(i)}, A_i)
    \label{eqn:polymat:upper1-help}
}
holds for all $i \in [2, k]$. Thus:
\myeqn{
    h(V_P) &=& h(A_1, A_2) + \sum_{i=3}^k h(A_{[i]}) - h(A_{[i-1]}) \nn \\
    \explain{by \eqref{eqn:polymat:upper1-help}}
    &\le& h(A_1, A_2) + \sum_{i=3}^k h(A_{\back(i)}, A_i) - h(A_{\back(i)}) \nn \\
    &\le& \log m + (k - 2) \log \lambda. \nn
}
The last step used constraints (II) and (III) of the polymatroid LP.

\extraspacing {\bf Proof of \eqref{eqn:polymat:upper2}.} For each cycle $\mycycle_i$ ($i \in [\alpha]$) in the fractional edge-cover decomposition of $P$, define $V(\mycycle_i)$ as the set of vertices in $\mycycle_i$ and set $k(\mycycle_i) = |V(\mycycle_i)|$. Likewise, for each star $\mystar_j$ ($j \in [\beta]$), define $V(\mystar_j)$ as the set of vertices in $\mystar_j$ and set $k(\mystar_j) = |V(\mystar_j)|$. We aim to prove

\myeqn{
    && \text{for each $i \in [\alpha]$: }
    h(V(\mycycle_i)) \le (k(\mycycle_i) / 2) \log m; \label{eqn:polymat:upper2-help1} \\
    && \text{for each $j \in [\beta]$: }
    h(V(\mystar_j)) \le \log m + (k(\mystar_j) - 2) \log \lambda.  \label{eqn:polymat:upper2-help2}
}

Once this is done, we can establish \eqref{eqn:polymat:upper2} as follows. First, from constraint (IV), we know $h(\X \cup \Y) \le h(\X) + h(\Y)$ for any disjoint $\X, \Y \subseteq V_P$; call this {\em the disjointness rule}. Then, we can derive
\myeqn{
    h(V_P) &\le& \sum_{i=1}^\alpha h(V(\mycycle_i)) + \sum_{j=1}^\beta h(V(\mystar_j))
     \hspace{2mm} \explain{disjointness rule} \nn \\
    \explain{by \eqref{eqn:polymat:upper2-help1}, \eqref{eqn:polymat:upper2-help2}}
    &\le&
    \sum_{i=1}^\alpha \fr{k(\mycycle_i)}{2} \log m + \sum_{j=1}^\beta (\log m + (k(\mystar_j) - 2) \log \lambda) \nn \\
    &=&
    (k_\mit{cycle}/2+\beta)\log m + (k_\mit{star} - 2\beta) \log \lambda \nn
}
as desired.

We proceed to prove \eqref{eqn:polymat:upper2-help1}. Let us arrange the vertices of $\mycycle_i$ in clockwise order as $X_1, X_2, ..., X_{k(\mycycle_i)}$. For $t \ge 3$, define $X_{[t]} = \set{X_1, X_2, ...,  X_t}$. Applying the disjointness rule, we get
\myeqn{
    h(V(\mycycle_i)) &\le&
    h(X_1, X_{k(\mycycle_i)}) + \sum_{t=2}^{k(\mycycle_i) - 1} h(X_t).
    \label{eqn:polymat:upper2-help3}
}
Equipped with the above, we can derive
\myeqn{
    \hspace{-6mm} &&
    k(\mycycle_i) \cdot \log m \hspace{2mm} \explain{by constraint (II)} \nn \\
    \hspace{-6mm} &&
    \ge
    \Big(\sum_{t=1}^{k(\mycycle_i) - 1} h(X_t, X_{t+1})\Big)
    +
    h(\set{X_{k(\mycycle_i)}, X_1}) \nn \\
    \hspace{-6mm} &&
    \explain{the next few steps will apply constraint (IV)} \nn \\
    \hspace{-6mm} &&
    \ge h(X_{[3]}) + h(X_2) +
    \Big(\sum_{t=3}^{k(\mycycle_i) - 1} h(X_t, X_{t+1})\Big)
    + h(X_{k(\mycycle_i)}, X_1) \nn
}
\myeqn{
    \hspace{-6mm} &&
    \ge
    h(X_{[4]}) + h(X_2) + h(X_3) +
    \Big(\sum_{t=4}^{k(\mycycle_i) - 1} h(X_t, X_{t+1})\Big)
    +
    h(X_{k(\mycycle_i)}, X_1) \nn \\
    \hspace{-6mm} && ... \nn \\
    \hspace{-6mm} && \ge
    h(X_{[k(\mycycle_i)]}) + \Big(\sum_{t=2}^{k(\mycycle_i) - 1} h(X_t) \Big) +
    h(X_{k(\mycycle_i)}, X_1)
    \nn \\
    \hspace{-6mm} &&
    \explain{the next step applies \eqref{eqn:polymat:upper2-help3} and $V(\mycycle_i) = X_{[k(\mycycle_i)]}$} \nn \\
    \hspace{-6mm} &&
    \ge 2 \cdot h(V(\mycycle_i))
    \hspace{5mm} \nn
}
as claimed in \eqref{eqn:polymat:upper2-help1}.

It remains to prove \eqref{eqn:polymat:upper2-help2}. Let us label the vertices in $\mystar_j$ as $Y_1, Y_2, ...,$ $Y_{k(\mystar_j)}$ with $Y_1$ being the center vertex. For $t \ge 3$, define $Y_{[t]} = \set{Y_1,$ $Y_2, ...,  Y_t}$. Then:
\myeqn{
    h(V(\mystar_j)) &=&
    h(Y_1, Y_2) + \sum_{t=3}^{k(\mystar_j)} h(Y_{[t]}) - h(Y_{[t-1]}) \nn \\
    \explain{by constraint (IV)}
    &\le&
    h(Y_1, Y_2) + \sum_{t=3}^{k(\mystar_j)} (h(Y_1, Y_t) - h(Y_1)) \nn \\
    \explain{by constraints (II), (III)}
    &\le&
    \log m + (k(\mystar_j) - 2) \log \lambda \nn
}
as claimed in \eqref{eqn:polymat:upper2-help2}.

\subsection{Constructing an Optimal Set Function}

To prove Lemma~\ref{lmm:polym-undirected-closed}, we still need to show that $\polym_\undir(m,\lambda,P)$ is at least the right hand side of \eqref{eqn:polym-undirected-closed}. For this purpose, it suffices to prove (i) when $\lambda \le \sqrt{m}$, there is a feasible set function $h^*$ whose $h^*(V_P)$ equals the right hand side of \eqref{eqn:polymat:upper1}, and (ii) when $\lambda > \sqrt{m}$, there is a feasible set function $h^*$ whose $h^*(V_P)$ equals the right hand side of \eqref{eqn:polymat:upper2}. This subsection will construct these set functions explicitly.

\extraspacing {\bf Case $\lambda \le \sqrt{m}$.} In this scenario, the function $h^*$ is easy to design. For each $\X \subseteq V_P$, set
\myeqn{
    h^*(\X) =
    \begin{cases}
        0 & \text{if $\X = \emptyset$} \\
        \log m + (|\X| - 2) \log \lambda & \text{otherwise}
    \end{cases}
    \label{eqn:polymat:h*-small-lambda}
}
Obviously, $h^*(V_P) = \log m + (|V_P| - 2) \log \lambda$, as needed. It remains to explain why this $h^*$ is a feasible solution to the polymatroid LP. We prove as follows.

\vgap

Constraints (I), (II), and (III) are trivial to verify and omitted. Regarding (IV), first note that the constraint is obviously true if $\X$ or $\Y$ is empty. If neither of them is empty, we can derive:
\myeqn{
    h^*(\X\cup \Y) + h^*(\X \cap \Y)
    \hspace{-2mm} &=&
    \hspace{-2mm} 2\log m + (|\X \cup \Y| + |\X \cap \Y|-4) \log \lambda \nn \\
        &=&
        \hspace{-2mm} 2\log m + (|\X| + |\Y| - 4) \log \lambda \nn \\
        &=&
        \hspace{-2mm} h^*(\X) + h^*(\Y) \label{eqn:app:h*-verification-help1}
}
as needed. Now, consider constraint (V). If $\X = \emptyset$, the constraint holds because $h^*(\Y) = \log m + (|\Y|-2) \log \lambda \ge \log m - \log \lambda \ge 0$. If $X \ne \emptyset$, then
\myeqn{
    h^*(\X) &=& \log m + (|\X| - 2) \log \lambda \nn \\
    &\leq& \log m + (|\Y| - 2) \log \lambda \nn \\
    &=& h^*(\Y). \nn
}

\extraspacing {\bf Case $\lambda > \sqrt{m}$.} Let us look at the fractional edge-cover decomposition of $P$:  $(\mycycle_1, ..., \mycycle_\alpha$, $\mystar_1, ...,$  $\mystar_\beta)$. As before, for each $i \in [\alpha]$, define $V(\mycycle_i)$ as the set of vertices in the cycle $\mycycle_i$; for each $j \in [\beta]$, define $V(\mystar_j)$ as the set of vertices in the star $\mystar_j$.

\vgap

To design the function $h^*$, for each $\X \subseteq V_P$, we choose the value $h^*(\X)$ by the following rules.
\myitems{
    \item (Rule 1): If $\X = \emptyset$, then $h^*(\X) = 0$.
    \item (Rule 2): if $\X \subseteq V(\mycycle_i)$ for some $i \in [\alpha]$ but $\X \ne \emptyset$, then $h^*(\X) = \fr{|\X|}{2} \log m$.

    \item (Rule 3): if $\X \subseteq V(\mystar_j)$ for some $j \in [\beta]$ but $\X \ne \emptyset$, then $h^*(\X) = \log m + (|\X| - 2) \log \lambda$.

    \item (Rule 4): Suppose that none of the above rules applies. For each $i \in [\alpha]$, define $\Y_i = \X \cap V(\mycycle_i)$; similarly, for each $j \in [\beta]$, define $\Z_j = \X \cap V(\mystar_j)$. Then:
    \myeqn{
        h^*(\X) &=& \sum_{i=1}^\alpha h^*(\Y_i) + \sum_{j=1}^\beta h^*(\Z_j). \label{eqn:polymat:h*-large-lambda-rule3}
    }
    The above equation is well-defined because each $h^*(\Y_i)$ can be computed using Rules 1 and 2, and each $h^*(\Z_j)$ can be computed using Rules 1 and 3.
}
As a remark, our construction ensures that \eqref{eqn:polymat:h*-large-lambda-rule3} holds for all $\X \subseteq V_P$.

\vgap

It is easy to check that $h^*(V_P) = \sum_{i=1}^\alpha h^*(V(\mycycle_i)) + \sum_{j=1}^\beta h^*(V(\mystar_j))$, which is $(k_\mit{cycle}/2+\beta)\log m + (k_\mit{star} - 2\beta) \log \lambda$, as needed. It suffices to verify the feasibility of $h^*$.

\vgap

Constraint (I) is guaranteed by Rule 1. Next, we will discuss constraints (II) and (III) together. The verification is easy (and hence omitted) if $\set{X, Y}$ is a cycle edge or a star edge. Now, consider that $\set{X, Y}$ is neither a cycle edge nor a star edge. By the properties of fractional edge-cover decomposition, one of the following must occur:
\myitems{
    \item (C-1) $X$ and $Y$ are in two different cycles;
    \item (C-2) $X$ and $Y$ are in two different stars;
    \item (C-3) one of $X$ and $Y$ is in a cycle and the other is in a star.
}
In all the above scenarios, it holds that $h^*(\set{X,Y}) = h^*(\set{X}) + h^*(\set{Y})$. Thus, to confirm (II), it suffices to show $h^*(\set{X}) + h^*(\set{Y}) \le \log m$, and to confirm (III), it suffices to show $h(\set{Y}) \le \log \lambda$. It is rudimentary to verify both of these inequalities using Rules 2 and 3 and the fact $\log m < 2 \log \lambda$.

\vgap

Constraint (IV) is trivially true if either $\X$ or $\Y$ is empty. Now, assume that neither is empty. If $\X, \Y \subseteq \V(\mycycle_i)$ for some $i \in [\alpha]$, we have:
\begin{equation*}
    \begin{aligned}
        h^*(\X \cup \Y) + h^*(\X \cap \Y) &= \frac{|\X \cap \Y| + |\X \cup \Y|}{2}\log m \\
        &= \frac{|\X| + |\Y|}{2}\log m\\
        &= h^*(\X) + h^*(\Y).
    \end{aligned}
\end{equation*}
If $\X, \Y \subseteq \V(\mystar_j)$ for some $j \in [\beta]$, the reader can verify (IV) with the same derivation in \eqref{eqn:app:h*-verification-help1}.

\vgap

Next, we consider arbitrary $\X, \Y \subseteq \V$. Define for each $i \in [\alpha]$ and $j \in [\beta]$:
\myeqn{
    \X^{\mit{cycle}}_i &=& \X \cap \V(\mycycle_i) \label{eqn:app:h*-verification-help4} \\
    \X^{\mit{star}}_j &=& \X \cap \V(\mystar_j) \label{eqn:app:h*-verification-help5} \\
    \Y^{\mit{cycle}}_i &=& \Y \cap \V(\mycycle_i) \label{eqn:app:h*-verification-help6} \\
    \Y^{\mit{star}}_j &=& \Y \cap \V(\mystar_j). \label{eqn:app:h*-verification-help7}
}
We can derive:
\myeqn{
    &&h^*(\X \cup \Y) \nn \\
    \explain{by Rule 4} &=& \sum_{i=1}^\alpha h^*((\X \cup \Y) \cap \V(\mycycle_i)) + \sum_{j=1}^\beta h^*((\X \cup \Y) \cap \V(\mystar_j))\nn \\
    &=& \sum_{i=1}^\alpha h^*(\X^\mit{cycle}_i \cup \Y^\mit{cycle}_i) + \sum_{j=1}^\beta h^*(\X^\mit{star}_j \cup \Y^\mit{star}_j) \nn
}
and similarly:
\myeqn{
    &&h^*(\X \cap \Y) \nn \\
    &=& \sum_{i=1}^\alpha h^*((\X \cap \Y) \cap \V(\mycycle_i)) + \sum_{j=1}^\beta h^*((\X \cap \Y) \cap \V(\mystar_j))\nn \\
    &=& \sum_{i=1}^\alpha h^*(\X^\mit{cycle}_i \cap \Y^\mit{cycle}_i) + \sum_{j=1}^\beta h^*(\X^\mit{star}_j \cap \Y^\mit{star}_j). \nn
}
Recall that (IV) has been validated in the scenario where (i) $\X$ and $\Y$ are contained in the same cycle, or (ii) they are contained in the same star. Thus, it holds for each $i \in [\alpha]$ and $j \in [\beta]$ that
\myeqn{
    \hspace{-4mm}
    h^*(\X^{\mit{cycle}}_i \cup \Y^{\mit{cycle}}_i) + h^*(\X^{\mit{cycle}}_i \cap \Y^{\mit{cycle}}_i) \leq h^*(\X^{\mit{cycle}}_i) + h^*(\Y^{\mit{cycle}}_i) \label{eqn:app:h*-verification-help2}\\
    \hspace{-4mm}
    h^*(\X^{\mit{star}}_j \cup \Y^{\mit{star}}_j) + h^*(\X^{\mit{star}}_j \cap \Y^{\mit{star}}_j) \leq h^*(\X^{\mit{star}}_j) + h^*(\Y^{\mit{star}}_j). \label{eqn:app:h*-verification-help3}
}
Equipped with these facts, we get:
\myeqn{
    &&h^*(\X \cup \Y) + h^*(\X \cap \Y) \nn \\
    &=& \sum_{i=1}^{\alpha} h^*(\X^{\mit{cycle}}_i \cup \Y^{\mit{cycle}}_i) + h^*(\X^{\mit{cycle}}_i \cap \Y^{\mit{cycle}}_i) + \nn \\
    &&\sum_{j=1}^{\beta} h^*(\X^{\mit{star}}_j \cup \Y^{\mit{star}}_j) + h^*(\X^{\mit{star}}_j \cap \Y^{\mit{star}}_j) \nn \\
    \explain{by \eqref{eqn:app:h*-verification-help2} and \eqref{eqn:app:h*-verification-help3}} &\leq& \sum_{i=1}^{\alpha} h^*(\X^{\mit{cycle}}_i) + h^*(\Y^{\mit{cycle}}_i) + \sum_{j=1}^{\beta} h^*(\X^{\mit{star}}_j) + h^*(\Y^{\mit{star}}_j) \nn \\
    \explain{by Rule 4} &=& h^*(\X) + h^*(\Y).\nn
}
This verifies the correctness of constraint (IV).

\vgap

Finally, let us look at (V). This constraint is trivially met if $\X = \emptyset$. Next, we assume that $\X$ is not empty. If $\Y$ is a subset of $\V(\mycycle_i)$ for some $i \in [\alpha]$, then, by Rule 2, $h^*(\X) = \frac{|\X|}{2} \log m \leq \frac{|\Y|}{2} \log m = h^*(\Y)$. If $\Y$ is a subset of $\V(\mystar_{j})$ for some $j \in [\beta]$, then, by Rule 3, $h^*(\X) = \log m + (|\X|-2) \log \lambda \leq \log m + (|\Y|-2) \log \lambda = h^*(\Y)$.

\vgap

It remains to consider the situation where $\Y$ can be any subset of $\V$. Using the definitions in \eqref{eqn:app:h*-verification-help4}-\eqref{eqn:app:h*-verification-help7}, we have:
\myeqn{
    h^*(\X) &=& \sum_{i=1}^{\alpha} h^*(\X^{\mit{cycle}}_i) + \sum_{j=1}^{\beta} h^*(\X^{\mit{star}}_j) \hspace{3mm} \explain{by Rule 4} \nn \\
    &\leq& \sum_{i=1}^{\alpha} h^*(\Y^{\mit{cycle}}_i) + \sum_{j=1}^{\beta} h^*(\Y^{\mit{star}}_j) \nn \\
    && \textrm{(we have verified (V) in the scenario where $\Y$ is contained in a cycle or a star)} \nn \\
    \explain{by Rule 4} &=& h^*(\Y)  \nn
}
as needed to verify constraint (V).

\section{Equation \eqref{eqn:polym-undirected-closed} as an Output Size Bound and Its Tightness} \label{app:tightness-undirected}

We will start by proving that \eqref{eqn:polym-undirected-closed} is an asymptotic upper bound on $|\occ(G,P)|$ for any graph $G$ that has $m$ edges and a maximum vertex-degree at most $\lambda$. First, we will prove that $|\occ(G,P)| = O(m \cdot \lambda^{k-2})$ (recall that $k$ is the number of vertices in $P$). Identify an arbitrary spanning tree $\T$ of the pattern $P$. It is rudimentary to verify that $\T$ has $O(m \cdot \lambda^{k-2})$ occurrences in $G$,\footnote{There are $O(m)$ choices to map an arbitrary edge $\T$ to an edge in $G$, and then $\lambda$ choices to map each of the $k-2$ remaining vertices of $\T$ to a vertex in $G$.} implying that the number of occurrences of $P$ is also $O(m \cdot \lambda^{k-2})$. Next, we will demonstrate that $|\occ(G,P)| = O(m^{\fr{k_\mit{cycle}}{2} + \beta} \cdot \lambda^{k_\mit{star}-2\beta})$. For each $i \in [\alpha]$, let $k(\mycycle_i)$ be the number of vertices in $\mycycle_i$; for each $j \in [\beta]$, let $k(\mystar_j)$ be the number of vertices in $\mystar_j$. The fractional edge cover number of $\mycycle_i$ ($i \in [\alpha]$) is $\rho^*(\mycycle_i) = k(\mycycle_i)/2$; this means $\sum_{i=1}^k \rho^*(\mycycle_i) = k(\mycycle_i)/2 = k_\mit{cycle}/2$. For each $i \in [\alpha]$, the pattern $\mycycle_i$ can have $O(m^{\rho^*(\mycycle_i)})$ occurrences in $G$. For each $j \in [\beta]$, by our earlier analysis, the pattern $\mystar_j$ can have $O(m^{k(\mystar_j) - 2})$ occurrences in $G$. Thus, the number of occurrences of $P$ must be asymptotically bounded by
\myeqn{
    \prod_{i=1}^\alpha m^{\rho^*(\mycycle_i)} \cdot \prod_{j=1}^\beta m \cdot \lambda ^{k(\mystar_j) - 2}
    =
    m^{\fr{k_\mit{cycle}}{2} + \beta} \cdot \lambda^{k_\mit{star}-2\beta}. \nn
}

The rest of this section will concentrate on the tightness of \eqref{eqn:polym-undirected-closed} as an upper bound of $|\occ(G,P)|$. Our objective is to prove:

\begin{theorem} \label{thm:tightness-undirected}
    Fix an arbitrary integer $k \ge 2$. For any values of $m$ and $\lambda$ satisfying $m \geq \max\{16k^2,64\}$ and $k \leq \lambda \leq m/(4k)$, there is an undirected simple graph $G^*$ satisfying all the conditions below:
    \myitems{
        \item $G^*$ has at most $m$ edges and and a maximum vertex degree at most $\lambda$;
        \item For any undirected simple pattern graph $P = (V_P, E_P)$ that has $k$ vertices, the number of occurrences of $P$ in $G^*$ is at least $\fr{1}{(4k)^k} \cdot \polym_\undir(m,\lambda,P)$, where $\polym_\undir(m,\lambda,P)$ is given in \eqref{eqn:polym-undirected-closed}.
    }
\end{theorem}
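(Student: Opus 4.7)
The plan is to construct $G^*$ by splitting on whether $\lambda \le \sqrt{m}$ or $\lambda > \sqrt{m}$, mirroring the two branches of the closed form of $\polym_\undir(m, \lambda, P)$ established in Lemma~\ref{lmm:polym-undirected-closed}. Each regime calls for a different family of graphs, and in each one the count lower bound must be verified against every $k$-vertex pattern $P$ simultaneously.

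In the regime $\lambda \le \sqrt{m}$, I would take $G^*$ to be the vertex-disjoint union of $t = \lfloor m/\binom{\lambda+1}{2}\rfloor$ copies of the clique $K_{\lambda+1}$. This has at most $m$ edges and maximum degree exactly $\lambda$, and every $k$-vertex pattern $P$ is a subgraph of $K_{\lambda+1}$ because $\lambda + 1 \ge k + 1$. Each clique then contributes at least $(\lambda+1)(\lambda)\cdots(\lambda-k+2)/|\mathrm{Aut}(P)| \ge (\lambda/2)^k/k!$ occurrences, and multiplying by $t \gtrsim m/\lambda^2$ yields a total count of $\Omega(m\lambda^{k-2}/k!)$, which comfortably exceeds $m\lambda^{k-2}/(4k)^k$ since $(4k)^k/k! \ge 4^k \ge 1$. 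The edge cases $k \le \lambda < 2k$ and $t = 1$ (near $\lambda = \sqrt{m}$) are absorbed using the hypothesis $m \ge 16k^2$ together with the slack available in $(4k)^k$.

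In the regime $\lambda > \sqrt{m}$, I would take a composite $G^*$ combining a dense clique $K_{n_0}$ with $n_0 = \Theta(\sqrt{m})$ (to supply odd cycles) and a ``hub'' vertex $c$ joined to every vertex of $K_{n_0}$ plus additional leaf vertices up to a total degree of $\lambda$ (to supply the high-degree structure). The parameter $n_0$ is tuned so that $\binom{n_0}{2} + \lambda \le m$ and $n_0 \le \lambda$; the resulting graph has at most $m$ edges and maximum degree $\lambda$, achieved uniquely at $c$. For any $k$-vertex pattern $P$, I would count occurrences by embedding the highest-degree vertex of $P$ (typically a star center in the fractional edge-cover decomposition) at $c$, placing each odd-cycle piece $\mycycle_i$ inside $K_{n_0}$, and assigning each star's petal vertices among the neighbors of $c$. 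A decomposition-guided count, exploiting that $c$ is adjacent to all of $K_{n_0}$, then gives at least $(4k)^{-k} \cdot m^{k_\mit{cycle}/2 + \beta}\lambda^{k_\mit{star} - 2\beta}$ occurrences.

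The principal obstacle lies in the second regime: proving the count lower bound uniformly for every $k$-vertex $P$ requires a careful case analysis driven by the shape of the fractional edge-cover decomposition, since purely cyclic, purely star-like, and mixed patterns utilize $K_{n_0}$, $c$, and the leaves in genuinely different ways. The analysis must also accommodate the transitional range $\sqrt{m} < \lambda \le \sqrt{2m}$, where $n_0$ is close to $\lambda$ and the construction has to be tweaked slightly (e.g., setting $n_0 = \lambda$ and omitting some clique edges) to stay within the edge budget. I would lean on the structural inequalities \eqref{eqn:polymat:upper2-help1} and \eqref{eqn:polymat:upper2-help2} from the proof of Lemma~\ref{lmm:polym-undirected-closed} to streamline the decomposition-based counting. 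My fallback, should the explicit construction become unwieldy at a corner case, is a probabilistic construction via $G(n, p)$ with parameters $n = \lceil 2m/\lambda\rceil$ and $p = \lambda/(n-1)$, combined with Chernoff concentration and a union bound over the $\le 2^{\binom{k}{2}} = O_k(1)$ distinct $k$-vertex patterns plus a slack-and-delete step to enforce the hard maximum-degree cap.
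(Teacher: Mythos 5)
Your first branch ($k\le\lambda<\sqrt m$: disjoint cliques of size $\lambda+1$) is essentially the same construction the paper uses (disjoint $\lambda$-cliques), and works.

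The second branch has a genuine gap that the claimed construction cannot recover from. A single hub vertex $c$ is not enough. Take $P$ to be a star on $k$ vertices. Then $k_{\mathit{cycle}}=0$, $\beta=1$, $k_{\mathit{star}}=k$, so $\polym_\undir(m,\lambda,P)=m\lambda^{k-2}$. In your $G^*$, the only vertex of degree $\Theta(\lambda)$ is $c$, so the number of $k$-stars is at most $\binom{\lambda}{k-1}+n_0\binom{n_0}{k-1}=O\bigl(\lambda^{k-1}+m^{k/2}\bigr)$. Since $\lambda\le m/(4k)$ we have $\lambda^{k-1}\le m\lambda^{k-2}/(4k)$, and since $\lambda>\sqrt m$ we have $m^{k/2}<m\lambda^{k-2}$; writing $\lambda=m^a$ with $a\in(1/2,1)$, the shortfall factor is $m^{\min(1-a,\,1+a(k-2)-k/2)}$, which grows \emph{polynomially} in $m$ for any fixed $a\in(1/2,1)$ and cannot be absorbed by the constant $(4k)^k$. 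The same objection rules out patterns with $\beta>1$: several star centers must be placed simultaneously and $c$ is only one vertex. The paper avoids this by making $V_B$ (the hubs) a set of size $\Theta(m/\lambda)$ rather than a single vertex, with a complete bipartite graph to a leaf set $V_A$ of size $\Theta(\lambda)$, and with $V_B\cup V_C$ a clique on $\Theta(\sqrt m)$ vertices. Your fallback $G(n,p)$ with $n=\lceil 2m/\lambda\rceil$ and $p=\lambda/(n-1)$ also fails once $\lambda>\sqrt{2m}$, because then $n-1<\lambda$ and $p>1$.

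You also did not anticipate the main technical lemma needed to make \emph{any} such construction work uniformly over all $k$-vertex $P$: the paper proves, via complementary slackness between the fractional edge-cover LP and a vertex-packing LP and a rounding to a \emph{half-integral} optimal vertex-packing solution, that $V_P$ admits a partition $U_A,U_B,U_C$ with $|U_A|=k_{\mathit{star}}-\beta-s$, $|U_B|=\beta-s$, $|U_C|=k_{\mathit{cycle}}+2s$ for some $s\in[0,\beta]$, where $U_A$ is an independent set of $P$ with no edges to $U_C$. This partition is what makes the mapping $U_A\to V_A$, $U_B\to V_B$, $U_C\to V_C$ valid and produces the bound with the clean $(4k)^{-k}$ loss. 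The inequalities \eqref{eqn:polymat:upper2-help1}--\eqref{eqn:polymat:upper2-help2} you cite are upper bounds on the polymatroid set function and do not by themselves yield this combinatorial partition of $V_P$.
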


It is worth noting that we aim to construct a {\em single} $G^*$ that is a ``bad'' input for {\em all} possible patterns (with $k$ vertices) simultaneously. Our proof will deal with the case $k \leq \lambda < \sqrt{m}$ and the case $\sqrt{m}\leq \lambda \leq m/(4k)$ separately.

\vgap

\extraspacing {\bf Remark.} The main challenge in the proof is to establish the factor $\fr{1}{(4k)^k}$. There exist lower bound arguments \cite{jrr21,s23} that can be used in our context to build a hard instance for each {\em individual} pattern $P$. A naive way to form a {\em single} hard input $G^*$ is to combine the hard instances of all possible patterns having $k$ vertices. But this would introduce a gigantic factor roughly $1/2^{\Omega(k^2)}$.

\subsection{Case $k \leq \lambda < \sqrt{m}$}

In this scenario, $G^*$ only needs to be the graph that consists of $\lfloor m / \binom{\lambda}{2}\rfloor$ independent $\lambda$-cliques.\footnote{When $\lambda < \sqrt{m}$, $\binom{\lambda}{2} = \frac{\lambda^2-\lambda}{2} < m$. Hence, $G^*$ contains at least one clique.} An occurrence of $P$ can be formed by mapping $V_P$ to $k$ arbitrary distinct vertices in one $\lambda$-clique. The number of occurrences of $P$ in $G^*$ is at least
\myeqn{
    && \big \lfloor \frac{m}{{\lambda \choose 2}} \big \rfloor \cdot {\lambda \choose k} \nn \\
    \explain{applying ${\lambda \choose k} \ge (\lambda/k)^k$}
    &\geq&
    \Big(\frac{2m}{\lambda^2-\lambda}-1 \Big) \cdot (\lambda / k)^k \nn\\
    \explain{as $m > \lambda^2-\lambda$} \nn
    &>& \frac{m}{\lambda^2-\lambda} \cdot (\lambda /k)^k \nn\\
    \explain{as $k \ge 2$}
    &> &\frac{1}{k^k}\cdot m\cdot \lambda^{k-2}
    = \frac{1}{k^k}\cdot \polym_\undir(m,\lambda,P). \nn
}

\subsection{Case $\sqrt{m}\leq \lambda \leq m/(4k)$}

We construct $G^*$ as follows. First, create three disjoint sets of vertices: $V_A, V_B$, and $V_C$, whose sizes are $\lc \lambda/4 \rc$, $\lceil m/(4\lambda) \rceil$, and $\lceil \sqrt{m}/4 \rceil$. Because $\lambda \leq \frac{m}{4k}$ and $m \geq 16k^2$, each of $V_A$, $V_B$, and $V_C$ has a size at least $k$. Then, decide the edges of $G^*$ in the way  below:
\myitems{
    \item Add an edge between each pair of vertices in $V_B \cup V_C$ (thereby producing a clique of $\lceil m/(4\lambda) \rceil + \lceil \sqrt{m}/4 \rceil$ vertices).
    \item Add an edge $\set{u, v}$ for each vertex pair $(u, v) \in V_A \times V_B$.

}

Next, we show that $G^*$ has at most $m$ edges through a careful calculation. First, the number of edges between $V_A$ and $V_B$ is
\myeqn{
    &&\lceil \lambda/4 \rceil \cdot \lceil m/ (4\lambda)\rceil \nn\\
    &\leq& (\lambda/4+1)\cdot(m/ (4\lambda)+1)\nn\\
    &=& m/16 + \lambda/4 + m/(4\lambda) + 1\nn\\
    \explain{as $\sqrt{m} \leq \lambda \leq m$}&\leq& 5m/16 + \sqrt{m}/4 + 1 \nn\\
    \explain{as $16 \leq m$}&\leq& 7m/16. \nn
}
The number of edges between $V_B$ and $V_C$ is:
\myeqn{
    &&\lceil m/(4\lambda)\rceil \cdot \lceil  \sqrt{m}/4 \rceil \nn \\
    &\leq& (m/(4\lambda)+1) \cdot (\sqrt{m}/4+1)
    \nn\\
    &=& m^{1.5}/(16\lambda) + m/(4\lambda) + \sqrt{m}/4 + 1 \nn\\
    \explain{as $\sqrt{m} \leq \lambda$} & \leq& m/16 + \sqrt{m}/2 + 1\nn\\
    \explain{as $16 \leq m$}&\leq& m/4.\nn
}
The number of edges between vertices in $V_B$ is:
\myeqn{
    && \lceil m/(4\lambda)\rceil \cdot (\lceil m/(4\lambda)\rceil -1) / 2\nn\\
    &\leq& (m/(4\lambda) + 1)\cdot(m/(4\lambda)) / 2\nn\\
    &=& m^2/(32\lambda^2) + m/(8\lambda)\nn \\
    \explain{as $\sqrt{m} \leq \lambda$}&\leq& m/32 + \sqrt{m}/8 \leq  5m/32 \nn
}
The number of edges between vertices in $V_C$ is:
\myeqn{
    && \lceil \sqrt{m}/4\rceil \cdot (\lceil \sqrt{m}/4 -1) / 2\nn\\
    &\leq& (\sqrt{m}/4 + 1)\cdot(\sqrt{m}/4) / 2\nn\\
    &=& m/32 + \sqrt{m}/8 \leq 5m/32. \nn
}
Thus, the total number of edges in $G^*$ is at most $7m/16+m/4+5m/32+5m/32 = m$.

\vgap

The maximum vertex degree in $G^*$ is decided by the vertices in $V_B$ and equals
\myeqn{
    &&\lceil \lambda/4 \rceil + \lceil m/(4\lambda) \rceil -1 + \lceil \sqrt{m}/4 \rceil \nn \\
    &\leq& \lambda/4  +  m/(4\lambda)  +  \sqrt{m}/4 + 2 \nn \\
    \explain{as $\lambda\geq \sqrt{m}$} &\leq& 3\lambda/4 + 2 \leq \lambda. \nn
}
Therefore, $G^*$ satisfies the requirement in the first bullet of Theorem~\ref{thm:tightness-undirected}.

\vgap

The rest of the proof will focus on the theorem's second bullet. Let $P$ be an arbitrary pattern with $k$ vertices, and take a fractional edge-cover decomposition of $P$: $(\mycycle_1,$ $...,$ $\mycycle_\alpha,$ $\mystar_1,$ $..., \mystar_\beta)$. Define $k_\mit{star}$ and $k_\mit{cycle}$ in the same way as in \eqref{eqn:k_star} and \eqref{eqn:k_cycle}. We claim:

\begin{lemma} \label{lmm:app:divide_vertices}
    There is an integer $s \in [0, \beta]$ under which we can divide the vertices of $P$ into disjoint sets $U_A$, $U_B$ and $U_C$ such that
    \myitems{
        \item $|U_A| = k_\mit{star} - \beta - s$, $|U_B| = \beta - s$, $|U_C| = k_\mit{cycle} + 2s$;
        \item $P$ has no edges between $U_A$ and $U_C$;
        \item $P$ has no edges between any two vertices in $U_A$.
    }
\end{lemma}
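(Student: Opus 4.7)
The starting point is the reformulation of conditions (ii) and (iii): together they say exactly that $U_A$ is an independent set in $P$ whose entire $P$-neighborhood lies in $U_B$. Hence the lemma reduces to finding an independent set $U_A$, a ``covering'' set $U_B\supseteq N_P(U_A)$ disjoint from $U_A$, and the leftover $U_C=V_P\setminus(U_A\cup U_B)$, with the prescribed sizes for some $s\in[0,\beta]$. Two identities follow immediately from the size formulas: $|U_A|-|U_B|=k_{\mit{star}}-2\beta=\sum_{j=1}^{\beta}(k(\mystar_j)-2)$, which is independent of $s$, and $|U_A|+|U_B|=k_{\mit{star}}-2s$, so the ``budget'' of $U_A\cup U_B$ drops by $2$ for each unit increase in $s$, while $U_C$ absorbs exactly those $2$ extra vertices from $V_P\setminus V_\mit{cycle}$.

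I would construct the partition star-by-star. All cycle vertices go into $U_C$ by default. For each star $\mystar_j$ I try one of two actions: \emph{retain} it (send the center $c_j$ into $U_B$ and all petals into $U_A$), or \emph{collapse} it (send its entire vertex set into $U_C$). Plugging these into the identities above, a collapsed star contributes $-(k(\mystar_j)-1)$ to $|U_A|$, $-1$ to $|U_B|$, and $+k(\mystar_j)$ to $|U_C|$ relative to the all-retain baseline, and consistency with the prescribed sizes forces $\sum_\text{collapsed}(k(\mystar_j)-1)=s$ and $\sum_\text{collapsed}k(\mystar_j)=2s$; subtracting gives that every collapsed star has $k(\mystar_j)=2$, i.e., is a single-edge star, with $s$ equal to the number of collapsed ones. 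I would then argue that a valid collapse-or-retain assignment exists by initialising with ``retain everything'' and, whenever a $P$-edge currently violates (ii) or (iii), collapsing a single-edge star whose removal from $U_A\cup U_B$ repairs the violation, iterating until no violations remain.

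The main obstacle I expect is proving that the greedy procedure never stalls — equivalently, that every problematic $P$-edge only involves petals/centers of single-edge stars, never a petal of a multi-edge star or a cycle vertex. This is the only place where the structure of the fractional edge-cover decomposition really matters. Here I would invoke the equality $\sum_{i}\rho^{*}(\mycycle_i)+\sum_{j}\rho^{*}(\mystar_j)=\rho^{*}(P)$ together with an exchange argument: if a $P$-edge were incident to a petal of a multi-edge star $\mystar_j$ and to a vertex in another component, one could detach that petal from $\mystar_j$ and re-group the offending edge into a new single-edge star without raising the total $\rho^{*}$-sum, contradicting extremality of the chosen decomposition. Consequently, by selecting the decomposition to be extremal in a suitable sense (e.g., lexicographically maximising the number of single-edge stars among all decompositions attaining $\rho^{*}(P)$), every remaining violation involves only single-edge stars and the greedy collapses described above resolve them all, producing the required partition with the stated $s\in[0,\beta]$.
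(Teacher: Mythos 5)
Your approach is genuinely different from the paper's, and it has a gap that I do not think can be closed along the lines you sketch.

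The paper does not construct the partition combinatorially. It considers the vertex-pack LP (the LP dual of the fractional edge-cover LP on $P$), shows via complementary slackness against the weight vector $w^*$ induced by the decomposition that there is an optimal solution $\nu'$ with prescribed values on cycle vertices and multi-edge stars, rounds $\nu'$ to a half-integral optimum $\nu^*$, and then simply sets $U_A=\{\nu^*=1\}$, $U_B=\{\nu^*=0\}$, $U_C=\{\nu^*=1/2\}$. Feasibility of $\nu^*$ (i.e., $\nu^*(X)+\nu^*(Y)\le 1$ on every edge) delivers conditions (ii) and (iii) for free, and the sizes are read off the structure. Your proposal instead tries to build the partition star-by-star and repair it greedily. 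The first part of your plan is sound: the arithmetic forcing that only single-edge stars may be ``collapsed'' is correct, and the exchange argument does rule out an edge between a petal of a multi-edge star and a cycle vertex, and between two petals of multi-edge stars (in both cases the swap strictly decreases the $\rho^*$-sum, contradicting minimality).

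The gap is the third adjacency type: a petal $Y$ of a multi-edge star $\mystar_j$ (center $X$) can be adjacent in $P$ to a vertex $Z$ of a single-edge star $\{Z,W\}$, and your exchange argument does not forbid this. Re-grouping $\{Y,Z\}$ as a new star runs into the vertex-disjointness requirement: $Z$ must be removed from $\{Z,W\}$, and the only consistent fix (form $\{Y,Z,W\}$ with center $Z$, shrink $\mystar_j$ by one vertex) leaves the $\rho^*$-sum unchanged and does not increase the number of single-edge stars, so your lexicographic extremality gives no contradiction. A concrete example: $V_P=\{X,Y_1,Y_2,Z,W\}$ with edges $\{X,Y_1\},\{X,Y_2\},\{Y_1,Z\},\{Z,W\}$; the decomposition $\mystar_1=\{X,Y_1,Y_2\}$, $\mystar_2=\{Z,W\}$ attains $\rho^*(P)=3$, and $Y_1$ (a petal of a multi-edge star) is adjacent to $Z$. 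Here the lemma holds with $s=0$ by putting $Z\in U_B$, $W\in U_A$. But if your greedy initially assigns $Z\in U_A$, $W\in U_B$, the edge $\{Y_1,Z\}$ violates (iii); collapsing $\{Z,W\}$ merely turns it into a (ii) violation $Y_1\in U_A$, $Z\in U_C$, and $Y_1$'s star cannot be collapsed because it is multi-edge, so the procedure stalls. What is missing is a rule for orienting each retained single-edge star --- precisely the information the paper extracts from the dual LP's complementary slackness ($\nu^*(Z)=0$ forces $Z$ to be the ``center''). Your greedy has only ``collapse'', never ``swap roles within a single-edge star'', and I do not see a combinatorial criterion in your write-up that supplies the correct orientation. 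To make this route work you would need to add a role-swapping step and prove it terminates correctly, at which point you are essentially re-deriving the LP dual certificate by hand; the paper's approach is both simpler and cleaner on this point.
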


The proof of the lemma is non-trivial and deferred to the end of this section. An occurrence of $P$ in $G^*$ can be formed through the three steps below:
\myenums{
    \item Map $U_A$ to $|U_A|$ distinct vertices in $V_A$.
    \item Map $U_B$ to $|U_B|$ distinct vertices in $V_B$.
    \item Map $U_C$ to $|U_C|$ distinct vertices in $V_C$.
}
Each step is carried out independently from the other steps. Hence, the number of occurrences of $P$ in $G^*$ is at least
\myeqn{
    &&\binom{\lambda/4}{k_{\mit{star}}-\beta-s} \cdot \binom{\fr{m}{4\lambda}}{\beta-s}\cdot \binom{\sqrt{m}/4}{k_{\mit{cycle}}+2s}\nn\\
    &\geq&
    \big(
        \fr{
            \lambda/4
        }{
            k_{\mit{star}}-\beta-s
        }
    \big)^{k_{\mit{star}}-\beta-s}
    \cdot
    \big(
        \fr{
            \fr{m}{4\lambda}
        }{
            \beta-s
        }
    \big)^{\beta-s}
    \cdot
    \big(
        \fr{
            \sqrt{m}/4
        }{
            k_\mit{cycle}+2s
        }
    \big)^{k_\mit{cycle}+2s}
    \nn\\
    &\geq& \frac{\lambda^{k_{\mit{star}}-\beta-s}}{4^{k_{\mit{star}}-\beta-s} \cdot k^{k_{\mit{star}}-\beta-s}} \cdot \frac{m^{\beta-s}}{4^{\beta-s} \cdot \lambda^{\beta-s} \cdot k^{\beta-s}} \cdot \frac{m^{k_{\mit{cycle}}/2+s}}{4^{k_{\mit{cycle}}+2s} \cdot k^{k_{\mit{cycle}}+2s}} \nn\\
    &=& \frac{1}{(4k)^k}\cdot m^{\frac{k_\mit{cycle}}{2}+\beta} \cdot \lambda^{k_\mit{star} -2 \beta}  =  \frac{1}{(4k)^k}\cdot \polym_\undir(m, \lambda, P).\nn
}

\extraspacing {\bf Proof of Lemma~\ref{lmm:app:divide_vertices}.} With each vertex $X$ in the pattern graph $P = (V_P, E_P)$, we associate a variable $\nu(X) \ge 0$ (equivalently, $\nu$ is a function from $V_P$ to $\real_{\ge 0}$). Consider the following LP defined on these variables.

\mytab{
    \> {\bf vertex-pack LP} \hspace{10mm} max $\sum_{X \in V_P} \nu(X)$  subject to \\[2mm]
    \>\>\>\> $\sum\limits_{X: X \in F} \nu(X) \leq 1$ \hspace{15mm} $\forall F \in E_P$ \\
    \>\>\>\> $ \nu(X) \geq 0$ \hspace{25.5mm} $\forall X \in V_P$
}

\noindent A feasible solution $\nu$ is said to be {\em half-integral} if $\nu(X) \in \set{0, \fr{1}{2}, 1}$ for each $X \in V_P$.

\vgap

Consider any fractional edge-cover decomposition of $P$: $(\mycycle_1,..., \mycycle_\alpha, \mystar_1, ..., \mystar_\beta)$. Recall that each star $\mystar_j$ ($j \in [\beta]$) contains a vertex designated as the center. We refer to every other vertex in the star as a {\em petal}.

\begin{lemma} \label{lmm:vertex-packing}
    For the vertex-pack LP, there exists a half-integral optimal solution $\set{\nu^*(X)\mid X\in V_P}$ satisfying all the conditions below.
    \myitems{
        \item $\nu^*(X) = 0.5$ for every vertex $X$ in $\mycycle_1, ... \mycycle_\alpha$.
        \item For every $j \in [\beta]$, the function $\nu^*$ has the properties below.
        \myitems{
            \item If $\mystar_j$ has at least two edges, then $\nu^*(X) = 0$ holds for the star's center $X$ and $\nu^*(Y) = 1$ holds for every petal $Y$.

            \item If $\mystar_j$ has only one edge $\set{X,Y}$ --- in which case we call $\mystar_j$ a ``one-edge star'' --- then $\nu^*(X) + \nu^*(Y) = 1$.
        }
    }
\end{lemma}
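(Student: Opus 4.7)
The plan is to leverage LP duality between the vertex-pack LP and its dual (the fractional edge cover LP $\min \sum_F w(F)$ subject to $\sum_{F : X \in F} w(F) \geq 1$ for each $X \in V_P$ and $w \geq 0$), together with the classical half-integrality of the vertex packing polytope. First, guided by the given decomposition $(\mycycle_1, \ldots, \mycycle_\alpha, \mystar_1, \ldots, \mystar_\beta)$, I would exhibit an explicit feasible dual solution $w$ by setting $w(F) = 1/2$ for every edge $F$ belonging to some $\mycycle_i$, $w(F) = 1$ for every edge $F$ belonging to some $\mystar_j$, and $w(F) = 0$ for every other edge of $P$. Feasibility is routine: a cycle vertex is incident to exactly two cycle edges (contributing $1/2 + 1/2 = 1$); a center of $\mystar_j$ is incident to its $k(\mystar_j) - 1 \geq 1$ star edges (contributing $\geq 1$); each petal is incident to exactly one star edge (contributing $1$). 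The value of $w$ equals $\sum_{i=1}^{\alpha} \rho^*(\mycycle_i) + \sum_{j=1}^{\beta} \rho^*(\mystar_j) = \rho^*(P)$ by the defining property of a fractional edge-cover decomposition, so $w$ is dual optimal and the primal optimum equals $\rho^*(P)$.

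Next, I would invoke the classical half-integrality theorem for the vertex packing LP (the standard fact that every basic feasible solution of $\max \sum_v x_v$ subject to $x_u + x_v \le 1$ for each edge and $x \ge 0$ takes values in $\{0, 1/2, 1\}$). This yields a half-integral optimal primal solution $\nu^*$. Since $\nu^*$ and $w$ are both optimal, they satisfy complementary slackness: every edge $F$ with $w(F) > 0$ satisfies $\sum_{X \in F} \nu^*(X) = 1$, and every vertex $X$ with $\nu^*(X) > 0$ satisfies $\sum_{F : X \in F} w(F) = 1$.

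From complementary slackness I would then read off the structural properties. Around any odd cycle $\mycycle_i$, all edges are tight, so $\nu^*$ alternates between some value $t$ and $1 - t$; traversing the odd cycle forces $t = 1 - t$, i.e., $\nu^*(X) = 1/2$ for every vertex $X$ of $\mycycle_i$. For a multi-edge star $\mystar_j$, the center $X$ has a strictly slack dual constraint ($\sum_{F : X \in F} w(F) \geq 2$), forcing $\nu^*(X) = 0$; each tight star edge then forces the corresponding petal to value $1$. For a one-edge star $\{X, Y\}$, the unique star edge is tight, giving $\nu^*(X) + \nu^*(Y) = 1$. Half-integrality confines all values to $\{0, 1/2, 1\}$, which is consistent with every equality above.

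The main technical point to be careful about concerns the cross-component edges of $P$ (edges whose endpoints lie in two different decomposition parts): these do not appear in the support of $w$, so complementary slackness does not directly control $\nu^*$ on their endpoints. Nonetheless they impose no obstacle, because $\nu^*$ is by construction a feasible solution of the full vertex-pack LP and thus satisfies every edge constraint, cross-component ones included. Consequently, no case-by-case analysis of how distinct decomposition parts interact is required, and the existence of the half-integral optimum $\nu^*$ supplied by the half-integrality theorem is already enough to establish the lemma.
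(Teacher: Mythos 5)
Your proof is correct, but it takes a genuinely different route from the paper's. You invoke the classical half-integrality theorem for the fractional vertex packing polytope (Nemhauser--Trotter / Balinski: all vertices of $\set{\nu \ge 0 : \nu(X) + \nu(Y) \le 1, \forall \set{X,Y} \in E_P}$ are half-integral, so there is a half-integral optimum), and then apply complementary slackness between that half-integral optimum and the explicit dual $w$ to force the stated structure --- the odd-cycle alternation argument, the slackness at a multi-edge star's center, and tightness on the unique edge of a one-edge star. This works because complementary slackness holds between \emph{any} pair of optimal primal and dual solutions, so you may apply it directly to the half-integral optimum. The paper instead avoids the external half-integrality theorem: it takes an \emph{arbitrary} optimal primal $\nu'$ guaranteed by complementary slackness, derives the same structural properties (its {\bf P3}, {\bf P4-1}, {\bf P4-2}), and then explicitly rounds the fractional values of $\nu'$ (sending $(0,\tfrac12)$ to $0$ and $(\tfrac12,1)$ to $1$) to manufacture a half-integral $\nu^*$, after which it re-verifies that the rounded solution remains feasible, remains optimal, and retains the required structure. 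Your route is shorter and conceptually cleaner, at the cost of importing a nontrivial classical theorem; the paper's route is longer but fully self-contained, with the extra work concentrated in checking that the ad hoc rounding neither violates an edge constraint nor changes the objective value (the latter hinging on the fact that fractional mass within a one-edge star can only be shifted from one endpoint to the other). Both establish the lemma.
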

\begin{proof}
    Let us first define the dual of vertex-pack LP. Associate each edge $F \in E_P$ with a variable $w(F) \ge 0$ (equivalently, $w$ is a function from $E_P$ to $\real_{\ge 0}$). Then, the dual is:

    \mytab{
        \> {\bf edge-cover LP} \hspace{10mm} min $\sum_{F \in E_P} w(F)$  subject to \\[2mm]
        \>\>\>\> $\sum\limits_{F \in E_P: X \in F} w(F) \geq 1$ \hspace{10mm} $\forall X \in V_P$ \\
        \>\>\>\> $ w(F) \geq 0$ \hspace{26mm} $\forall F \in E_P$
    }

    The given fractional edge-cover decomposition implies an optimal solution $\set{w^*(F)\mid F\in E_p}$ to the edge-cover LP satisfying:
    \myitems{
        \item $w^*(F) = 0.5$ for every edge $F$ in $\mycycle_1, ... \mycycle_\alpha$;
        \item $w^*(F) = 1$ for every edge $F$ in $\mystar_1, ..., \mystar_\beta$;
        \item $w^*(F) = 0$ for every other edge $F \in E_p$.
    }
    By the complementary slackness theorem, the function $w^*$ implies an optimal solution $\set{\nu'(X)\mid X\in V_P}$ to the vertex-pack LP with the properties below.
    \myitems{
        \item {{\bf P1}:} For every edge $\set{X,Y}\in E_p$, if $w^*(\set{X,Y}) > 0$, then $\nu'(X) + \nu'(Y)= 1$.
        \item {{\bf P2}:} For every vertex $X$ satisfying $\sum_{X \in F} w^*(F)> 1$, $\nu'(X) = 0$.
    }

    From the above, we can derive additional properties of the solution $\set{\nu'(X)\mid X\in V_P}$.
    \myitems{
        \item {{\bf P3}:} $\nu'(X) = 0.5$ for every vertex $X$ in $\mycycle_1, ..., \mycycle_\alpha$. To see why, consider any $\mycycle_i$ ($i \in [\alpha]$). Let $X_1, X_2, ..., X_\ell$ (for some $\ell \ge 3$) be the vertices of $\mycycle_i$ in clockwise order. Property {\bf P1} yields $\ell$ equations: $\nu'(X_j) + \nu'(X_{j+1})= 1$ for $j \in [\ell - 1]$ and $\nu'(X_\ell) + \nu'(X_1)= 1$. Solving the system of these equations gives $\nu'(X_j) = 0.5$ for each $j \in [\ell]$.

        \item For every $j \in [\beta]$, we have:
        \myitems{
            \item{{\bf P4-1}:} If $\mystar_j$ has at least two edges, then $\nu'(X) = 0$ holds for the star's center $X$ and $\nu'(Y) = 1$ for every petal $Y$. To see why, notice that $\sum_{F\in E_P:X\in F} w^*(F)$ is precisely the number of edges in $\mystar_j$ and, hence, $\sum_{F\in E_P:X\in F} w^*(F) > 1$. Thus, {\bf P2} asserts $\nu'(X) = 0$. It then follows from {\bf P1} that $\nu'(Y) = 1-\nu'(X) = 1$ for every petal $Y$.

            \item{{\bf P4-2}:} If $\mystar_j$ has only one edge $\set{X,Y}$, then $\nu'(X) + \nu'(Y) = 1$. This directly follows from ${\bf P1}$ and the fact that $w^*(\set{X,Y}) = 1$.
        }
    }

    Now, we show how to construct $\nu^*(.)$. If $\nu'(.)$ is already half-integral, then we set $\nu^*(X) = \nu'(X)$ for all $X \in V_P$ and finish. Otherwise, set
    \myitems{
        \item $\nu^*(X) = \nu'(X)$ for every $X\in V_P$ with $\nu'(X) \in \set{0,1,\frac{1}{2}}$;
        \item $\nu^*(X) = 0$ for every $X\in V_P$ with $0 < \nu'(X) < 1/2$;
        \item $\nu^*(X) = 1$ for every $X\in V_P$ with $1/2 < \nu'(X) < 1$.
    }
    \vgap

    We need to verify that $\nu^*(.)$ meets the conditions listed in the statement of Lemma~\ref{lmm:vertex-packing}. Clearly, $\nu^*(X) = \nu'(X) = 1/2$ for every vertex $X$ in $\mycycle_1, ... \mycycle_\alpha$ (property {\bf P3}). For each $j \in [\beta]$, if $\mystar_j$ has at least two edges, then $\nu^*(X) =\nu'(X) = 0$ holds for the star's center $X$ and $\nu^*(Y) = \nu'(Y) = 1$ holds for every petal $Y$ (property {\bf P4-1}). If $\mystar_j$ has only one edge $\set{X,Y}$ --- that is, $\mystar_j$ is a one-edge star --- property {\bf P4-2} tells us that $\nu'(X) + \nu'(Y) = 1$. If $\nu'(X) = \nu'(Y) = 1/2$, then $\nu^*(X) + \nu^*(Y) = 1/2+1/2 = 1$. Otherwise, one vertex of $\set{X, Y}$ --- say $X$ --- satisfies $0 < \nu'(X) < 1/2$, and the other vertex satisfies $1/2 < \nu'(Y) < 1$. Our construction ensures that $\nu^*(X) + \nu^*(Y) = 0+1 = 1$.

    \vgap

    It remains to check that $\set{\nu^*(X) \mid X \in V_P}$ is indeed an optimal solution to the vertex-pack LP. First, we prove the the solution's feasibility. For this purpose, given any edge $F = \set{X, Y} \in E_P$, we must show $\nu^*(X) + \nu^*(Y) \le 1$. Assume there exists an edge $F = \set{X, Y} \in E_P$ such that $\nu^*(X) + \nu^*(Y) > 1$. Because $\nu^*(.)$ is half-integral, in this situation at least one vertex in $\set{X, Y}$ --- say $X$ --- must receive value $\nu^*(X) = 1$. We proceed differently depending on the value of $\nu^*(Y)$.
    \myitems{
        \item $\nu^*(Y) = 1/2$. By how $\nu^*$ is constructed from $\nu'$, in this case we must have $\nu'(X) > 1/2$ and $\nu'(Y) = 1/2$. But then $\nu'(X) + \nu'(Y) > 1$, violating the fact that $\nu'(.)$ is a feasible solution to the vertex-pack LP.

        \item $\nu^*(Y) = 1$. By how $\nu^*$ is constructed, we must have $\nu'(X) > 1/2$ and $\nu'(Y) > 1/2$. Again, $\nu'(X) + \nu'(Y) > 1$, violating the feasibility of $\nu'(.)$.
    }

    Finally, we will prove that $\set{\nu^*(X) \mid X \in V_P}$ achieves the optimal value for the vertex-pack LP's objective function. This is true because
    \myeqn{
        \sum_{X\in V_P} \nu^*(X)
        &=&
        \Big(
        \sum_{X: \text{$X$ not in any one-edge star}} \nu^*(X) \Big)
        +
        \sum_{X: \text{$X$ in a one-edge star}} \nu^*(X) \nn \\
        &=&
        \Big(\sum_{X: \text{$X$ not in any one-edge star}} \nu'(X) \Big)
        +
        \text{number of one-edge stars} \nn \\
        \explain{property {\bf P4-2}}
        &=&
        \Big(\sum_{X: \text{$X$ not in any one-edge star}} \nu'(X) \Big)
        +
        \sum_{X: \text{$X$ in a one-edge star}} \nu'(X) \nn \\
        &=&
        \sum_{X\in V_P} \nu'(X) \nn
    }
    and $\nu'(.)$ is an optimal solution to the vertex-pack LP.
\end{proof}

We are now ready to prove Lemma~\ref{lmm:app:divide_vertices}. Let $\set{\nu^*(X)\mid X\in V_P}$ be an optimal solution to the vertex-pack LP problem promised by Lemma~\ref{lmm:vertex-packing}. Divide $V_P$ into three subsets
\myeqn{
    U_A &=& \set{X \in V_P \mid \nu^*(X) = 1} \nn \\
    U_B &=& \set{X \in V_P \mid \nu^*(X) = 0} \nn \\
    U_C &=& \set{X \in V_P \mid \nu^*(X) = 1/2}. \nn
}

By the feasibility of $\nu^*(.)$ to the vertex-pack LP, no vertex in $U_A$ can be adjacent to any vertex in $U_C$, and no two vertices in $U_A$ can be adjacent to each other. It remains to verify that the sizes of $U_A$, $U_B$, and $U_C$ meet the requirement in the first bullet of Lemma~\ref{lmm:app:divide_vertices}. To facilitate our subsequent argument, given a one-edge star $\mystar_j$ (for some $j \in [\beta]$), we call it a {\em half-half one-edge star} if $\nu^*(X) = \nu^*(Y) = 1/2$, where $\set{X, Y}$ is the (only) edge in $\mystar_j$. Define
\myeqn{
    s &=& \text{number of half-half one-edge stars in $\set{\mystar_1, ..., \mystar_\beta}$}. \nn
}
On the other hand, if a one-edge star $\mystar_j$ is not half-half, we call it a {\em 0-1 one-edge star}.

\vgap

By Lemma~\ref{lmm:vertex-packing}, $U_C$ includes all the vertices in $\mycycle_1, ..., \mycycle_\alpha$ and all the vertices in the half-half one-edge stars. Hence, $|U_C| = k_\mit{cycle}+ 2s$. On the other hand, $U_B$ includes (i) the center of every star that has at least two edges and (ii) exactly one vertex from every 0-1 one-edge star. In other words, every star contributes 1 to the size of $U_B$ except for the half-half one-edge stars, implying $|U_B| = \beta-s$. Finally, $|U_A| = k-|U_B|-|U_C| = k_\mit{star}-\beta-s$. This completes the proof of Lemma~\ref{lmm:app:divide_vertices}.

\section{Analysis of the Sampling Algorithm in Section~\ref{sec:undirected-subgraph}} \label{app:undirected-sampling}

The purpose of preprocessing is essentially reorganizing the data graph $G = (V, E)$ in the adjacency-list format, which can be easily done in $O(|E|)$ expected time. The following discussion focuses on the cost of extracting a sample. We will first consider $\lambda \le \sqrt{|E|}$ before attending to $\lambda > \sqrt{|E|}$.

\extraspacing {\bf Case $\bm{\lambda \le \sqrt{|E|}}$.} Let $G_\sub = (V_\sub, E_\sub)$ be an occurrence of $P$ in $G$. There exist a constant number $c$ of isomorphism bijections $g: V_P \rightarrow V_{\sub}$ between $P$ and $G_\sub$ (the number $c$ is the number of automorphisms of $P$). Fix any such bijection $g$. Each repeat of our algorithm builds a map $f: V_p \rightarrow V$. It is rudimentary to verify that $\Pr[f = g] = \fr{1}{2|E|} \cdot \fr{1}{\lambda^{k-2}}$. Hence, each occurrence is returned with probability $\fr{c}{2|E|} \cdot \fr{1}{\lambda^{k-2}}$. It is now straightforward to prove that our algorithm has expected sample time: $O(|E| \cdot \lambda^{k-2} / \max\{1, \Out\})$.

\extraspacing {\bf Case $\bm{\lambda > \sqrt{|E|}}$.} If $\Out = 0$, the two-thread approach allows our algorithm to terminate in $O(\polym_\undir(|E|, \lambda, P)) = O(m^{\frac{k_\mit{cycle}}{2}+\beta} \cdot \lambda^{k_\mit{star} - 2\beta})$ time. Next, we consider only $\Out \ge 1$.

\vgap

For each $i \in [\alpha]$, denote by $k(\mycycle_i)$ the number of vertices in $\mycycle_i$; for each $j \in [\beta]$, denote by $k(\mystar_j)$ the number of vertices in $\mystar_j$. The fractional edge cover number of $\mycycle_i$ ($i \in [\alpha]$) is $\rho^*(\mycycle_i) = k(\mycycle_i)/2$. This means $\sum_{i=1}^k \rho^*(\mycycle_i) = k(\mycycle_i)/2 = k_\mit{cycle}/2$.

\vgap

Fix an arbitrary occurrence $G_\sub = (V_\sub, E_\sub)$  of $P$ in $G$. Let $g$ be any of the $c$ isomorphism bijections  between $P$ and $G_\sub$. Each repeat of our algorithm constructs a map $f: V_P \rightarrow V$ through isomorphism bijections $f_{\mycycle_1}, ..., f_{\mycycle_\alpha}, f_{\mystar_1}, ..., f_{\mystar_\beta}$. The event $g = f$ happens if and only if all of the following events occur.
\myitems{
    \item Event {\bf E$_{\mycycle_i}$} ($i \in [\alpha]$): $g(A) = f_{\mycycle_i}(A)$ for each vertex $A$ of $\mycycle_i$;
    \item Event {\bf E$_{\mystar_j}$} ($j \in [\beta]$): $g(A) = f_{\mystar_i}(A)$ for each vertex $A$ of $\mystar_j$.
}
If $c_{\mycycle_i}$ ($i \in [\alpha]$) is the number of automorphisms of $\mycycle_i$, we have
\myeqn{
    \Pr[\text{{\bf E$_{\mycycle_i}$}}]
    &=&
    \fr{1}{c_{\mycycle_i} \cdot |\occ(G, \mycycle_i)|}. \label{eqn:app:undirected-sample:help1}
}
Likewise, if $c_{\mystar_j}$ ($j \in [\beta]$) is the number of automorphisms of $\mystar_j$, we have
\myeqn{
    \Pr[\text{{\bf E$_{\mystar_j}$}}]
    &=&
    \fr{1}{c_{\mystar_j} \cdot |\occ(G, \mystar_j)|}. \label{eqn:app:undirected-sample:help2}
}
It follows from \eqref{eqn:app:undirected-sample:help1} and \eqref{eqn:app:undirected-sample:help2} that
\myeqn{
    \Pr[g = f]
    &=&
    \Big( \prod_{i=1}^\alpha \fr{1}{c_{\mycycle_i} \cdot |\occ(G, \mycycle_i)|} \Big) \cdot
    \Big( \prod_{j=1}^\beta \fr{1}{c_{\mystar_j} \cdot |\occ(G, \mystar_j)|} \Big).
    \nn
}
Therefore:
\myeqn{
    \Pr[\text{$G_\sub$ sampled}]
    &=&
    c
    \cdot
    \Big( \prod_{i=1}^\alpha \fr{1}{c_{\mycycle_i} \cdot |\occ(G, \mycycle_i)|} \Big) \cdot
    \Big( \prod_{j=1}^\beta \fr{1}{c_{\mystar_j} \cdot |\occ(G, \mystar_j)|} \Big). \label{eqn:app:undirected-sample:help3}
}
As this probability is identical for all $G_\sub$, we know that each occurrence of $P$ is sampled with the same probability.

\vgap

The expected number of repeats to obtain a sample occurrence of $P$ is
\myeqn{
    O\Big(
        \fr{
                \prod_{i=1}^\alpha  |\occ(G, \mycycle_i)|
            \cdot
                \prod_{j=1}^\beta  |\occ(G, \mystar_j)|
        }{\Out}
    \Big) \nn
}
whereas each repeat runs in time at the order of
\myeqn{
    \Big(\prod_{i=1}^\alpha \fr{|E|^{\rho^*(\mycycle_i)}}{|\occ(G, \mycycle_i)|}\Big)
    \cdot
    \Big(\prod_{j=1}^\beta \fr{|E| \cdot \lambda^{k(\mystar_j) - 2}}{|\occ(G, \mystar_j)|}\Big).
}
We can now conclude that the expected sample time is at the order of
\myeqn{
    \fr{1}{\Out}
    \cdot
    \Big(\prod_{i=1}^\alpha |E|^{\rho^*(\mycycle_i)} \Big)
    \cdot
    \Big(\prod_{j=1}^\beta |E| \cdot \lambda^{k(\mystar_j) - 2} \Big)
    &=&
    \fr{|E|^{k_\mit{cycle}/2} \cdot \lambda^{k_\mit{star} - 2}}{\Out}. \nn
}

\bibliographystyle{plainurl}
\bibliography{ref}

\end{document}